\newif\iffinal
\finaltrue

\iffinal
  \documentclass[a4paper, 11pt, final]{article}
\else
  \documentclass[a4paper, 11pt]{article}
\fi

\usepackage{etex}
\usepackage[utf8]{inputenc}
\usepackage[T1]{fontenc}

\usepackage[english]{babel}

\usepackage[a4paper,top=2cm,bottom=2cm,left=15mm,right=2cm,marginparwidth=3.5cm]{geometry}

\usepackage[charter]{mathdesign}
\usepackage{eucal}
\usepackage{microtype}
\usepackage{nicefrac}

\usepackage{booktabs}

\usepackage[shortlabels, inline]{enumitem}
\setlist{itemsep=-4pt,topsep=0pt}

\usepackage{xcolor}
\colorlet{darkgreen}{green!50!black}
\colorlet{dg}{darkgreen}
\colorlet{medgray}{gray!75}
\colorlet{lightgray}{gray!30}
\definecolor{linkcol}{rgb}{0,0,0.5} 
\definecolor{citecol}{rgb}{0.5,0,0}

\usepackage[bookmarks=false, pdftex, colorlinks=true, allcolors=blue]{hyperref}
\hypersetup{
pdftex,
bookmarksopen=false,
pdftoolbar=false, 
pdfmenubar=true, 
pdfhighlight=/O, 
colorlinks=true, 
pdfpagemode=UseNone, 
pdfpagelayout=SinglePage, 
pdffitwindow=true, 
linkcolor=linkcol, 
citecolor=citecol, 
urlcolor=linkcol 
}

\usepackage[useprefix, maxcitenames=2, mincitenames=1, maxbibnames=9]{biblatex}
\let\citet\textcite
\let\citep\parencite

\usepackage{amsfonts,amsmath,amsthm}
\usepackage{graphicx}
\usepackage[vlined,ruled,linesnumbered]{algorithm2e}
\usepackage[capitalize, nameinlink]{cleveref}

\iffinal
  \usepackage[disable]{todonotes}
\else
  \usepackage{todonotes}
  \usepackage[notref,notcite]{showkeys}
\fi

\newtheorem{theorem}{Theorem}
\newtheorem{definition}{Definition}
\newtheorem{lemma}{Lemma}
\newtheorem{proposition}{Proposition}

\newtheorem{corollary}{Corollary}
\newtheorem{observation}{Observation}
\newtheorem{construction}{Construction}

\newcommand{\emb}{\phi}
\newcommand{\belo}[1]{\ensuremath{<_{#1}}}
\newcommand{\beloeq}[1]{\ensuremath{\leq_{#1}}}
\newcommand{\abov}[1]{\ensuremath{>_{#1}}}
\newcommand{\aboveq}[1]{\ensuremath{\geq_{#1}}}

\usepackage{xspace}

\newcommand{\TC}{\textsc{Tree Containment}\xspace}

\newcommand{\Nh}{N_{\textsc{host}}}
\newcommand{\Tg}{T_{\textsc{guest}}}
\newcommand{\Fg}{T_{\textsc{guest}}} 

\let\emptyset\varnothing
\newcommand{\outA}[2][]{\ensuremath{A^+\ifx\relax#1\relax\else_{#1}\fi(#2)}}
\newcommand{\inA}[2][]{\ensuremath{A^-\ifx\relax#1\relax\else_{#1}\fi(#2)}}
\newcommand{\outdeg}[2][]{\ensuremath{\operatorname{deg}^+\ifx\relax#1\relax\else_{#1}\fi(#2)}}
\newcommand{\indeg}[2][]{\ensuremath{\operatorname{deg}^-\ifx\relax#1\relax\else_{#1}\fi(#2)}}

\newcommand{\replace}[3]{\ensuremath{#1[\mbox{\ensuremath{#2 \to #3}}]}}

\usepackage{xcolor}
\colorlet{darkgreen}{green!50!black}
\colorlet{dg}{darkgreen}
\colorlet{medgray}{gray!75}
\colorlet{lightgray}{gray!30}
\colorlet{llightgray}{lightgray!30}
\colorlet{bagcol}{gray!70}
\colorlet{pastcol}{gray!20}
\definecolor{linkcol}{rgb}{0,0,0.4} 
\definecolor{citecol}{rgb}{0.5,0,0} 

\usepackage{tikz}
\usetikzlibrary{arrows.meta,shapes,positioning,calc,decorations.pathreplacing,decorations.markings,decorations.pathmorphing,patterns}
\pgfdeclarelayer{background2}
\pgfdeclarelayer{background}
\pgfdeclarelayer{foreground}
\pgfsetlayers{background2,background,main,foreground}

\tikzstyle{hidden}=[opacity=0]
\tikzstyle{fade}=[opacity=0.2]
\tikzstyle{nonsol}=[dashed]

\tikzstyle{bold}=[draw, line width=2pt]
\tikzstyle{gamma}=[draw, line width=5pt]
\tikzstyle{optional}=[dashed]
\tikzstyle{path}=[decorate, decoration={snake, amplitude=.6mm}]

\tikzstyle{small}=[inner sep=2pt]
\tikzstyle{tiny}=[inner sep=1.7pt]
\tikzstyle{textnode}=[inner sep=0pt]

\tikzstyle{triangle}=[draw, regular polygon, regular polygon sides=3]
\tikzstyle{hex}=[draw, regular polygon, regular polygon sides=6]
\tikzstyle{vertex}=[circle, draw, fill=white]
\tikzstyle{reti}=[vertex, fill=black]
\tikzstyle{leaf}=[vertex, rectangle]
\tikzstyle{leaf2}=[vertex, triangle]

\tikzstyle{smallvertex}=[vertex, small]
\tikzstyle{smallvertex0}=[vertex, star, small]
\tikzstyle{smallvertex1}=[vertex, small, diamond]
\tikzstyle{smallvertex2}=[vertex, triangle, small]
\tikzstyle{smallvertex3}=[vertex, hex, inner sep=3pt]
\tikzstyle{smallvertex4}=[smallvertex]
\tikzstyle{smallleaf}=[leaf, inner sep=3.3pt]
\tikzstyle{smallleaf2}=[leaf2, inner sep=1.7pt]
\tikzstyle{smalltriangle}=[triangle, inner sep=1.5pt]
\tikzstyle{smallreti}=[reti, small]

\tikzstyle{tinyvertex}=[vertex, tiny]

\tikzstyle{normal}=[smallvertex, fill=black]

\tikzstyle{edge}=[draw,-]
\tikzstyle{matching}=[edge,line width=3pt]
\tikzstyle{arc}=[draw,arrows={-Latex[length=6pt]}]
\tikzstyle{boldarc}=[draw, bold, arrows={-Latex[length=10pt]}]
\tikzstyle{revarc}=[draw, arrows={Latex[length=6pt]-}]
\tikzstyle{boldrevarc}=[draw, bold, arrows={Latex[length=10pt]-}]

\tikzstyle{bag}=[bagcol, line width=15pt]
\tikzstyle{past}=[draw=gray, fill=pastcol]

\newcommand\tikzvert[1]{
  \raisebox{-.5mm}{
  \hspace{-3mm}
  \begin{tikzpicture}
    \node[vertex, inner sep=.9mm, #1] {};
  \end{tikzpicture}
  \hspace{-4mm}
  }
}

\newcommand{\nextnode}[5][vertex]{\node[small#1] (#2) at ($(#3)+(#4)$) {} edge[revarc,#5] (#3);}

\newcommand{\mybox}[2]{
  \noindent\begin{tikzpicture}
    \node[minimum width=\linewidth-5pt, draw, rounded corners, text width=\linewidth-15pt] (a){#2};
    \node[fill=white, xshift=1em, anchor=west] at (a.north west) {#1};
  \end{tikzpicture}%
}

\newcommand{\myboxprobdef}[6][Question]{
  \label{#6}%
  \noindent\mybox{%
    \ifthenelse{\equal{#4}{}}{}{{#4}\ifthenelse{\equal{#5}{}}{}{ ({#5})}}
  }{%
    \begin{description}
      \item [Input:] {#2}
      \item [#1:] {#3}
    \end{description}
  }
}
\newcommand{\probdef}[5]{
\begin{center}
  \begin{minipage}[t]{.91\textwidth}
    \ifthenelse{\equal{#3}{}}{}{{#3}\ifthenelse{\equal{#4}{}}{}{ ({#4})}}
    \begin{description}
      \item [Input:] {#1}
      \item [Question:] {#2}
    \end{description}
  \end{minipage}
\end{center}
}

\title{Tree Containment Parameterized by Scanwidth\thanks{This work is partially funded by the Dutch Organisation for Scientific Research (NWO) grant OCENW.KLEIN.125 and OCENW.M.21.306.}}

\def\email#1{\href{mailto:#1}{#1}}
\usepackage{authblk}
\author[1]{Leo van Iersel} 
\author[2]{Mark Jones} 
\author[3,4]{Mathias Weller} 
\affil[1]{TU Delft, The Netherlands, \email{l.j.j.vanIersel@tudelft.nl}}
\affil[2]{Middlesex University, London, United Kingdom, \email{m.jones@mdx.ac.uk}}
\affil[3]{LIGM, Université Gustave Eiffel, Marne~la~Vallée, France}
\affil[4]{Centre National de Recherche Scientifique, France, \email{mathias.weller@cnrs.fr}}

\begin{document}
\maketitle

\begin{abstract}
  \TC is a central decision problem in mathematical phylogenetics,
  asking whether a given rooted phylogenetic tree is embeddable in (“displayed by”)
  a given rooted phylogenetic network.
  While the problem is NP-complete for general networks,
  many algorithmic advances have relied on structural parameters that capture how “tree-like” a network is.
  In this paper we investigate \TC under the structural parameter scanwidth,
  a directed width measure generalizing popular parameters measuring tree-likeness of phylogenetic networks.
  We first present a parameterized algorithm that solves the problem in $O(4^{k+k\log k}n + nm^2)$~time,
  where $n$ and $m$ are the numbers of nodes and arcs in the network and $k$ is the width of a given tree-extension.
  Complementing this upper bound, we prove a matching lower bound under the Exponential-Time Hypothesis (ETH),
  showing that there is no algorithm for \TC that runs in $2^{o(c \log c)}n^{O(1)}$~time, even on binary inputs,
  where $c$ is the directed cutwidth of the input network, which upper-bounds the scanwidth~$k$.
\end{abstract}

\iffinal\else
  \subsection{Todo list}

  \begin{itemize}
      \item Write Introduction (Leo)
      \item Restructure Preliminaries slightly (see below). Move 'downward-closed subforest' material into TC section?
      \item Prove Lemma 1 (display iff embedding) for networks/forests UPDATE: just for downward-closed forests I guess  //
      \item Adding missing material
      \item Tree Containment: merge different subcases? (two pages for tree nodes seems a lot; overlap in proofs between reticulation/tree node cases)
      \item Tree Containment: add summary lemma for tree node cases
      \item Tree Containment: see about restructuring the lookup table so we don't need to index by B
      \item Tree Containment: add algorithm description/pseudocode
      \item Tree Containment: analysis of running time
      \item Tree Containment: final "yes it's FPT" thm
      \item Write conclusion/discussion
  \end{itemize}

\section{Introduction}

\begin{itemize}
    \item Phylogenetics
    \item Many phylo problems (\emph{including but not limited to} Tree Containment and Network Containment) are hard! Hence, FPT (and other approaches)

    \item Introduce scanwidth
    \item Compare scanwidth favorably to treewidth and pathwidth (works great with directed graphs such as phylonets)
    and cutwidth (?)
    \item ...
    \item Related results
    \item Our Results (FPT for TC wrt scanwidth already implied by treewidth paper; our dependcy is better though)
    \item Structure of the paper
\end{itemize}

\fi

\section{Introduction}

\paragraph*{Phylogenetics.}
Phylogenetic trees are a central tool for representing the evolutionary relationships between species, genes, languages, or other entities.
%
In many applications, however, purely tree-like models are insufficient.
Evolutionary processes such as hybridization, horizontal gene transfer, or recombination
create reticulate patterns of ancestry that cannot be represented faithfully by a single rooted tree.
To model such phenomena, phylogenetics increasingly relies on \emph{phylogenetic networks},
directed acyclic graphs whose branching structure generalizes rooted phylogenetic trees~\cite{huson2010phylogenetic}.
Such networks have become an important mathematical framework in evolutionary biology and related disciplines
because they allow representation of both vertical inheritance and non-tree-like evolutionary events within a unified combinatorial object.

\paragraph*{Tree Containment.}
A fundamental algorithmic task arising in this context is the \textsc{Tree Containment} problem~\cite{KNL+08,vaniersel2010locating}.
Given a rooted phylogenetic network~$N$ and a rooted phylogenetic tree~$T$ on the same leaf-set,
the problem asks whether~$N$ displays~$T$, that is, whether~$T$ can be obtained from a subgraph of~$N$ by contracting degree-two vertices.
Intuitively, the problem asks whether the evolutionary history represented by~$T$ is compatible with the reticulate history represented by~$N$.
\textsc{Tree Containment} lies at the core of numerous tasks in phylogenetics, including
validation of inferred networks,
comparison of evolutionary hypotheses, and
network reconstruction procedures~\cite{huson2010phylogenetic}.
Consequently, it has been studied extensively from both the algorithmic and complexity-theoretic perspectives~\cite{BW19,KNL+08,vaniersel2010locating,DLS19}.

A natural relaxation of \textsc{Tree Containment}, which can deal with soft polytomies resulting from poorly supported branches in the input,
is the \textsc{Soft Tree Containment} problem, introduced by \citet{BW21}.

\paragraph*{Hardness and Tractable Cases.}
The computational hardness of \textsc{Tree Containment} was first established by \citet{KNL+08}.
This line of research was extended by \citet{vaniersel2010locating},
who provided a systematic complexity classification for several important classes of phylogenetic networks.
Since then, numerous refinements and extensions have appeared~\cite{gambette2016branch,GGL+15,Wel17,GGL+18,Gun18}.
It is also known that counting the number of trees contained in a given network is \#P-complete~\cite{LJS13}.

\paragraph*{Treewidth and Scanwidth.}
Beyond exact polynomial-time solvability on special classes,
parameterized complexity has emerged as a particularly fruitful approach
for coping with the combinatorial explosion inherent in phylogenetic networks~\cite{BW19}.
The guiding idea is to measure how far a network deviates from a tree and to exploit this structure algorithmically.
Indeed, several structural parameters have been investigated in the context of phylogenetic networks.
Classical graph-theoretic parameters such as treewidth and pathwidth naturally suggest themselves
because they capture how close an undirected graph is to a tree.
However, rooted phylogenetic networks are directed, and
many algorithmic properties relevant to phylogenetics depend crucially on the directionality induced by evolutionary time.
As a consequence, undirected width measures may fail to capture the structural properties that make phylogenetic problems tractable.
Motivated by these limitations, recent work introduced \emph{scanwidth},
a width measure specifically tailored to directed acyclic graphs~\cite{scanwidthNPHard}.
Furthermore, scanwidth generalizes and unifies several notions of tree-likeness previously studied in phylogenetics,
such as considering biconnected components individually, leading to the level of a network.

Recent results showed that scanwidth-based algorithms can outperform analogous treewidth-based approaches in parameter dependence.
Indeed, Schestag and Zeh~\cite{SZ26}
recently established the first known complexity-theoretic separation between scanwidth and treewidth.
These developments provide strong evidence that scanwidth captures structural properties of phylogenetic networks
that are fundamentally invisible to classical undirected width measures. Although it is not known whether computing scanwidth is fixed-parameter tractable when the parameter is the scanwidth, an algorithm that runs in polynomial time for fixed scanwidth was shown to be sufficiently fast for applications in phylogenetics~\cite{holtgrefe2026exact}.

\todo[inline]{For Leo: address computing good tree-extensions; mention Niels' “fast enuff in practice” XP-algo, the work-in-progress FPT algo (I cite Norbert's talk in the discussion section)}

\paragraph*{Parameterized Algorithms for Tree Containment.}
Parameterized algorithms for \textsc{Tree Containment} have traditionally focused on measures quantifying the amount of reticulation in the network.
A particularly influential notion in this context is the \emph{level} of a network, which bounds the number of reticulations inside each biconnected component. %
\citet{vaniersel2010locating} already showed that \textsc{Tree Containment} is polynomial-time solvable on level-$k$ networks for fixed~$k$.
Generalizing the notion of level and reticulation-visibility,
\citet{Wel17} showed that \textsc{Tree Containment} can be solved in $O(3^t \cdot |N| \cdot |T|)$~time,
where $t$ is upper-bounded by the maximum number of reticulations that are not visible, over all biconnected components of $N$.
Other parameterized approaches considered reticulation number, treewidth-like parameters, and restrictions to special network classes~\cite{BW19}.

The soft-polytomy version of the problem is fixed-parameter tractable in the combined parameter
consisting of the scanwidth and the maximum degrees of the input phylogenies~\cite{BW26}.
It is conjectured that \textsc{Soft Tree Containment} is W[1]-hard with respect to scanwidth (private communication with Sebastian Bruchhold).

\paragraph*{Our Work.}
Our main contribution is a dynamic programming algorithm that solves \textsc{Tree Containment} in $O(4^{k + k \log k} n + nm^2)$ time,
where~$k$ denotes the width of a given tree-extension of the input network, and~$n$ and~$m$ denote the numbers of vertices and arcs, respectively.
Although previous work already established that \textsc{Tree Containment} is tractable with respect to the \emph{treewidth} of the input network~\cite{vIJW23},
implying tractability also for its scanwidth,
the existing algorithm is much more complex and has a worse dependency on the parameter, which leads to weak performance in practice~\cite{robbert}.
\todo[inline]{For Leo: remove “unimplementable” and mention Robberts implementation of the TC-treewidth algorithm and that it's not practical}
As scanwidth allows us to exploit the directional structure of the network,
we expect the algorithm presented here to perform much better on real-world instances of the problem.
In this way, we add to the argument of \citet{SZ26} that scanwidth is not only of theoretical interest
for problems that are W[1]-hard with respect to treewidth, but may also render problems with known (convoluted) treewidth-based algorithms
accessible for practical use.

We complement our algorithm with a matching lower bound under the Exponential-Time Hypothesis~(ETH)~\cite{IPZ01}.
More precisely, we show that, unless ETH fails, \textsc{Tree Containment} cannot be solved in $2^{o(c \log c)} \cdot n^{O(1)}$~time, with~$c$ the cutwidth of the network. Cutwidth is a measure related to scanwidth similar to the way pathwidth is related to treewidth. Since the scanwidth is always smaller or equal to the cutwidth, our lower bounds holds for scanwidth too.
This establishes that the dependence of our algorithm on the parameter is essentially optimal.
Thus, we provide the first tight complexity characterization of \textsc{Tree Containment} with respect to scanwidth.

\paragraph*{Organization of the Paper.}
The remainder of the paper is organized as follows.
In \cref{sec:prelims}, we introduce the necessary terminology and formal definitions concerning phylogenetic networks, embeddings, and scanwidth.
In \cref{sec:TCFPT}, we present our dynamic programming algorithm.
We then prove the corresponding ETH-based lower bound in \cref{sec:ETHhardness}.
Finally, we conclude with several open questions concerning scanwidth and phylogenetic network algorithms.

\section{Preliminaries and Definitions}\label{sec:prelims}


\paragraph*{DAGs and Reachability.}
We will use the definitions and terminology from~\cite{scanwidthNPHard} wherever possible.
A \emph{(phylogenetic) network} on a set~$L$ is a directed acyclic graph (DAG) containing four types of nodes:
(1)~outdegree-zero nodes (the \emph{leaves}) which are bijectively labelled with elements from~$L$,
(2)~a single node of in-degree zero (the \emph{root}),
\footnote{Note that, sometimes, the root of a phylogenetic network is assumed to have in-degree zero and out-degree at least two.
This does not affect the problems under consideration in this paper as one can always add/remove a root vertex with outdegree one
at the top of each input network.}
%
(3)~nodes with out-degree at least two (the \emph{tree nodes}), and
(4)~nodes with in-degree at least two (the \emph{reticulations}), and
all nodes either have in-degree one or out-degree one.\todo{MW: checked that the degree-1 root is necessary for the definition of TC via pseudo-embeddings.}
If all vertices have total degree at most three, then we say the network is \emph{binary}.
A \emph{(phylogenetic) tree} on a set~$L$ is a phylogenetic network on $L$ with no reticulations.
Relaxing condition (2), we allow a collection of trees with disjoint leaf-sets to be referred to as a \emph{forest} and,
if such a forest can be obtained from a network~$N$ by deleting nodes and arcs, then we call it a \emph{subforest} of~$N$.
In what follows, we will treat leaves in a phylogenetic network as interchangeable with their labels, so the set of leaves \emph{is} $L$.

Let $G$ be a DAG.
We let~$V(G)$ and $E(G)$ denote the set of nodes and arcs of~$G$, respectively.
For any arc~$uv \in A(G)$, we say the $v$ is a \emph{child} of~$u$ and~$u$ is a \emph{parent} of~$v$.
We call~$u$ and~$v$ the \emph{tail} and \emph{head} of~$uv$, respectively.
If $uv$ and $vw$ are arcs in $A(G)$, we also say that~$vw$ is a \emph{child arc} of $v$ and of $uv$,
and similarly $uv$ is a \emph{parent arc} of~$v$ and of~$vw$.
For each vertex~$u$, let~$\outA[G]{u}:=\{uv\mid v\in V(G)\}\cap A(G)$
denote the set of \emph{out-arcs} of $u$ and, similarly, $\inA[G]{v}:=\{uv\mid u\in V(G)\}\cap A(G)$.
If there is a directed path from $u$ to $v$ in $G$,
then we say that $u$ is an \emph{ancestor} of $v$ in $G$, and $v$ is a \emph{descendant} of $u$ in $G$.
Note that a vertex $u$ is considered an ancestor and descendant of itself.
For any directed path $P$ from $u$ to $v$, we call $u$ the \emph{origin} and $v$ the \emph{destination} of $P$, respectively.
Following~\cite{scanwidthNPHard},
we let $\beloeq{G}$ denote the descendant relation in $G$,
that is, 
$v \beloeq{G} u$ if and only if $v$ is a descendant of $u$
(that is, $v$ is reachable from $u$, or $v$ is \emph{below} $u$) in $G$,
and we let $\belo{G}$ denote the result of removing reflexive pairs from $\beloeq{G}$.
Note that the root of a network~$N$ is the unique maximum of~$\beloeq{N}$.
We extend the partial order~$\beloeq{G}$ to $V(G)\cup A(G)$ as follows:
For arcs~$xy$ and~$uv$, we let
\begin{enumerate}[(1)]
  \item $uv \beloeq{G} xy$ if and only if $u \beloeq{G} y$,
  \item $uv \beloeq{G} x$ if and only if $u \beloeq{G} x$, and
  \item $y \beloeq{G} uv$ if and only if $y \beloeq{G} v$.
\end{enumerate}

\paragraph*{Downward-Closed Subforests.}
Let $F$ be a forest and
let $S\subseteq A(F)$ be a non-empty arc-set.
\todo{MW: we need $S\ne\emptyset$ as, otherwise, $F_S$ is empty, so it does not have a degree-1 root, which is mandatory for all networks!}
Let~$F_S$ denote the subgraph of $F$ below~$S$, that is,
the smallest subgraph of $F$ containing the arcs~$\bigcup_{\alpha\in S}\{xy\in A(F)\mid xy\beloeq{G}\alpha\}$.
Then, we call $F_S$ the \emph{forest below~$S$}.
Note that $F_S$ consists of a collection of trees, some of which may share the same root.
This slightly bends established graph-theoretic notions in that two trees with the same root would normally be joined into one tree.
\todo{MW: I added this to address the issue with subforests and degree-1 roots}
If $S$ is an anti-chain, then we call it the \emph{top-arc set of $F_S$}, and a \emph{top-arc set for $L(F_S)$}.
Note that the top-arc set of $F_S$ is unique (it contains exactly the maxima of $A(F_S)$ wrt.~$\belo{F}$),
but the top-arc set for $L(F_S)$ may not be unique.
If, for a subforest~$F'$ of $F$, there is some arc-set~$S$ such that $F'=F_S$, then we call $F'$ \emph{downwards closed} wrt.~$F$,
and we omit the suffix if $F$ is clear from context.
Informally, a downward-closed subforest of $T$ is the subforest consisting of
some set~$S$ of arcs and all their descendants (nodes and arcs according to $\belo{T}$).

For a set~$S\subseteq V(F)$ of non-root nodes of $F$,
we define $F_S$ as $F_{\{uv\in A(F) \mid v\in S\}}$.
This also diverges slightly from established notions as $F_v$ now also contains the arc incoming to $v$ in $F$
(note that $F_S$ is undefined if $S$ contains roots of $F$).

\begin{observation}\label{obs:top-arcs}\label{lem:top-arc-subset}
  Let~$T$ be a tree. Then, the following observations hold.
  \begin{enumerate}[(a)]
    \item $T$ is a downward-closed subforest of~$T$.
    \item Two downward-closed subforests~$F$ and~$F'$ of $T$ have the same top-arcs if and only if $F = F'$.
    \item For all $S\subseteq A(T)$, the forest~$T_S$ is downward-closed.
    \item Any subset of a top-arc set~$S$ is again a top-arc set.
    \item Any $S\subseteq A(T)$ is
      a top-arc set of $T_S$
      if and only if
      $S$ is an anti-chain with respect to~$\belo{T}$.
  \end{enumerate}
\end{observation}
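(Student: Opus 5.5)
The plan is to verify the five items one at a time, working directly from the definition of $T_S$. Recall that $T_S$ is the smallest subgraph containing every arc $xy$ with $xy\beloeq{T}\alpha$ for some $\alpha\in S$. The one fact used throughout is the shape of downward closures in a tree. Every non-root node has a unique parent arc. So for an arc $\alpha=uv$, the arcs below $\alpha$ are exactly $\alpha$ itself together with all arcs whose tail is a descendant of $v$.

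First I show that the maximal arcs of $A(T_S)$ with respect to $\belo{T}$ are exactly the maximal elements of $S$. Any arc of $T_S$ lies below some $\alpha\in S$. Conversely, an arc of $S$ that is maximal within $S$ cannot lie strictly below another arc of $T_S$: that arc would in turn lie below some element of $S$, and transitivity would then contradict maximality.

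\textbf{Item (a).} I take $S=A(T)$, or simply the single root arc, since the root has outdegree one. Every arc of $T$ lies below the root arc, so $T_S=T$.

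\textbf{Item (c).} This is immediate from the definition of downward-closedness, since $S$ itself witnesses it.

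\textbf{Item (b).} If $F=F'$, their sets of maximal arcs coincide, so their top-arc sets coincide. Conversely, suppose $F=T_S$. Then $T_S=T_{\max(S)}$, because anything below an element of $S$ is below a maximal element of $S$. By the observation above, $\max(S)$ is exactly the top-arc set of $F$. Hence $F$ is determined by its top-arcs, and two downward-closed forests with equal top-arc sets are equal.

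\textbf{Item (e).} Suppose $S$ is an anti-chain. Then every element of $S$ is maximal in $S$, hence maximal in $A(T_S)$. Also, every maximal arc of $A(T_S)$ is maximal in $S$, because it lies below some element of $S$. So $S$ equals the top-arc set of $T_S$. Conversely, the top-arc set consists of the maxima of a poset, and the maxima of a poset form an anti-chain. Hence if $S$ is the top-arc set of $T_S$, then $S$ is an anti-chain.

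\textbf{Item (d).} Take a subset $S'\subseteq S$ of a top-arc set $S$. Since $S$ is an anti-chain, so is $S'$. By (e), $S'$ is then the top-arc set of $T_{S'}$. The definition of $T_S$ requires a non-empty arc set, so I assume $S'$ is non-empty.

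The only step that needs real care is matching the maxima of $A(T_S)$ with those of $S$, together with the non-emptiness caveat; everything else is routine.
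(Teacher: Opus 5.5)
Your argument is correct. The paper states this observation without proof and relies only on its remark that the top-arc set of $F_S$ consists exactly of the maxima of $A(F_S)$. Your key step, that the maxima of $A(T_S)$ are exactly the maxima of $S$, is precisely the justification of that remark. After it, (b), (d) and (e) are the standard antichain/down-set correspondence in a finite poset, so the tree structure is never really needed.

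One correction to your setup: the fact that $\alpha$ lies below itself does not follow from parent arcs being unique. Under the paper's clause (1) taken literally, $xy\beloeq{T}xy$ would require $x\beloeq{T}y$, which never holds. You are therefore using the intended reflexive reading of the order on arcs. That reading is the right one and is actually necessary, since under the literal one the root arc lies in no $T_S$ and even (a) fails.
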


\subsection{Network Width Measures}

This work makes heavy use of width measures of directed acyclic graphs (DAGs),
most prominently their their “cutwidth” and their “scanwidth”,
defined as follows.

\begin{definition}[topological ordering, induced order, cutwidth]
  Let~$G$ be a DAG,
  let~$n:=|V(G)|$, and
  let~$\sigma:V(G) \rightarrow [n]$ be a linear ordering on the vertices of~$G$.
  Then, the order $\beloeq{\sigma}$ \emph{induced} by $\sigma$ is $\{(u,v) \mid \sigma(u)\leq\sigma(v)\}$.
  If, for each~$u,v \in V(G)$, we have~$uv \in A(G)$ only if $\sigma(u) < \sigma(v)$, then $\sigma$ is called a \emph{topological ordering} of $G$.
    
  For each~$i < j$, we let~$\sigma[i\dots j] := \{v \in V(G) \mid i \leq \sigma(v) \leq j\}$.
  For any $r\in [n]$, we let~$C_r(\sigma)$ denote the set of all arcs with one node in $\sigma[1\dots, r]$ and one in $\sigma[r+1,\dots, n]$,
  and we let~$cw_r(\sigma) := |C_r(\sigma)|$.
  Then, the \emph{cutwidth~$cw(\sigma)$ of~$G$ with respect to $\sigma$} is~$\max_r cw_r(\sigma)$ and
  the \emph{(directed) cutwidth of $G$} is the minimum of $cw(\sigma)$ over all (topological) linear orderings~$\sigma$.
\end{definition}

\begin{definition}[Scanwidth, Canonical Tree-Extension, \cite{scanwidthNPHard}]\label{def:scanwidth}
  Let~$G$ be a DAG and
  let~$\Gamma$ be a tree with $V(\Gamma) = V(G)$ such that the relation~$\belo{G}$ is a subset of $\belo{\Gamma}$,
  that is, for all $u,v\in V(G)$, we have $v \belo{G} u$ only if $v \belo{\Gamma} u$.
  Then, $\Gamma$ is called a \emph{tree extension} for $G$.
 
  For each~$v \in V(G)$,
  we let~$GW_v(\Gamma)$ denote the set of arcs~$xy\in A(G)$ for which $x \abov{\Gamma} v \aboveq{\Gamma} y$
  (i.e.~$x$ is strictly above $v$ in $\Gamma$ and $v$ is above (or equal) $y$ in $\Gamma$).
  The \emph{scanwidth of $\Gamma$} is $\max_v |GW_v(\Gamma)|$ and
  the \emph{scanwidth of $G$} is the minimum width of any tree extension for~$G$.
  Finally, if $G[V(\Gamma_v)]$ is weakly connected for all $v\in V(G)$,
  then we call $\Gamma$ \emph{canonical}.
\end{definition}

\begin{observation}[\cite{scanwidthNPHard}, \cite{Holt23}]
  Let $G$ be a DAG.
  Given a tree-extension~$\Gamma'$ of scanwidth~$w$ for $G$,
  a canonical tree-extension $\Gamma$ of scanwidth~$w$ for $G$ can be computed in quadratic time.
  Further,
  \begin{enumerate}[(1)]
    \item $\outdeg[G]{v}\geq\outdeg[\Gamma]{v}$ for each node~$v\in V(G)$, in particular,
    \item the leaf-sets of $G$ and $\Gamma$ are identical.
  \end{enumerate} 
\end{observation}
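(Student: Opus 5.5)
The plan has two parts: build a canonical tree-extension $\Gamma$ from $\Gamma'$ by a local restructuring that never increases any $|GW_v|$, and then read off (1) and (2) from weak connectivity of the subtrees of $\Gamma$.

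\textbf{Construction.} I would follow the canonicalisation procedure of \cite{scanwidthNPHard,Holt23}. Process the nodes of $\Gamma'$ bottom-up. For a node $v$, let $C_1,\dots,C_p$ be the weakly connected components of $G[V(\Gamma'_v)\setminus\{v\}]$. First, every component $C_i$ not weakly connected to $v$ in $G[V(\Gamma'_v)]$ is detached. Such a component has no arc to or from $v$, and no ancestor relation with $v$ in $G$. It is re-hung as a child subtree of the parent of $v$ in $\Gamma'$, or of the lowest ancestor whose subtree it connects to. Second, the children of $v$ are regrouped so that each child subtree spans exactly one component: if one component meets several child subtrees, those subtrees are merged.

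\textbf{Invariants to check.} Each such move must preserve two things.
\begin{enumerate}[(i)]
  \item The relation $\belo{G}$ stays contained in $\belo{\Gamma}$. A detached component contains no $G$-descendant of $v$ that is not already separated from $v$ by the component structure, and ancestors of its nodes outside $\Gamma'_v$ remain $\Gamma$-ancestors.
  \item No $|GW_u|$ increases. For a node $u$ inside a moved subtree, the arcs counted in $GW_u$ can only lose the nodes strictly above $u$ (namely $v$), and no arc joins $v$ to the component. For $v$ itself, arcs leaving the moved part and entering the part above are no longer counted at $v$. For ancestors of $v$, the set $V(\Gamma_w)$ is unchanged, so $GW_w$ is unchanged.
\end{enumerate}
Each step takes linear time, and there are $O(n)$ steps, giving quadratic time overall. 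For the running-time bound I would rely on \cite{Holt23}.

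\textbf{Property (1).} Let $\Gamma$ be canonical and let $c$ be a $\Gamma$-child of $v$. The set $V(\Gamma_c)$ induces a weakly connected graph, and so does $V(\Gamma_v)$. Therefore some arc of $G$ joins $v$ to $V(\Gamma_c)$. This arc cannot enter $v$, since its tail would then be a $G$-ancestor of $v$ lying $\Gamma$-below $v$, contradicting $\belo{G}\subseteq\belo{\Gamma}$. Hence it is an arc $vx$ with $x\in V(\Gamma_c)$. Distinct children have disjoint subtrees, so distinct children yield distinct out-arcs of $v$, and $\outdeg[G]{v}\geq\outdeg[\Gamma]{v}$.

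\textbf{Property (2).} If $\outdeg[G]{v}=0$, then (1) gives that $v$ is a leaf of $\Gamma$. Conversely, if $v$ is a leaf of $\Gamma$ but has an out-arc $vx$ in $G$, then $x\belo{G}v$, which forces $x\belo{\Gamma}v$; this is impossible for a leaf. So the leaf-sets of $G$ and $\Gamma$ coincide.

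\textbf{Main obstacle.} The delicate step is showing that the regrouping never increases the width at the nodes of the moved components. I would use the characterisation $|GW_v|=|\{xy\in A(G): y\in V(\Gamma_v),\,x\notin V(\Gamma_v)\}|$, that is, the number of arcs entering the subtree of $v$. This reduces the claim to checking which arc sets enter each subtree before and after a move.
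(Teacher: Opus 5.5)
The paper gives no proof of this observation and defers to \cite{scanwidthNPHard,Holt23}. The route taken there goes through linear extensions. One lists $V(G)$ in an order compatible with $\Gamma'$ (descendants before ancestors); this order has scanwidth at most that of $\Gamma'$. One then rebuilds a tree along this order, making each vertex the parent of the roots of the current weakly connected components that contain its out-neighbours. Canonicity holds by construction, and the width bound follows because $GW_v$ is exactly the set of arcs entering the component of~$v$.

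Your route, surgery directly on $\Gamma'$ that re-hangs child subtrees not adjacent to $v$ one level up, is a valid alternative and more self-contained. Each move visibly preserves $\belo{G}\subseteq\belo{\Gamma}$, and your in-arc characterisation of $GW_u$ turns width monotonicity into a two-line check. Your arguments for (1) and (2) also hold for every canonical tree extension, not only the computed one. The cost is that three things need their own invariants, which you leave implicit: canonicity of the output (after $u$ is processed, $\Gamma_u$ is only ever moved as a whole), weak connectivity of $G$ at the root, and the running time. Also, what you obtain is width at most~$w$, not necessarily exactly~$w$.

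One fact is missing, and you need it twice. Every arc $xy$ of $G$ has $y\belo{\Gamma}x$, so arcs only join $\Gamma$-comparable nodes, and no arc runs between two sibling subtrees.

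\begin{itemize}
\item In the construction, suppose $v$ is processed after its child subtrees are already canonical. Then this fact shows that the weakly connected components of $G[V(\Gamma_v)\setminus\{v\}]$ are exactly the sets $V(\Gamma_c)$ for the children $c$ of $v$. So detaching a component just re-parents a whole child subtree. The ``merging'' step you single out as the main obstacle therefore never occurs, which is fortunate, since merging subtrees is not a well-defined tree operation.
\item In (1), this fact is what justifies your ``therefore''. Weak connectivity of $G[V(\Gamma_v)]$ only gives an arc between $V(\Gamma_c)$ and $V(\Gamma_v)\setminus V(\Gamma_c)$, and the sibling fact forces this arc to be incident with~$v$. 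The weak connectivity of $G[V(\Gamma_c)]$ plays no role here.
\end{itemize}

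Similarly, your claim that a detached component has no $G$-ancestry relation with $v$ needs its one-line proof. A directed path from $v$ to a node $x$ of the component stays inside $V(\Gamma_v)$. Hence its second vertex is an out-neighbour of $v$ lying in the component of~$x$, contradicting that the component is not adjacent to~$v$.
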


\noindent
Throughout this work, we will assume that the tree extensions we are working with are canonical.

\citet{scanwidthNPHard} proved that the scanwidth of a DAG~$G$ is at most its directed cutwidth.
It is also known that cutwidth is an upper bound of the pathwidth of the underlying undirected graph of $G$ \cite[Theorem~47]{Bod98}. 
Since directed cutwidth is clearly also an upper bound on cutwidth (any topological ordering is also a linear ordering), we have the following lemma.

\begin{lemma}\cite{scanwidthNPHard,Bod98}
  Let $G$ be a DAG with directed cutwidth $k$. Then the scanwidth and pathwidth of $G$ are at most $k$.
\end{lemma}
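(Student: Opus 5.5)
The lemma has two independent halves, one about scanwidth and one about pathwidth, so I will prove them separately. The scanwidth half I will prove directly, even though it is the result of \citet{scanwidthNPHard}, so that the argument is self-contained. The pathwidth half goes through undirected cutwidth.

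For scanwidth, fix a topological ordering $\sigma$ of $G$ attaining directed cutwidth $k$. Let $v_i:=\sigma^{-1}(i)$. I will take $\Gamma$ to be the path $v_1 v_2\cdots v_n$, rooted at $v_1$, so that $v_j \belo{\Gamma} v_i$ holds exactly when $i<j$.

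First, $\Gamma$ is a tree extension. If $v \belo{G} u$, there is a directed path from $u$ to $v$, and since $\sigma$ is topological, $\sigma(u)<\sigma(v)$. Hence $v\belo{\Gamma}u$.

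Next, I bound $|GW_{v_r}(\Gamma)|$ for each $r$. An arc $xy\in GW_{v_r}(\Gamma)$ satisfies $x\abov{\Gamma} v_r\aboveq{\Gamma} y$, that is, $\sigma(x)<r\le\sigma(y)$. So $x\in\sigma[1\dots r-1]$ and $y\in\sigma[r\dots n]$, and therefore $xy\in C_{r-1}(\sigma)$. For $r=1$ the set $GW_{v_1}(\Gamma)$ is empty.

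It follows that $|GW_{v_r}(\Gamma)|\le cw_{r-1}(\sigma)\le k$ for every $r$. Thus $\Gamma$ has scanwidth at most $k$, and so does $G$.

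For pathwidth, note that the directed cutwidth is a minimum over topological orderings, while the undirected cutwidth is a minimum over all linear orderings. Since every topological ordering is a linear ordering, the undirected cutwidth of $G$ is at most $k$. Then \cite[Theorem~47]{Bod98} says that pathwidth is at most cutwidth, which gives pathwidth at most $k$.

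If I wanted to avoid citing Bodlaender, I would use the path decomposition with bags $X_r:=\{v_r\}\cup\{x : xy\in C_{r-1}(\sigma),\ \sigma(x)<r\}$. Each bag has size at most $k+1$, so the width is at most $k$. One then checks the two path-decomposition properties: every arc is contained in some bag, and the bags containing any fixed vertex are consecutive.

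The only delicate step is this self-contained pathwidth argument, specifically the consecutiveness property and handling vertices of degree zero, which is why I would rather invoke the cited theorem. Everything else is direct bookkeeping.
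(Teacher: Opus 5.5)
Your proof is correct and matches the paper's argument. The pathwidth half is exactly the paper's reasoning: every topological ordering is a linear ordering, so cutwidth is at most directed cutwidth, and then \cite[Theorem~47]{Bod98} applies. For the scanwidth half the paper only cites \cite{scanwidthNPHard}, and your path-shaped tree extension $v_1v_2\cdots v_n$, with $GW_{v_r}(\Gamma)\subseteq C_{r-1}(\sigma)$, is the standard argument behind that citation. Your optional path decomposition also works and is not delicate: each vertex $x$ lies exactly in the bags $X_r$ with $\sigma(x)\le r\le\max\bigl(\{\sigma(x)\}\cup\{\sigma(y)\mid xy\in A(G)\}\bigr)$, which is an interval, including when $x$ has degree zero.
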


\noindent
In~\cref{sec:ETHhardness}, we show lower bounds on the complexity of \textsc{Disjoint Paths} and \TC parameterized by directed cutwidth, based on the Exponential Time Hypothesis (see \cite{IP99,bk-CFK+15}).
As directed cutwidth is an upper bound on scanwidth,
this implies a corresponding bound on \TC parameterized by scanwidth which matches
(up to constants in the exponent) the running time of our algorithm.

\subsection{Tree Containment and Pseudo Embeddings}

Intuitively speaking, the \textsc{Tree Containment} problem asks whether a given tree is contained (that is, “displayed”) by a given network,
where “display” is defined as follows.

\begin{definition}[display]\label{def:display}
  Let $\Nh$ and $\Fg$ be leaf-labelled DAGs.
  Let~$\Fg'$ be a DAG that
  \begin{enumerate}[(a)]
    \item can be derived from~$\Fg$ by repeated subdivision of arcs and
    \item admits a leaf-respecting isomorphism (i.e.~each labelled leaf is mapped to a leaf with the same label) to a subgraph of~$\Nh$.
  \end{enumerate}
  Then, we say that~$\Nh$ \emph{displays}~$\Fg$.
\end{definition}

\noindent
\probdef
{a rooted phylogenetic network~$\Nh$ and rooted phylogenetic tree~$\Tg$}
{Does $\Nh$ display $\Tg$?}
{\TC}{}{}

\noindent
Note that we assume that~$\Nh$ and~$\Tg$ have the same leaf-set as,
otherwise, we can either establish this property by deleting leaves of $\Nh$ (leaves that do not exist in $\Tg$)
or simply rejecting the instance (if $\Tg$ has leaves that do not exist in $\Nh$).



\noindent
We can phrase the \textsc{Tree Containment} problem as
the problem of deciding whether there is a “pseudo-embedding” of $T$ into $N$,
as defined below. 
This is a slight generalization of the common notion of an embedding,
designed to be applicable in our dynamic programming algorithm.



\begin{definition}[pseudo-embedding]\label{def:pseudo-embedding}
  Let $F$ be a downward-closed subforest of $T$.
  A \emph{pseudo-embedding of $F$ into $N$} is a function~$\emb$ that maps all arcs of~$F$ to directed paths of length~$\geq 1$ in $N$ such that
  \begin{enumerate}[(a)]
    \item for arcs $xy,yz \in A(F)$, the destination of the path $\emb(xy)$ is the origin of the path $\emb(yz)$,\label{it:end is start} 
    \item for distinct arcs $uv, xy\in A(F)$, the paths $\emb(uv)$ and $\emb(xy)$ are arc-disjoint, and\label{it:disjoint paths} 
    \item for an arc~$x\ell\in A(F)$ with $\ell \in L(T)$, the path~$\emb(x\ell)$ ends in~$\ell\in L(N)$.\label{it:leaves} 
  \end{enumerate}
  For any arc~$xy\in A(F)$, we denote the destination of $\emb(xy)$ as $\emb(y)$.
  We call $\emb(y)$ the \emph{embedding of $y$}.
\end{definition}

\noindent
Note that the pseudo-embeddings $\emb(xy)$ and $\emb(xz)$ of two arcs~$xy, xz\in A(F)$ may have different origins.
However, this is only possible if the arc incoming to~$x$ in~$T$ is not in~$F$, that is, $xy$ and $xz$ are top-arcs of~$F$.
This does not occur when $F = T$ as $T$ only has a single top arc.
Further note that $\emb(x)$ is not defined if $xy$ is a top arc of~$F$.

\medskip
For ease of presentation, we will let~$\emb$ contain additional mappings that do not figure in \cref{def:pseudo-embedding}.
This allows us to use a pseudo-embedding of $T$ in subforests of $T$ without having to apply a restriction.

\begin{observation}\label{obs:subforest-pseudoemb}
  Let $S\subseteq A(T)$ and $S'\subseteq A(T_S)$.
  Let $\emb$ be a pseudo-embedding of $T_S$.
  Then, $\emb$ is also a pseudo-embedding of $T_{S'}$.%
  \todo{MJ: $\emb$ restricted to $A(T_{S'})$?\\
  MW: we never stated that $\emb$ cannot contain anything that's not used, did we?}
\end{observation}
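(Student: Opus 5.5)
The plan is to check the three conditions of \cref{def:pseudo-embedding} for $T_{S'}$ directly, using the fact that $A(T_{S'})\subseteq A(T_S)$. Following the convention stated just before the observation, we allow $\emb$ to carry extra mappings. So it suffices to show two things: $T_{S'}$ is a downward-closed subforest of $T$, and every condition of the definition, restricted to arcs of $T_{S'}$, is implied by the same condition for $T_S$.

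First, $T_{S'}$ is downward-closed by \cref{obs:top-arcs}(c), since $S'\subseteq A(T_S)\subseteq A(T)$. Next we show $A(T_{S'})\subseteq A(T_S)$. Take an arc $xy\in A(T_{S'})$. Then $xy\beloeq{T}\alpha'$ for some $\alpha'\in S'$. Since $\alpha'\in A(T_S)$, we also have $\alpha'\beloeq{T}\alpha$ for some $\alpha\in S$. It remains to check that the extended relation on arcs is transitive. Write $xy\beloeq{T}uv$ iff $x\beloeq{T} v$, and $uv\beloeq{T}ab$ iff $u\beloeq{T} b$. From $x\beloeq{T} v$, $v\belo{T} u$ and $u\beloeq{T} b$ we get $x\beloeq{T} b$, hence $xy\beloeq{T}ab$. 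Therefore $xy\in A(T_S)$.

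Now the three conditions. Every arc of $T_{S'}$ is an arc of $T_S$, so $\emb$ maps each of them to a directed path of length at least $1$ in $N$.
\begin{itemize}
  \item Condition~(\ref{it:end is start}): for $xy,yz\in A(T_{S'})$, both arcs lie in $A(T_S)$, so the condition for $T_S$ applies.
  \item Condition~(\ref{it:disjoint paths}): distinct arcs of $T_{S'}$ are distinct arcs of $T_S$, so their paths are arc-disjoint.
  \item Condition~(\ref{it:leaves}): an arc $x\ell$ of $T_{S'}$ with $\ell\in L(T)$ is an arc of $T_S$, so $\emb(x\ell)$ ends in $\ell$.
\end{itemize}

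The only real content is the inclusion $A(T_{S'})\subseteq A(T_S)$, which rests on transitivity of the arc extension of $\beloeq{T}$. There is one subtle point: a node that is an interior node of $T_S$ may become a root of $T_{S'}$, where $\emb(x)$ is no longer required to be defined. This is harmless, because the definition imposes no constraints at top arcs.
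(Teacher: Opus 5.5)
Your proof is correct, and it is exactly the direct verification the paper leaves implicit (the paper states this as an observation without proof). The three conditions of a pseudo-embedding are universally quantified over arcs of the forest, so they pass to $T_{S'}$ once $A(T_{S'})\subseteq A(T_S)$, and you correctly derive that inclusion from transitivity of the arc extension of $\beloeq{T}$. Your transitivity check only skips the reflexive cases ($xy=\alpha'$ or $\alpha'=\alpha$), which arise if the extension is read as a genuine partial order (so that $S\subseteq A(T_S)$); they are immediate.
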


If a tree $T$ is pseudo-embeddable into a network $N$ with tree-extension~$\Gamma$,
then we can observe that each leaf below any node $y$ in $T$ must also be below the embedding of $y$ in $\Gamma$.

\begin{observation}\label{obs:emb-leaves}
  Let $\Gamma$ be a tree-extension for a network~$N$ and
  let $T$ be a tree.
  Let $S\subseteq A(T)$,
  let~$\emb$ be a pseudo-embedding of $T_S$ into $N$ and
  let~$xy\in A(T_S)$.
  Then, $L(T_y)\subseteq L(\Gamma_{\emb(y)})$.
\end{observation}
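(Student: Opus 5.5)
We prove the statement by following, for each leaf $\ell\in L(T_y)$, a concrete path in $N$ from $\emb(y)$ down to $\ell$, and then using that $\Gamma$ is a tree extension.

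First, for each leaf $\ell\in L(T_y)$, we note that $\ell$ is a descendant of $y$ in $T$. Since $xy\in A(T_S)$ and $T_S$ is downward closed, the unique directed path $y=v_0,v_1,\dots,v_t=\ell$ in $T$ has all its arcs $v_{i-1}v_i$ in $A(T_S)$. Indeed, each such arc lies below $xy$ with respect to $\beloeq{T}$, so it belongs to $T_S$ by the definition of the forest below $S$. If $y=\ell$ itself, then $\emb(y)=\ell$ by condition~\eqref{it:leaves} of \cref{def:pseudo-embedding}. In that case $\ell\in L(\Gamma_\ell)$ trivially, since the leaf-sets of $N$ and $\Gamma$ coincide by the observation on canonical tree-extensions, and we are done.

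Next, we concatenate the paths $\emb(v_0v_1),\emb(v_1v_2),\dots,\emb(v_{t-1}v_t)$. By condition~\eqref{it:end is start}, the destination of $\emb(v_{i-1}v_i)$, namely $\emb(v_i)$, is the origin of $\emb(v_iv_{i+1})$. Also, the origin of $\emb(v_0v_1)$ is $\emb(y)$, because $xy,yv_1\in A(T_S)$. By condition~\eqref{it:leaves}, the last path ends at the leaf $\ell$ of $N$. Hence this concatenation is a directed walk in $N$ from $\emb(y)$ to $\ell$, which gives $\ell\beloeq{N}\emb(y)$.

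Finally, since $\Gamma$ is a tree extension, $\belo{N}\subseteq\belo{\Gamma}$, so $\ell\beloeq{\Gamma}\emb(y)$. In other words, $\ell$ is a node of $\Gamma_{\emb(y)}$. It is a leaf of $\Gamma$ because $N$ and $\Gamma$ have the same leaves, and so it is a leaf of the subtree $\Gamma_{\emb(y)}$ as well.

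The only subtle step is checking that every arc on the $T$-path from $y$ to $\ell$ lies in $T_S$, so that $\emb$ is defined on it and the chaining conditions apply. This follows directly from downward-closedness. A minor notational point is that $\Gamma_v$ here denotes the subtree of $\Gamma$ rooted at $v$, possibly together with its incoming arc; this does not affect its leaf set.
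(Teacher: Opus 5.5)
Your proof is correct and is the argument the paper leaves implicit, since it states this as an observation without proof: chaining the paths $\emb(v_{i-1}v_i)$ along the $T$-path from $y$ to $\ell$ via conditions (a) and (c) gives $\ell \beloeq{N} \emb(y)$, and the tree-extension property then gives $\ell \beloeq{\Gamma} \emb(y)$. Relying on the standing assumption that $\Gamma$ is canonical (so $L(\Gamma)=L(N)$) is appropriate, and $\Gamma_{\emb(y)}$ is well defined because $\emb(y)$ is the destination of a path of length at least one and hence not the root.
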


\noindent
We observe that paths in the image of a pseudo-embedding do not converge in the same vertex.

\begin{lemma}\label{lem:PEuniquein}
  Let $xz,yz\in A(N)$ with $x\ne y$,
  let $\emb$ be a pseudo-embedding of some~$F$ into $N$,
  let $a,b\in A(F)$ such that $\emb(a)$ contains~$xz$.
  Then, $yz$ is not in~$\emb(b)$.
\end{lemma}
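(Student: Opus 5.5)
Assume for contradiction that $yz$ lies on $\emb(b)$, and derive a violation of \cref{def:pseudo-embedding}. Since $z$ has two distinct in-arcs $xz$ and $yz$, it is a reticulation. By the degree conditions in the definition of a network, every node has in-degree one or out-degree one, so $z$ has exactly one out-arc, say $zw$. (If $z$ had out-degree zero it would be a leaf, and leaves have in-degree one since they are bijectively labelled; more simply, a reticulation with out-degree zero is excluded because every leaf's parent arc is unique. I would double-check this using the degree constraints.)

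First, suppose $a=b$. Then the single directed path $\emb(a)$ contains both $xz$ and $yz$, so it enters $z$ twice. That is impossible in a DAG, since a directed path visits each vertex at most once. Hence $a\neq b$, and by \cref{def:pseudo-embedding}\,(\ref{it:disjoint paths}) the paths $\emb(a)$ and $\emb(b)$ are arc-disjoint.

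Next, consider where each path goes after $z$. Either $\emb(a)$ continues through $zw$, or it ends at $z$. Likewise, either $\emb(b)$ continues through $zw$, or it ends at $z$. Both paths cannot continue, because then both would contain $zw$, contradicting arc-disjointness. So, without loss of generality, one of them, say $\emb(a)$, ends at $z$, so that $z=\emb(v)$ for the head $v$ of $a$. (The same argument covers $b$ if instead $\emb(b)$ ends at $z$.)

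The remaining case is the main obstacle, and I would split it by the type of $v$ in $F$.
\begin{itemize}
  \item If $v$ is a leaf of $T$, then by (\ref{it:leaves}) $z$ is a leaf of $N$. But $z$ is a reticulation, and a reticulation cannot be a labelled leaf (leaves have in-degree one). This is a contradiction.
  \item If $v$ is not a leaf, then $v$ has at least two child arcs in $T$, since $T$ has no degree-two nodes other than the root, and the root has no incoming arc so it is not the head of $a$. I would check that these child arcs lie in $F$, using that $F$ is downward-closed and contains $a$. By (\ref{it:end is start}), the paths of these child arcs all start at $z$. Their first arcs are out-arcs of $z$, so they all use $zw$, contradicting (\ref{it:disjoint paths}).
\end{itemize}
In both cases we reach a contradiction, so $yz$ is not on $\emb(b)$.

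The subtle points are justifying that $v$ has at least two children in $F$ and handling the degree-one root. I expect downward-closedness of $F$ to settle the first of these.
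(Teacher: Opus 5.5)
Your proof is correct and takes essentially the paper's approach: $z$ has in-degree at least two and therefore a unique out-arc, and the contradiction comes either from one path entering $z$ twice or from two distinct arcs of $F$ both needing that out-arc. The differences are minor. The paper treats $z\in L(N)$ as a separate case, which your degree argument shows is vacuous, and it pits ``$a$ or a child of $a$'' against ``$b$ or a child of $b$''. You instead take the two conflicting arcs to be sibling child arcs of the head of whichever of $a,b$ ends at $z$, which avoids having to argue that those two arcs are distinct.
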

\begin{proof}
  Assume towards a contradiction that $yz$ is an arc of $\emb(b)$.
  Let $u$ and $v$ denote the heads of $a$ and $b$, respectively, and note that $u\ne v$ since $F$ is a forest and $a\ne b$.
  First, if $z\in L(N)$ then, by \cref{def:pseudo-embedding}\ref{it:leaves}, we have $u=z=v$, contradicting~$u\ne v$.
  If $z$ is not a leaf in $N$, then $z$ has a single outgoing arc~$zq$ in~$N$.
  By \cref{def:pseudo-embedding}\ref{it:end is start}, $zq$ is an arc of either~$a$ or a child of $a$ in $F$ and
  the same holds for $b$.
  If $a\ne b$, then $zq$ is in the image of two different arcs in $T$, contradicting \cref{def:pseudo-embedding}\ref{it:disjoint paths}.
  Otherwise, $a=b$ and both $xz$ and $yz$ are in $\emb(a)$, contradicting $\emb(a)$ being a directed path in $N$.
\end{proof}


Now, we can show that \textsc{Tree Containment} is definable using pseudo-embeddings.

\begin{proposition}\label{prop:TC via pseudo-embeddings}
  $N$ displays $T$ if and only if there is a pseudo-embedding of $T$ into $N$.
\end{proposition}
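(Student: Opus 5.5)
We prove both directions directly from \cref{def:display} and \cref{def:pseudo-embedding}, with $F=T$ throughout. Since $T$ has a single top arc (the root has out-degree one), conditions~\ref{it:end is start}--\ref{it:leaves} read as the usual embedding conditions once paths are read as images of subdivided arcs.

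\emph{($\Rightarrow$)} Suppose $N$ displays $T$. Then there is a subdivision $T'$ of $T$ and a leaf-respecting isomorphism $\iota$ from $T'$ onto a subgraph of $N$. Each arc $xy\in A(T)$ is replaced in $T'$ by a directed path $P_{xy}$ of length at least one. We set $\emb(xy):=\iota(P_{xy})$, which is a directed path of length $\geq 1$ in $N$.
\begin{enumerate}[(a)]
  \item For consecutive arcs $xy,yz$, the path $P_{xy}$ ends at $y$ and $P_{yz}$ starts at $y$, so $\emb(xy)$ ends where $\emb(yz)$ begins.
  \item Distinct arcs of $T$ give arc-disjoint paths $P_{xy}$ in $T'$, and $\iota$ is injective on arcs, so their images are arc-disjoint.
  \item If $\ell$ is a leaf, then $\iota(\ell)$ is the leaf of $N$ labelled $\ell$, because $\iota$ is leaf-respecting.
\end{enumerate}

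\emph{($\Leftarrow$)} Given a pseudo-embedding $\emb$ of $T$, let $H$ be the union of all paths $\emb(xy)$. We build $T'$ by subdividing each arc $xy$ into as many arcs as $\emb(xy)$ has, and map $T'$ onto $H$ in the natural way: each subdivision vertex goes to the corresponding internal vertex of $\emb(xy)$, and each $y\in V(T)$ goes to $\emb(y)$. The root $\rho$ of $T$ goes to the common origin of the paths of its child arcs, which is unique since $\rho$ has a single out-arc. Conditions~\ref{it:end is start} and~\ref{it:leaves} show this map is well defined and leaf-respecting. Arcs map bijectively onto $A(H)$ by~\ref{it:disjoint paths}.

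The main obstacle is \emph{vertex-injectivity}: arc-disjoint paths could still share a vertex. We rule this out in three steps.
\begin{enumerate}[(1)]
  \item Two images cannot share an internal or end vertex $z$ entered by two different arcs. By \cref{lem:PEuniquein}, no two images enter $z$ via different in-arcs, so they enter via the same in-arc. By~\ref{it:disjoint paths}, this is possible only if the two visits are the end of $\emb(xy)$ and the start of $\emb(yz')$, i.e.\ the same tree vertex $y$.
  \item A vertex may instead occur as the origin of a path without being entered. The only such origin is the root image, because every non-root $x$ is entered by the path of its parent arc. So consider a vertex $v$ visited by two different images. If $v$ is not the root image, both visits enter $v$, and (1) shows they correspond to the same vertex of $T'$. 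If $v$ is the root image, suppose some path also enters $v$. Its arc lies below the root, and following ancestors in $T$ produces a directed cycle through $v$ in $N$, contradicting acyclicity.
  \item Distinct vertices of $T$ receive distinct images. Tree nodes have out-degree $\geq 2$ in $T$, so if $\emb(y)=\emb(y')$, then several paths would leave and enter the same vertex. If that vertex is a leaf, \ref{it:leaves} together with bijective labels forces $y=y'$.
\end{enumerate}

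With injectivity established, the map is an isomorphism from $T'$ onto the subgraph $H$ of $N$, so $N$ displays $T$. We also check that a path cannot revisit a vertex: this is immediate because $N$ is a DAG.
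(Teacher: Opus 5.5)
Your proof is correct and is essentially the paper's. The ($\Rightarrow$) direction is identical, and for ($\Leftarrow$) you use the same length-matched subdivision and get injectivity from the same three ingredients: \cref{lem:PEuniquein}, arc-disjointness (condition~(b) of \cref{def:pseudo-embedding}), and acyclicity. You only split the cases differently, into non-root versus root pairs rather than the paper's incomparable versus comparable pairs. Step~(1) would read more cleanly as ``no two distinct arcs of $T'$ have images entering a common vertex of $N$, and each non-root vertex of $T'$ is determined by its in-arc''; as written, the end-of-$\emb(xy)$/start-of-$\emb(yz')$ case involves only one entering path. Once (1) is stated that way, step~(3) is redundant.
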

\begin{proof}
  "$\Rightarrow$":
  Suppose that~$N$ displays~$T$, and
  let~$N'$ be the subgraph of~$N$ that is isomorphic to a subdivision~$T'$ of $T$,
  witnessed by an isomorphism~$\iota:V(T') \rightarrow V(N')$.
  Then,
  \begin{enumerate*}[(a)]
    \item $\iota$~is a bijection,
    \item for any~$x,y \in V(T')$, it holds that~$xy \in A(T')$ if and only if $\iota(x)\iota(y) \in A(N')$, and
    \item $\iota(\ell)$~has the same label as~$\ell$ (i.e.~$\iota(\ell)=\ell$ for any $\ell \in L(T')$).
  \end{enumerate*}
  For any arc~$xy \in A(T)$,
  let~$x_0=x,x_1,\dots, x_s = y$ denote the $x$-$y$-path in~$T'$
  (i.e.~$x_1,\dots, x_{s-1}$ are the vertices of $T'$ that subdivide the arc~$xy$ in~$T$).
  Then, define~$\emb(xy)$ to be the path~$\iota(x_0),\iota(x_1),\dots, \iota(x_s)$.
  While it is straightforward to verify that~$\emb$ satisfies the conditions of a pseudo-embedding of~$T$ into~$N$
  (see \cref{def:pseudo-embedding}), we sketch the arguments here for completeness:
  For \cref{def:pseudo-embedding}\ref{it:end is start},
  consider arcs~$xy,yz \in A(T)$ and observe that~$\emb(xy)$ ends in~$\iota(y)$ and~$\emb(yz)$ starts at~$\iota(y)$, by construction.
  For \cref{def:pseudo-embedding}\ref{it:disjoint paths},
  consider distinct arcs $uv,xy \in A(T)$, the $x$-$y$-path and the $u$-$v$-path in $T'$ are arc-disjoint by construction,
  which implies arc-disjointness of the corresponding paths in $N'$, and these paths are exactly $\emb(xy)$ and $\emb(uv)$.
  For \cref{def:pseudo-embedding}\ref{it:leaves},
  consider an arc~$x\ell \in A(T)$ with~$\ell \in L(T)$
  and observe that~$\ell \in L(T')$ and~$\emb(x\ell)$ is, by construction, a path ending in~$\iota(\ell) = \ell$.

  \medskip
  "$\Leftarrow$":
  Suppose that~$\emb$ is a pseudo-embedding of $T$ into $N$.
  Let~$N'$ be the smallest subgraph of~$N$ containing all arcs from paths in~$\emb(xy)$ for any~$xy\in A(T)$, and all vertices incident to these arcs.
  Let~$T'$ be a subdivision of~$T$ such that, for each arc~$xy \in A(T)$, the $x$-$y$-path in~$T'$ has the same length as the path~$\emb(xy)$
  (recall that $\emb$ maps \emph{all arcs} of $T$).
  Now, define the function~$\iota:V(T')\rightarrow V(N')$ as follows:
  for any arc~$xy\in A(T)$,
  consider the $x$-$y$-path $(x_0 = x, x_1, \dots x_s = y)$ in~$T'$,
  let~$\emb(xy)=:(u_0,u_1,\ldots,u_s)$, and
  define~$\iota(x_i) := u_i$ for each~$i \in\{0,1,\dots,s\}$.
  Note that~$\iota$ is well-defined since~$\iota(y)$ can only be ambiguous if $y$ is incident to two or more arcs in $T$,
  but any such~$y$ has at least one in-arc~$xy$ and at least one out-arc~$yz$
  (using the fact that the root and leaves of $T$ have total degree~one)\todo{MW: here, we need that the root has degree 1}
  so, by \cref{def:pseudo-embedding}\ref{it:end is start}, the destination of~$\emb(xy)$ equals the origin of~$\emb(yz)$,
  implying that $\iota(y)$ is not ambiguous.
  %
  To show that $N$ displays $T$, it remains to show that
  \begin{enumerate*}[(1)]
    \item $\iota$ is leaf-respecting,
    \item $\iota$ is injective, and
    \item $xy\in A(T') \iff \iota(x)\iota(y)\in A(N')$ for all~$x,y\in V(T')$.
  \end{enumerate*}
  (see \cref{def:display}).

  \smallskip
  (1):
  Consider a leaf~$\ell\in L(T')=L(T)$ and let~$T$ contain the arc~$x\ell$.
  Then, by \cref{def:pseudo-embedding}\ref{it:leaves}, $\emb(x\ell)$ ends in $\ell$,
  implying that $\iota(\ell)=\ell$.

  \smallskip
  (2):
  Towards a contradiction, assume that there are distinct~$y,y'\in V(T')$ with~$\iota(y) = \iota(y')$
  such that $\iota(y)$ is minimal wrt.~$\belo{N'}$.
  If $y \lneq_{T'} y'$ then, by construction, $\iota(y) \lneq_{N'} \iota(y')$, contradicting $\iota(y)=\iota(y')$.
  Thus, $y$ and $y'$ are incomparable in $T'$.
  In particular, neither $y$ nor $y'$ is the root of $T'$, so they have parents~$x$ and $x'$, respectively.
  By construction, $\iota(x)\iota(y), \iota(x')\iota(y')\in A(N')$.
  As $y$ and $y'$ are incomparable in $T'$, there are distinct arcs~$st, s't'\in A(T)$
  such that $y$ and $y'$ are nodes of the $s$-$t$-path and the $s'$-$t'$-path in $T'$, respectively.
  But then, $\emb(st)$ contains $\iota(x)\iota(y)$ and $\emb(s't')$ contains $\iota(x')\iota(y')$.
  Thus, $\iota(x)\ne\iota(x')$ as, otherwise, $\emb(st)$ and $\emb(s't')$ share the arc $\iota(x)\iota(y)=\iota(x')\iota(y')$,
  contradicting \cref{def:pseudo-embedding}\ref{it:disjoint paths}.
  But now, \cref{lem:PEuniquein} implies that $\iota(x')\iota(y')$ is not in $\emb(s't')$ contradicting our choice of $s't'$.



  \smallskip
  (3):
  Clearly, if $xy\in A(T')$, then $\iota(x)\iota(y)\in A(N')$ by construction.
  For “$\Leftarrow$”, consider an arc~$uv\in A(N')$.
  By~(2), $x:=\iota^{-1}(u)$ and $y:=\iota^{-1}(v)$ are unique nodes in $T'$.
  Since the arc~$uv$ has been constructed, we know that $xy\in A(T')$.
\end{proof}

\section{Tree Containment Parameterized by Scanwidth}\label{sec:TCFPT}


In this section, we show that $\TC$ can be solved in $2^{O(sw\log sw)}n^{O(1)}$~time, where $sw$~denotes the scanwidth of the network.
We first give an overview of the main ideas of our algorithm.
Throughout the remainder of this section,
let~$N$ and $T$ denote the network~$\Nh$ and~$\Tg$ given as input to an instance of \TC,
and let~$\Gamma$ be a canonical tree extension of $N$ with scanwidth $sw$.

Our algorithm is a dynamic programming algorithm that works its way up~$\Gamma$,
starting at the leaves and iteratively working upwards.
Like many treewidth-based dynamic programming algorithms,
our algorithm stores information about partial structures on subgraphs of the input graphs,
that have the potential to be extended to a full solution.
In our case, for each vertex~$v \in V(N)$,
we are interested in finding pseudo-embeddings of certain downward-closed subforests~$F$ of $T$ into the subgraph of~$N$ induced by the vertices of~$\Gamma_v$.
Along with $v$ and $F$, we index these embeddings by a mapping showing which arcs of $F$ are embedded in which arcs of~$GW_v(\Gamma)$.
A crucial observation is that, for each $v$, the number of possible forests~$F$ that need to be considered is bounded by a function of~$sw$.

\subsection{Bounding the number of downward-closed subforests}

In this section we show that there are at most~$4^k$ downward-closed subforests of $T$ with a given leaf set and at most~$k$ top arcs.
As such subforests will be used extensively in our dynamic programming algorithm, this result will be used to bound the running time and memory space required.
To prove this bound, we use the concept of important separators, first introduced in~\cite{Marx2004ParameterizedGS}.

\begin{definition}\cite{CHM2011DirectedMultiwayCut}
  Let $G$ be a directed graph.
  Given two disjoint nonempty sets $X,Y \subseteq V(G)$, we call a set $S \subseteq V(G)\setminus(X\cup Y)$ an \emph{$X - Y$ separator} if there is no path from $X$ to $Y$ in $G\setminus S$ (where $G \setminus S$ denotes the subgraph of $G$ induced by $V(G)\setminus S$).
  We say $S$ is a \emph{minimal} $X - Y$ separator if no proper subset of $S$ is an $X - Y$ separator.
  
  A minimal $X - Y$ separator $S$ is called an \emph{important $X - Y$} separator if there is no $X - Y$ separator $S'$ with $|S'| \leq |S|$ and $R^+_{G\setminus S}(X) \subset R^+_{G\setminus S'}(X)$, where $R^+_A(X)$ is the set of vertices reachable from $X$ in $A$.
\end{definition}

\begin{lemma}\cite{CHM2011DirectedMultiwayCut}\label{thm:importantSeparators}
  Let $X,Y \subseteq V(G)$ be disjoint sets in a directed graph~$G$.
  Then, for every $k \geq 0$ there are at most $4^k$ important $X - Y$ separators of size at most $p$.
  Furthermore, we can enumerate all these separators in time $O(4^k\cdot k(|V(G)+|A(G)|))$.
\end{lemma}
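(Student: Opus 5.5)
The plan is the standard branching argument of Chen, Liu and Lu, as presented by \citet{CHM2011DirectedMultiwayCut}. I read the statement with $p=k$: the bound concerns important separators of size at most $k$. Throughout, $\lambda(X,Y)$ denotes the minimum size of an $X-Y$ separator. By Menger's theorem this equals the maximum number of internally vertex-disjoint $X$-$Y$ paths, and it can be computed in $O(k(|V(G)|+|A(G)|))$ time by at most $k+1$ augmenting-path rounds. If some arc goes directly from $X$ to $Y$, no separator exists and there is nothing to prove, so I assume no such arc. The potential I would use is $\mu := 2k-\lambda(X,Y)$. I will show that the number of important separators of size at most $k$ is at most $2^{\mu}\le 4^k$, by induction on $\mu$.

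The first step is a structural fact about the \emph{furthest} minimum separator. Among all minimum $X-Y$ separators there is a unique one, $S^*$, whose reachable set $R^+_{G\setminus S^*}(X)$ is inclusion-maximal. This follows from submodularity of the vertex-cut function $|N^+(R)|$ over sets $R\supseteq X$ avoiding $Y$, since the union of two minimum-cut reachable sets is again a minimum-cut reachable set. I then claim that every important separator $S$ satisfies $R^+_{G\setminus S^*}(X)\subseteq R^+_{G\setminus S}(X)$. The argument is an uncrossing step. Let $R=R^+_{G\setminus S}(X)$ and $R^*=R^+_{G\setminus S^*}(X)$. Submodularity gives $|N^+(R\cup R^*)|+|N^+(R\cap R^*)|\le|S|+|S^*|$. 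Minimality of $S^*$ gives $|N^+(R\cap R^*)|\ge|S^*|$. Hence $N^+(R\cup R^*)$ is a separator of size at most $|S|$ whose reachable set contains $R\cup R^*$. Importance of $S$ then forces $R\cup R^*=R$. I expect this to be the main obstacle, because it needs care with the vertex version: separators must avoid $X\cup Y$, and $N^+$ must be taken in the correct graph.

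Given this fact, I would branch as follows. Replace $X$ by $X':=R^*$. This does not change the set of important separators, and now $\lambda(X',Y)=\lambda(X,Y)$ while $S^*=N^+(X')$. If $\lambda>k$, output nothing. If $S^*=\emptyset$, only the empty separator is relevant. Otherwise pick any $v\in S^*$. Every important separator $S$ either contains $v$ or does not.

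In the first case, $S\setminus\{v\}$ is an important $X'-Y$ separator in $G-v$ of size at most $k-1$. Here $\lambda$ drops by at most $1$ while $k$ drops by exactly $1$, so $\mu$ decreases by at least $1$.

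In the second case, $v\in R^+_{G\setminus S}(X')$, so $S$ is an important $(X'\cup\{v\})-Y$ separator. Because $X'$ was the furthest minimum cut, $\lambda(X'\cup\{v\},Y)>\lambda$, so $\mu$ decreases by at least $1$.

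If $\mu<0$, then $\lambda>k$, so there are no solutions. Induction therefore yields at most $2^{\mu}\le 2^{2k}=4^k$ important separators. The recursion tree has at most $4^k$ leaves and depth at most $2k$. Each node performs one flow computation of $O(k(|V(G)|+|A(G)|))$ time, because the previous flow can be augmented, and this gives the stated enumeration time. Any non-important separators produced at the leaves are filtered out by checking importance against the other outputs.
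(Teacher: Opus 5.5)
The paper does not prove this lemma; it only cites it, and the $p$ is indeed a typo for $k$. Your counting argument is the standard Chen--Liu--Lu branching proof from that literature, and it is correct in the directed vertex setting. The uncrossing step giving $R^+_{G\setminus S^*}(X)\subseteq R^+_{G\setminus S}(X)$ for every important $S$ works, as do the replacement of $X$ by $R^*$ and the two branches that each lower $2k-\lambda$. For the size of the recursion tree you must also count the dead-end leaves where $\lambda>k$. The same induction, with $2^{\max(\mu,0)}$ in place of $2^\mu$, bounds them by $4^k$ as well.

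The step that fails is the last one. Your branching returns up to $4^k$ candidates, a superset of the important separators. Filtering them by comparing each candidate against the other outputs takes up to $16^k$ comparisons, which is incompatible with the claimed $O(4^k\cdot k(|V(G)|+|A(G)|))$ time. You need a test that examines one candidate at a time.

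The standard test is the following. With $R:=R^+_{G\setminus S}(X)$, a separator $S$ is important if and only if it is the unique minimum $R-Y$ separator.
\begin{itemize}
\item If $S$ is important and $S'$ is an $R-Y$ separator with $|S'|\le|S|$, then $S'$ is an $X-Y$ separator whose reachable set contains $R$. By importance that set equals $R$, so $S=N^+(R)\subseteq S'$ and hence $S'=S$.
\item Conversely, suppose some $S'$ with $|S'|\le|S|$ had reachable set $R'\supsetneq R$. Then $N^+(R')$ would be an $R-Y$ separator of size at most $|S|$, hence equal to $S$ by uniqueness. Since $R'$ is reachable from $X$ while avoiding $N^+(R')=S$, this forces $R'\subseteq R$, a contradiction. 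Minimality of $S$ follows in the same way.
\end{itemize}
Since $|S|\le k$, the test needs at most $k+1$ augmenting paths and one search in the residual graph, to check that the furthest minimum $R-Y$ separator is $S$ itself. That is $O(k(|V(G)|+|A(G)|))$ time per candidate, which yields the stated bound.
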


\begin{lemma}\label{lem:subforestBound}
  Let $L'$ be any subset of $L(T)$.
  Then the number of downward-closed subforests of $T$ with at most $k$ top arcs and leaf set $L'$ is at most $4^k$.
\end{lemma}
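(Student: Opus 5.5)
The plan is to count directly rather than go through \cref{thm:importantSeparators}. A first attempt to realise each subforest as an important separator in the reversed tree fails: two sibling top arcs can be replaced by the single arc entering their parent without increasing the size. That gives a larger reachable set, so a valid top-arc set need not be important. Instead I will encode each admissible top-arc set by a short binary string.

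\emph{Step 1 (reformulation).} Each arc of $T$ is determined by its head, since every non-root node has a unique in-arc. So a top-arc set~$S$ corresponds to a set~$H$ of non-root nodes of $T$, namely the heads. By \cref{obs:top-arcs}(e), $H$ is an anti-chain, and by \cref{obs:top-arcs}(b) distinct subforests have distinct top-arc sets. The leaf set of $T_S$ is $L'$ exactly when two conditions hold. First, every $\ell\in L'$ has an ancestor in $H$, and by the anti-chain property this ancestor is unique. Second, no $h\in H$ has a leaf of $L(T)\setminus L'$ below it. Let $Z$ be the set of ancestors of leaves in $L(T)\setminus L'$, together with the root of $T$. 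Then $H$ avoids $Z$. The nodes outside $Z$ induce a forest whose component roots $r_1,\dots,r_m$, taken in a fixed order, are the maximal nodes all of whose descendant leaves lie in $L'$. Hence $H$ is the disjoint union of sets $H_i$, where $H_i$ is a ``cut'' of the subtree $T_{r_i}$: an anti-chain meeting every root-to-leaf path of $T_{r_i}$ exactly once. Since each $H_i$ is non-empty, $m\le |H|\le k$. The case $L'=\emptyset$ is trivial.

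\emph{Step 2 (encoding a cut).} For a cut $H_i$ of $T_{r_i}$, consider the set $U_i$ of nodes of $T_{r_i}$ lying on or above some node of $H_i$. Then $U_i$ is a rooted subtree of $T_{r_i}$ with leaf set exactly $H_i$. Whenever a node of $U_i\setminus H_i$ is included, all of its children are included, because otherwise some root-to-leaf path would miss $H_i$. Every non-root node of $T$ other than a leaf has out-degree at least two, and the root lies in $Z$. Hence every internal node of $U_i$ has at least two children, and $|U_i|\le 2|H_i|-1$. I encode $U_i$ by a depth-first traversal of $T_{r_i}$ in a fixed child order. Each visited node receives one bit, ``expand'' (it lies in $U_i\setminus H_i$ and we descend into all its children) or ``stop'' (it lies in $H_i$). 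Exactly the nodes of $U_i$ are visited, so this produces a string of length $|U_i|\le 2|H_i|-1$ that determines $H_i$ uniquely.

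\emph{Step 3 (counting).} I concatenate the strings for $i=1,\dots,m$ in the fixed order. Decoding is unambiguous, because each traversal ends exactly when the DFS of $T_{r_i}$ is exhausted. The total length is at most $\sum_i(2|H_i|-1)\le 2k-1$. The number of binary strings of length at most $2k-1$ is $\sum_{\ell=0}^{2k-1}2^\ell<4^k$, which gives the bound of \cref{lem:subforestBound}.

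The main point requiring care is the bijection in Step 1. I must check that the top arcs of $T_S$ are exactly the in-arcs of $H$. I must also check that the condition ``leaf set exactly $L'$'' translates precisely into $H\cap Z=\emptyset$ together with $H$ being a cut of every $T_{r_i}$. Handled carelessly, the degree-one root, or a node sitting in $Z$ as an ancestor of some leaf outside $L'$, could break this translation. Once that correspondence is settled, the out-degree $\ge 2$ property makes the counting routine.
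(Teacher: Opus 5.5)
Your proof is correct and is essentially the paper's argument. The paper enumerates top-arc sets with a search tree that starts from the maximal admissible arcs and branches on ``mark $h$'' versus ``replace $h$ by its child arcs''; your expand/stop string is exactly a root-to-leaf path of that search tree, and out-degree at least two bounds its length by $2k$ in both versions, while your explicit decomposition into cuts of the subtrees $T_{r_i}$ and your decoding-based injectivity argument make the completeness step (which the paper handles by a minimal-counterexample case analysis) more transparent and give the marginally sharper bound $4^k-1$.
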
 
\begin{proof}[Proof.\footnotemark{}]\footnotetext{Alternatively, \cref{lem:subforestBound} can be proved using important separators.}
  We show that all top-arc set (which corresponds one-to-one with the downward-closed subforests) of $T$
  can be enumerated with a search-tree with $4^k$ leaf-nodes.
  Let $H$ be the set of maxima, with respect to $\belo{T}$, of the union of all top-arc sets of size at most~$k$ for $L'$.
  Note that $H$ is an anti-chain wrt.~$\belo{T}$ and $|H|\leq k$ as,
  otherwise, for each top-arc set~$S$ for $L'$, there is some arc~$uv\in H$ such that all maximal paths in $T$ that contain $uv$ avoid $S$;
  but, by construction, some top-arc set for~$L'$ contains~$uv$, so there is some leaf~$\ell\in L(T_v)\cap L'$,
  contradicting $S$ being a top-arc set for $L'$.
  Now, as long as $|H|\leq k$ and $H$~contains unmarked arcs, we take the minimum unmarked arc~$h$ of $H$
  with respect to any fix total order of $A(T)$ and recursively either
  \begin{enumerate*}[(a)]
    \item mark~$h$, or
    \item replace~$h$ by the set of child-arcs of $h$ (unless the head of $h$ is a leaf).
  \end{enumerate*}
  In each step, we either mark an arc of $H$ or increase the size of $H$ (since no non-root in $T$ has out-degree one),
  so the corresponding search-tree has depth~$\leq 2k$ and, thus, at most $4^k$ leaf-nodes.
  Further, if all arcs of $H$ are marked in any search-tree leaf, then $H$ forms a top-arc set of size at most $k$ for $L'$.
  It remains to show that this procedure produces all top-arc sets for $L'$ of size at most~$k$.
  Towards a contradiction, assume that some top-arc set~$S$ for~$L'$ of size at most~$k$ is not enumerated by the above process and
  let the size of $S$ be minimum among all such sets.
  
  \textbf{Case 1}: There is some non-leaf~$u\in V(T)$ such that all outgoing arcs of $u$ are in $S$.
  Let $S'$ be the result of replacing all outgoing arcs of $u$ by the unique incoming arc of $u$ in $T$ and
  note that $|S'| < |S|$.
  By minimality of $|S|$, we know that $S'$ was enumerated by the above procedure.
  But then, $S$ is enumerated by applying branching rule~(b) to $S'$, contradicting the fact that $S$ is not enumerated.

  \textbf{Case 2}: No non-leaf of $T$ has all its outgoing arcs in $S$.
  Then, all tails of arcs in $S$ have a path to leaf outside of $L'$ in $T$,
  so all arcs of all top-arc sets for $L'$ are in $T_S$.
  But then, $S=H$ by construction of $H$, so $S$ is enumerated by choosing branch~(a) each time.
\end{proof}

\begin{corollary}
For any $v \in V(N)$, the number of downward-closed subforest of $T$ with at most $k$ top arcs and leaf set $\Gamma_v \cap L(T)$ is at most $4^k$.
\end{corollary}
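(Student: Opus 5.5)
This corollary is a direct instantiation of \cref{lem:subforestBound}. I would set $L' := L(\Gamma_v)\cap L(T)$, where $\Gamma_v \cap L(T)$ in the statement is read as the set of leaves of $T$ that are nodes of the subtree $\Gamma_v$. Since $L(T)=L(N)$ and every node of $\Gamma_v$ is a node of $N$, this set is simply $V(\Gamma_v)\cap L(N)$. This is a subset of $L(T)$, so \cref{lem:subforestBound} applies verbatim. It gives at most $4^k$ downward-closed subforests of $T$ with at most $k$ top arcs and leaf set $L'$, which is exactly the claim.

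The only point needing care is that $L'$ is a legitimate choice of leaf set, i.e.\ that $\Gamma_v \cap L(T)$ really is a subset of $L(T)$. This holds by definition of intersection, so nothing else has to be checked. In particular, the lemma holds for every subset $L'$, including the empty set (for which the count is trivially at most $1\le 4^k$). We also do not need to know whether $\Gamma_v$ contains non-leaf nodes of $N$ that are leaves of $\Gamma$, since by the observation on canonical tree-extensions the leaf-sets of $N$ and $\Gamma$ coincide, and in any case the intersection with $L(T)$ removes every node that is not a leaf of $T$.

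I do not expect any real obstacle. The bound depends only on $k$ and not on $v$, so it holds uniformly over all $v\in V(N)$. This uniformity is the form the dynamic program will use later: summing over all $v$ gives at most $4^k n$ candidate forests in total.
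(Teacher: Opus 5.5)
Your proposal is correct and is exactly the intended argument: the paper states this corollary without a separate proof, as the instantiation of \cref{lem:subforestBound} with $L' := V(\Gamma_v)\cap L(T)\subseteq L(T)$. Your side remark on $L'=\emptyset$ is harmless, and that case never arises, since every subtree $\Gamma_v$ contains a leaf of $\Gamma$ and hence of $N$.
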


\begin{figure}[t]
  \centering
  \begin{tikzpicture}[xscale=.5, yscale=.6]
    \tikzstyle{B1}=[bold, red]
    \tikzstyle{B2}=[bold, blue]
    \tikzstyle{B3}=[bold, darkgreen]
    \tikzstyle{fade}=[opacity=.3]
    \node[smallvertex] (rt) at (0,0) {};
    \nextnode[vertex, label=left:$x$]{0}{rt}{-135:2}{revarc}
    \nextnode[vertex, label=left:$y$]{00}{0}{-135:2}{revarc}
    \nextnode[vertex, label=left:$z$]{000}{00}{-135:2}{revarc, B3}
    \nextnode{0000}{000}{-135:2}{revarc}
    \nextnode[leaf, label=below:$a$]{l0}{0000}{-135:2}{revarc}
    \nextnode[leaf, label=below:$b$]{l1}{0000}{-45:2}{revarc}

    \nextnode[reti]{0001}{000}{-45:2}{revarc}
    \nextnode[leaf, label=below:$c$]{l2}{0001}{-90:1.42}{revarc}
    
    \nextnode[vertex, label=left:$v$]{001}{00}{-45:2}{revarc, B2}
    \nextnode[reti]{0011}{001}{-45:2}{revarc}
    \nextnode[leaf, label=below:$d$]{l3}{0011}{-90:1.42}{revarc}
    \nextnode[reti]{01}{0}{-45:2}{revarc}
    
    \nextnode{1}{rt}{-45:2}{revarc}
    \nextnode[leaf, label=below:$e$]{l4}{1}{-45:2}{revarc}

    \foreach \u/\v/\s in {001/0001/, 01/0011/B1, 1/01/} \draw[arc, \s] (\u) -- (\v);

    \coordinate (cutoff) at ($(00)!.5!(000)$);
    \draw[dashed] ($(cutoff)+(180-7:1)$) --++(-7:5.5) node[anchor=west] {$GW_v(\Gamma)=B$};

    \node[smallvertex] (Grt) at (12,0) {};
    \foreach [count=\i from 0] \s/\l in {/, /x, reti/, /y, /v, /z, /}
      \node[vertex, small\s, label=135:\ensuremath{\l}] (v\i) at ($(Grt)+(-135:1.3*\i+1.3)$) {};
    \node[smallreti] (r1) at ($(v4)+(-45:1.3)$) {};
    \node[smallreti] (r2) at ($(v5)+(-45:1.3)$) {};
    \foreach \n/\p in {b/v6, c/r2, d/r1, e/v0}{\nextnode[leaf, label=below:$\n$]{Gl\n}{\p}{-45:1.3}{revarc}}
    \nextnode[leaf, label=below:$a$]{Gla}{v6}{-135:1.3}{revarc}
    \foreach \u/\v/\b/\s in {Grt/v0/0/, Grt/v1/30/, v1/v2/0/, v0/v2/-30/, v1/v3/30/, v3/v4/0/B2, v4/r1/0/, v3/v5/30/B3, v5/r2/0/, v5/v6/0/}
    \draw[arc, \s] (\u) to[bend right=\b] (\v);
    \draw[arc, B1] (v2) to[out=-100,in=45] ($(v4)+(.5,.2)$) to[out=-135,in=110] (r1);
    \draw[arc] (v4) to[out=-135,in=90] ($(v5)+(.3,.1)$) to[out=-90,in=110] (r2);
    \begin{pgfonlayer}{background}
      \draw[lightgray, line width=6pt] (v6) -- (Grt);
      \foreach \u/\l in {v0/e, v4/d, v5/c, v6/b, v6/a} \draw[lightgray, line width=6pt] (\u) -- (Gl\l.center);
    \end{pgfonlayer}

    \node[smallvertex, fade] (Trt) at (19,0) {};
    \nextnode{w0}{Trt}{-135:1.6}{revarc, fade}
    \nextnode{w1}{w0}{-135:1.6}{revarc, B3}
    \nextnode{w2}{Trt}{-45:2.4}{revarc, fade}

    \foreach \n/\p/\a/\s/\ns in {a/w1/90//, b/w1/0//, c/w0/0/B2/, d/w2/90/B1/, e/w2/0/fade/fade}{
      \nextnode[leaf, \ns, label=below:$\n$]{Tl\n}{\p}{-45-\a:1.2}{revarc, \s}
    }
    \coordinate (de) at ($(Tld)!.5!(Tle)$);
    \draw[dashed, rounded corners=6] ($(de)+(0,-1)$) -- (de) -- ($(Tlc)+(45:1)$) -- ($(w0)!.5!(Tlc)$) -- ($(w0)!.5!(w1)$) --++(-1,0) node[anchor=east] {$S$};

    \node[smallvertex] (Qrt) at (19,-4.5) {};
    \nextnode{q0}{Qrt}{-135:1.6}{revarc, B3}
    \nextnode{q1}{q0}{-135:1.6}{revarc}
    \nextnode{q2}{Qrt}{-45:2.4}{revarc, fade}

    \foreach \n/\p/\a/\s/\ns in {a/q1/90//, b/q1/0//, c/q0/0//, d/q2/90/B1/, e/q2/0/fade/fade}{
      \nextnode[leaf, \ns, label=below:$\n$]{Ql\n}{\p}{-45-\a:1.2}{revarc, \s}
    }
    \draw[dashed, rounded corners=6] ($(Qld)!.5!(Qle)$) --++(135:3) --++(-1,0) node[anchor=east] {$S'$};

  \end{tikzpicture}
  \caption{%
  Illustration of signatures $(v,S,\psi,\outdeg[\Gamma]{v})$ and $(v,S',\psi',\outdeg[\Gamma]{v})$,
  where
  $\psi$ and $\psi'$
  map each of the colored arcs marked as $S$ (top right) and $S'$ (bottom right), respectively, to the arc of the same color in $GW_v(\Gamma)$ (left, middle).
  Note that $S'$ does not contain blue arcs, so $\psi'$ does not map blue arcs.
  {\bf Left}: input network~$N$ with arcs of $GW_v(\Gamma)$ in bold and color.
  {\bf Middle}: tree extension~$\Gamma$ (thick gray) into which the arcs of~$N$ have been drawn.
  {\bf Right}: two copies of the input tree $T$ together with the top-arc sets~$S$ (top) and~$S'$ (bottom) in bold,
  with their corresponding downward-closed forests (non-faded arcs).
  Note that both $\phi$ and $\phi'$ are valid.
  }
  \label{fig:DP table}
\end{figure}

\subsubsection{Algorithm Presentation}

%
Unsurprisingly, our algorithm is a bottom-up dynamic programming on the given canonical tree-extension~$\Gamma$,
which we state in the form of a recursive algorithm to which memoization is to be applied
to avoid recomputing previously computed results.
Given a node~$v$ of the network with “bag”~$GW_v(\Gamma)$ of size~$k$,
as well a top-arc set~$S$ of size at most~$k$ for $L(\Gamma_v)$ in $T$ and
an appropriate (no two arcs in $\psi(S)$ have the same head) injection $\psi:S\to GW_v(\Gamma)$,
we distribute all arcs of $GW_v(\Gamma)$, that do not have~$v$ as their head, onto the children of~$v$ in~$\Gamma$.
Then, for the unique arc~$uv\in GW_v(\Gamma)\cap\psi(S)$ (if it exists), we distinguish the cases that the embedding of the arc of $T$
that uses $uv$ has $v$ as its destination or not (see \cref{fig:DP}).
In the first case, we distribute the child-arcs of $\psi^{-1}(uv)$ onto the bags of the children of $v$ in $\Gamma$.
In the second case, we “continue” the embedding of $\psi^{-1}(uv)$ in the appropriate child of $v$ in $\Gamma$.
To support networks with high out-degree, we distribute arcs using a dynamic programming over the children~$q_j$ of $v$ in $\Gamma$
that enforces baseline properties for $j=0$ and, at each child~$q_j$, picks up all arcs~$xy\in S$ with the correct leaf-set,
that is, with $L(T_y)\subseteq L(\Gamma_{q_j})$.
Intuitively,
if it is possible to distribute all arcs in $S$
onto the children of~$v$ in $\Gamma$ without violating this leaf-constraint,
then we call~$S$ “$v$-splittable”.

\begin{definition}\label{def:v-split}
  Let~$v$ be a node of $N$ with children~$q_1,q_2,\ldots$ in $\Gamma$ and
  let~$Y\subseteq A(T)$.
  Let~$Y_i=\{xy\in Y \mid L(T_y)\subseteq L(\Gamma_{q_i})\}$ for all~$1\leq i\leq\outdeg[\Gamma]{v}$.
  Then, we call $(Y_1,Y_2,\ldots)$ a \emph{$v$-split} of $Y$.
  Further, if $Y = \bigcup_i Y_i$, then we say that $Y$~is \emph{$v$-splittable} into~$(Y_1,Y_2,\ldots)$.
  If $v$ is a leaf in $\Gamma$, we define~$\emptyset$ as $v$-splittable.
\end{definition}

\noindent
Note
that the sets~$Y_i$ form a partition of $Y$ and
that $\emptyset$ admits the $v$-split $(\emptyset,\emptyset,\ldots)$ for all~$v$.

Our algorithm takes as input a function~$\psi$ mapping top-arc sets~$S$ in $T$ into $A(N)$.
We~call an injective function~$\psi:S\to A(N)$ \emph{valid} if, for all~$xy\in S$ and $wz := \psi(xy)$, we have $L(T_y)\subseteq L(\Gamma_z)$.
Note that validity is hereditary, that is, all restrictions of valid functions are themselves valid.
In the following, we let~$\replace{\psi}{xy}{vw}$ denote the result of re-pointing~$xy$ to~$vw$ or,
more formally, $\replace{\psi}{xy}{vw}:=(\psi\setminus (xy,\psi(xy)))\cup\{(xy,vw)\}$.


\begin{definition}\label{def:witness}\label{def:signature}
  Let~$v$ be a node of~$N$.
  let~$q_1,\dots q_c$ be the children of $v$ in $\Gamma$ (possibly $c=0$), and
  let~$j\leq c$.
  Let~$S$ be a top-arc set of $L(\Gamma_v)$,\todo{MW: size at most~$|GW_v(\Gamma)|$?}
  let~$\psi:S\to GW_v(\Gamma)$ be a valid function.
  Then, we call the tuple~$(v,S,\psi,j)$ a \emph{signature}
  and we define~$\chi[v,S,\psi,j]:=\textsc{True}$
  if and only if
  \begin{enumerate}[(a)]
    \item\label{it:j0}
      $j=0$ and $S$ is $v$-splittable or $S\setminus\{xy\}$ is $v$-splittable for some $xy\in S$ with $\psi(xy)\in\inA[N]{v}$.
    \item\label{it:v-split}
      $j>0$, $S$ is $v$-splittable into $(S_1,S_2,\ldots)$ and
      there is a pseudo-embedding~$\emb$ of $T_{\bigcup_{i\leq j}S_i}$ into~$N$ such that,
      for each~$wz \in \bigcup_{i\leq j}S_i$, the path~$\emb(wz)$ begins with~$\psi(wz)$ (see \cref{fig:DP}(left, middle)), or
    \item\label{it:S-xy v-split}
      $j>0$ and there is some~$xy\in S$ with $\psi(xy)\in\inA[N]{v}$ such that
      $S\setminus\{xy\}$ is $v$-splittable into $(S_1,S_2,\ldots)$ and
      $\outA[T]{y}$ is non-empty and $v$-splittable into~$(Y_1,Y_2,\ldots)$ and
      there is a pseudo-embedding~$\emb$ of $T_{\bigcup_{i\leq j}S_i\cup Y_i}$ into~$N$ such that,
      for each~$wz \in \bigcup_{i\leq j}S_i$, the path~$\emb(wz)$ begins with~$\psi(wz)$ and
      for each~$yz \in \bigcup_{i\leq j}Y_i$, the startpoint of $\emb(yz)$ is $v$
      (see \cref{fig:DP}(right)).
  \end{enumerate}
  If $j>0$, then we call $\emb$ a \emph{witnessing pseudo-embedding} (or \emph{witness}) for $(v,S,\psi,j)$.
\end{definition}

\begin{figure}[t]
  \centering
  \begin{tikzpicture}
    \node[smallvertex, label=right:$v$] (v) at (0,0) {} edge[lightgray, gamma] ($(v)+(90:1)$);
    \foreach[count=\i from 0] \l in {,q_j,} {
      \coordinate (q\i) at ($(v)+(-1+\i,-1)$);
      \path[fill=lightgray] (q\i.center) -- ($(q\i)+(-70:1)$) -- ($(q\i)+(-110:1)$) -- cycle;
      \node[smallvertex, label=left:\ensuremath{\l}] at (q\i) {} edge[lightgray, gamma] (v);
    }
    \draw[arc] ($(v)+(90:1)-(.2,0)$) .. controls ($(v)+(0,-.5)$) and ($(q0)+(-.4,1)$) .. ($(q0)+(-100:.7)$);
  \end{tikzpicture}
  \hspace{6ex}
  \begin{tikzpicture}
    \node[smallvertex, label=right:$v$] (v) at (0,0) {} edge[lightgray, gamma] ($(v)+(90:1)$);
    \foreach[count=\i from 0] \l in {,q_j,} {
      \coordinate (q\i) at ($(v)+(-1+\i,-1)$);
      \path[fill=lightgray] (q\i.center) -- ($(q\i)+(-70:1)$) -- ($(q\i)+(-110:1)$) -- cycle;
      \node[smallvertex, label=left:\ensuremath{\l}] at (q\i) {} edge[lightgray, gamma] (v);
    }
    \draw[arc] ($(v)+(90:1)+(.2,0)$) to[bend left=20] node[midway, xshift=4ex]{$\psi(xy)$} (v);
    \coordinate (q) at ($(q1)+(-90:.5)$);
    \draw[arc] (v) to[bend left=10] (q);
    \draw[arc] (q) to ($(q)+(-60:.4)$);
    \draw[arc] (q) to ($(q)+(-120:.4)$);
    \draw[arc] ($(v)+(90:1)-(.2,0)$) .. controls ($(v)+(0,-.5)$) and ($(q0)+(-.4,1)$) .. ($(q0)+(-100:.7)$);
  \end{tikzpicture}
  \hspace{6ex}
  \begin{tikzpicture}
    \node[smallvertex, label=right:$v$] (v) at (0,0) {} edge[lightgray, gamma] ($(v)+(90:1)$);
    \foreach[count=\i from 0] \l in {,q_j,} {
      \coordinate (q\i) at ($(v)+(-1+\i,-1)$);
      \path[fill=lightgray] (q\i.center) -- ($(q\i)+(-70:1)$) -- ($(q\i)+(-110:1)$) -- cycle;
      \node[smallvertex, label=left:\ensuremath{\l}] at (q\i) {} edge[lightgray, gamma] (v);
    }
    \draw[arc] ($(v)+(90:1)+(.2,0)$) to[bend left=20] node[midway, xshift=4ex]{$\psi(xy)$} (v);
    \draw[arc] (v) to[bend left=20] ($(q1)+(-90:.7)$);
    \draw[arc] (v) to[bend left=-20] ($(q0)+(-90:.7)$);
    \draw[arc] ($(v)+(90:1)-(.2,0)$) .. controls ($(v)+(0,-.5)$) and ($(q0)+(-.4,1)$) .. ($(q0)+(-100:.7)$);
  \end{tikzpicture}
  \caption{Illustration of the three main cases occurring in the algorithm.
    $\Gamma$ is depicted in gray, and parts of an embedding of $T$ are depicted in solid black.
    Left: no arc of $S$ is mapped to a path containing~$v$ by the embedding.
    Middle: $xy\in S$ is mapped to a path of which~$v$ is an inner node.
    Right: $xy\in S$ is mapped to a path ending in~$v$.
  }
  \label{fig:DP}
\end{figure}

\noindent
For $j=\outdeg[\Gamma]{v}$, \cref{def:witness} collapses to the following, simpler characterization, which we use in the following.

\begin{lemma}\label{lem:last-entry}
  Let~$(v,S,\psi,\outdeg[\Gamma]{v})$ be a signature.
  Then,
  $\chi[v,S,\psi,\outdeg[\Gamma]{v}]=\textsc{True}$
  if and only if
  there is a pseudo-embedding~$\emb$ of $T_S$ into $N$ such that $\emb(a)$ begins with $\psi(a)$ for all~$a\in S$.
\end{lemma}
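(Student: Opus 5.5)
We write $c:=\outdeg[\Gamma]{v}$ and argue by unfolding \cref{def:witness} for $j=c$. The key fact is that for $j=c$ we have $\bigcup_{i\le c}S_i=S$ (respectively $=S\setminus\{xy\}$) whenever $S$ (resp.\ $S\setminus\{xy\}$) is $v$-splittable into $(S_1,\dots)$, and likewise $\bigcup_{i\le c}Y_i=\outA[T]{y}$. So the witnesses in cases~\ref{it:v-split} and~\ref{it:S-xy v-split} are pseudo-embeddings of $T_S$, respectively $T_{(S\setminus\{xy\})\cup\outA[T]{y}}$.

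The degenerate case $c=0$, where $v$ is a leaf of $\Gamma$ and hence of $N$, is handled first. Here case~\ref{it:j0} applies, so $\chi$ is true iff $S=\emptyset$ or $S=\{xy\}$ with $\psi(xy)\in\inA[N]{v}$. By validity, $y$ is then the leaf $v$, and the one-arc path $\psi(xy)$ is a pseudo-embedding. Conversely, any pseudo-embedding whose paths start in $GW_v(\Gamma)$ must have all these paths end in $v$. By \cref{lem:PEuniquein} and injectivity of $\psi$, at most one such path exists.

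For $c>0$ and ``$\Rightarrow$'': in case~\ref{it:v-split} the witness already is the desired pseudo-embedding of $T_S$. In case~\ref{it:S-xy v-split} we extend the witness $\emb$ by setting $\emb(xy):=\psi(xy)=uv$, a path of length one. We then check \cref{def:pseudo-embedding}:
\begin{itemize}
  \item \ref{it:end is start} holds because every path of a child arc of $y$ starts at $v$.
  \item \ref{it:leaves} is vacuous for $xy$, since $\outA[T]{y}\ne\emptyset$ means $y$ is not a leaf.
  \item \ref{it:disjoint paths} requires that no other path contains $uv$. Paths starting at $v$ cannot contain $uv$ by acyclicity. A path beginning with $\psi(wz)$ for $wz\ne xy$ starts at an arc whose head lies in $\Gamma_v$. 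Every later vertex is an $N$-descendant of that head and hence lies in $\Gamma_v$. Since $u\abov{\Gamma}v$, the vertex $u$ is not in $\Gamma_v$, so the path cannot contain $uv$. Injectivity of $\psi$ covers the first arc.
\end{itemize}

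For ``$\Leftarrow$'', let $\emb$ be a pseudo-embedding of $T_S$ with $\emb(a)$ starting with $\psi(a)$. Take $a=wz\in S$ and let $h$ be the head of $\psi(a)$, so $h\in\Gamma_v$. Suppose $h\ne v$, so $h\in\Gamma_{q_i}$ for some $i$. Then $\emb(z)\beloeq{N}h$, hence $\emb(z)\beloeq{\Gamma}h$. By \cref{obs:emb-leaves}, $L(T_z)\subseteq L(\Gamma_{\emb(z)})\subseteq L(\Gamma_{q_i})$, so $a\in S_i$. By injectivity of $\psi$ and \cref{lem:PEuniquein}, at most one arc $xy\in S$ has $\psi(xy)\in\inA[N]{v}$.

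Suppose such an $xy$ exists and $\emb(xy)$ continues beyond $v$ via an arc $vw$. Then $w\in\Gamma_{q_i}$ for some $i$, and the same argument places $xy$ into $S_i$. So $S$ is $v$-splittable and case~\ref{it:v-split} holds with witness $\emb$.

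Otherwise $\emb(y)=v$. Since $v$ has $\Gamma$-children, it has out-arcs in $N$ and is therefore not a leaf. By \ref{it:leaves}, $y$ is then not a leaf, so $\outA[T]{y}\ne\emptyset$. Every path $\emb(yz)$ starts at $v$ with an arc $vw$ where $w\in\Gamma_{q_i}$. The leaf argument then shows that $\outA[T]{y}$ is $v$-splittable. Hence case~\ref{it:S-xy v-split} holds with $\emb$ as witness, using \cref{obs:subforest-pseudoemb} for the restriction.

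The main obstacle is the bookkeeping for arc-disjointness in the forward direction. It rests on the observation that every $N$-path starting inside $\Gamma_v$ stays inside $\Gamma_v$, because $\belo{N}\subseteq\belo{\Gamma}$. The same fact underlies the leaf-containment step, so we would state it once as an auxiliary claim and use it in both places.
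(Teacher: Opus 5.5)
Your proof is correct and takes essentially the same route as the paper. You unfold \cref{def:witness} at $j=\outdeg[\Gamma]{v}$ and extend a case-(c) witness by the one-arc path $\psi(xy)$; for the converse you use \cref{obs:emb-leaves} to place every arc of $S$ into some $S_i$, except possibly the unique arc entering $v$. Your disjointness argument via tails lying in $\Gamma_v$, with injectivity covering the first arc, is if anything tighter than the paper's head-based one.

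There is one slip to fix. Your ``Otherwise $\emb(y)=v$'' branch presupposes that some $xy\in S$ has $\psi(xy)\in\inA[N]{v}$, so state the case where no such arc exists explicitly. Your first paragraph already shows that $S$ is then $v$-splittable, which gives case~(b) of \cref{def:witness}.
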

\begin{proof}
  “$\Rightarrow$”:
  Let~$j:=\outdeg[\Gamma]{v}$.
  We consider the three cases of \cref{def:witness}.

  \textbf{Case~(1)}:
  $\outdeg[\Gamma]{v}=0$, $v$ is a leaf, $S=\{xy\}$ and $v=y$.
  But then, $\emb=\psi$ is a pseudo-embedding of $T_{xy}$ into $N$ and $\emb(xy)$ begins with $\psi(xy)$.

  \textbf{Case~(2)}:
  $j>0$ and $S$ is $v$-splittable into~$(S_1,S_2,\ldots)$ and
  there is a pseudo-embedding~$\emb'$ of $T_{\bigcup_i S_i}$ into~$N$ and,
  for each~$wz\in\bigcup_i S_i$, the path~$\emb'(wz)$ begins with $\psi(wz)$.
  But since~$S=\bigcup_i S_i$, the claim follows for $\emb=\emb'$.

  \textbf{Case~(3)}:
  $j>0$ and there is some~$xy\in S$ with $\psi(xy)\in\inA[N]{v}$ such that
  $S\setminus\{xy\}$ is $v$-splittable into $(S_1,S_2,\ldots)$ and
  $\outA[T]{y}$ is non-empty and $v$-splittable into~$(Y_1,Y_2,\ldots)$ and
  there is a pseudo-embedding~$\emb'$ of $T_{\bigcup_i S_i\cup Y_i}$ into~$N$ such that
  for each~$wz \in \bigcup_i S_i$, the path~$\emb'(wz)$ begins with~$\psi(wz)$ and
  for each~$yz \in \bigcup_i Y_i$, the startpoint of $\emb'(yz)$ is $v$.
  We show that $\emb:=\emb'\cup\{(xy,\psi(xy))\}$ satisfies the claim
  by proving the conditions of \cref{def:pseudo-embedding} for $xy$
  (clearly, $\emb(a)$ begins with $\psi(a)$ for all $a\in S$).
  First, for all~$yz\in\outA[T]{y}$, we have that $\emb(yz)$ has origin~$v$, which is also the destination of $\emb(xy)$, proving \cref{def:pseudo-embedding}\ref{it:end is start}.
  Second, for all~$wz\in A(T_S)\setminus\{xy\}$, 
  every arc in $u'v'$ in the path $\emb(wz)$ satisfies either $v' \leq_\Gamma \psi(a)$ for some $a \in \bigcup_{i \leq j} S_i$ (if $wz \in A(T_{\bigcup_{i \leq j} S_i})$) or $v' <_{\Gamma} v$ (if $wz \in T_{\bigcup_{i \leq j} Y_i}$).
  In either case $u'v' \neq uv$, and so 
  $\emb(wz)$ and $\emb(xy)$  are arc-disjoint, proving \cref{def:pseudo-embedding}\ref{it:disjoint paths}.
  Third, $y\notin L(T)$ since $\outA[T]{y}$ is non-empty.

  \medskip
  “$\Leftarrow$”:
  First, suppose that $v$ is a leaf, that is, $\outdeg[\Gamma]{v}=0$.
  Then, since $L(T_S)=L(\Gamma_v)=\{v\}$ and $T$ is a tree,
  $S$ contains a single arc~$xy$ and $v=y$.
  Further, since $\emb(xy)$ begins with $\psi(xy)$, we have $\psi(xy)\in\inA[N]{v}$.
  Then, the claim follows by \cref{def:witness}\ref{it:j0}.

  Second, suppose that $v$ is not a leaf, but $S$ is $v$-splittable into $(S_1,S_2,\ldots)$.
  Then, $S=\bigcup_i S_i$ so $\emb$ is a pseudo-embedding of $T_{\bigcup_i S_i}$ into $N$ such that,
  for each $wz\in\bigcup_i S_i$, the path $\emb(wz)$ begins with $\psi(wz)$.
  Then, the claim follows by \cref{def:witness}\ref{it:v-split}.

  Third, suppose that $v$ is not a leaf and $S$ is not $v$-splittable.
  Then, $v$ has children~$q_1,q_2,\ldots$ in $\Gamma$.
  For each~$i$, let~$S_i:=\{wz\in S\mid L(T_z)\subseteq L(\Gamma_{q_i})$.
  Since $S$ is not $v$-splittable, $S\setminus \bigcup_i S_i$ contains an arc~$xy$.
  Then, (A)~$\psi(xy)\in\inA[N]{v}$ since, otherwise, $\emb(y)\belo{\Gamma} q_i$ for some~$i$,
  implying~$L(T_y) \stackrel{Obs.~\ref{obs:emb-leaves}}{\subseteq} L(\Gamma_{\emb(y)})\subseteq L(\Gamma_{q_i})$,
  contradicting~$xy\notin S_i$.
  Analogously, $v$ is the destination of $\emb(xy)$.
  Since $v$ is not a leaf, \cref{def:pseudo-embedding}\ref{it:leaves} implies that $y$ is not a leaf.
  Further, by \cref{def:pseudo-embedding}\ref{it:end is start},
  for each $yz\in\outA[T]{y}$, (B)~the startpoint of $\emb(yz)$ is $v$,
  implying that (C)~$\outA[T]{y}$ is non-empty and $v$-splittable into $(Y_1, Y_2, \ldots)$.
  By \cref{lem:PEuniquein}, $xy$ is the only arc in $S\setminus\bigcup_i S_i$,
  so (D)~$S\setminus\{xy\}$ is $v$-splittable into~$(S_1,S_2,\ldots)$.
  Finally, since all arcs in 
  $\bigcup_{i\leq j}S_i\cup Y_i$
  are in $T_S$, \cref{obs:subforest-pseudoemb} implies that
  (E)~$\emb$ is a pseudo-embedding for
  $T_{\bigcup_{i\leq j}S_i\cup Y_i}$
  and $\emb(a)$ begins with $\psi(a)$ for all 
  $a \in \bigcup_{i\leq j}S_i$.
  %
  Together, (A), (D), (C), (E), and (B) imply \cref{def:witness}\ref{it:S-xy v-split}.
\end{proof}

\SetKwFunction{Contained}{Contained}
\begin{algorithm}[t!] 
  \SetKwFunction{FName}{MyFunctionName} 
  \caption{A recursive function checking whether the subforest rooted at the top-arc set~$S$ is contained in the subnetwork below $v$ in $\Gamma$,
    where arcs of $S$ are mapped to arcs in $GW_{v}(\Gamma)$ by $\psi$. The last argument $j$ indicates how many children of $v$ in $\Gamma$
    have already been processed, that is, those children are unavailable to display subtrees below~$S$.}
  \label{alg:compute-chi}
  \textbf{function }\Contained{$v$, $S$, $\psi$, $j$}:\\
    \If{$j=0$}{
      \Return{($S$ is $v$-splittable) or ($\exists_{xy\in S}\;\psi(xy)\in\inA[N]{v}$ and $S\setminus\{xy\}$ is $v$-splittable)}\;
    }
    \lIf{not \Contained{$v$,$S$,$\psi$,$j-1$}}{\Return{false}}
    $q_j\gets j^\text{th}$ child of $v$ in $\Gamma$\;
    $S_j \gets \{wz\in S\mid L(T_z)\subseteq L(\Gamma_{q_j})\}$\;
    $\psi_j\gets $ restriction of $\psi$ to $S_j$\;
    \If(\tcp*[f]{don't recheck this for each $j$}){$\exists_{xy\in S}\; \psi(xy)\in\inA[N]{v}$}{
      \If(\tcp*[f]{don't recheck this for each $j$}){$\outA[T]{y}$ is $v$-splittable into $(Y_1,Y_2,\ldots)$}{
        \If(\tcp*[f]{\cref{fig:DP} (middle)}){$Y_j = \outA[T]{y}$}{
          \ForEach{$vw\in\outA[N]{v}$ with $L(T_y)\subseteq L(\Gamma_w)$}{
            \lIf{\Contained{$q_j,S_j,\replace{\psi_j}{xy}{vw},\outdeg[\Gamma]{q_j}$}}{\Return{true}}
          }
        }
        \If(\tcp*[f]{\cref{fig:DP} (right)}){$\outA[T]{y}\ne\emptyset$}{
          \ForEach{valid $\psi'_j:Y_j\to \outA[N]{v}\cap GW_{q_j}(\Gamma)\setminus\psi_j(S_j)$}{
            \lIf{\Contained{$q_j,(S_j\setminus\{xy\})\cup Y_j,\psi_j\cup\psi'_j,\outdeg[\Gamma]{q_j}$}}{\Return{true}}
          }
        }
      }\lElse{\Return{false}}
    }\lElse(\tcp*[f]{\cref{fig:DP} (left)}){\Return{\mbox{Contained($q_j,S_j,\psi_j,\outdeg[\Gamma]{q_j}$)}}}
\end{algorithm}

\subsubsection{Correctness}
\Cref{alg:compute-chi} shows how to compute $\chi[v,S,\psi,j]$ recursively for any signature.
In the following, we show its correctness.
To this end, recall that the leaves of $\Gamma$ are exactly the leaves of $N$ and
recall that $GW_v(\Gamma)$ is the set of all arcs $uw$ in $N$ for which $w\beloeq{\Gamma} v \belo{\Gamma} u$.

The following lemma shows that all recursive calls made by our algorithm correspond to signatures,
which will be important for the correctness and the running-time considerations.
We remark that, for all signatures~$(v,S,\psi,j)$ with~$j>0$, the tuple~$(v,S,\psi,j-1)$ is a signature.

\begin{lemma}\label{lem:sig}
  Let~$(v,S,\psi,j)$ be a signature with $j>0$ and
  let~$q_j$ be a child of $v$ in $\Gamma$.
  Let~$S_j:=\{wz\in S \mid L(T_z)\subseteq L(\Gamma_{q_j})\}$ and
  let~$\psi_j$ be the restriction of $\psi$ to $S_j$.
  Let $Z:=\{xy\in S \mid \psi(xy)\in\inA[N]{v}\}$.

  If $Z=\emptyset$, then $(q_j,S_j,\psi_j,\outdeg[\Gamma]{q_j})$ is a signature.
  Otherwise,
  let~$Z=\{xy\}$ for some~$xy\in S$,
  let~$Y:=\outA[T]{y}$ be $v$-splittable into~$(Y_1,Y_2,\ldots)$,
  let~$\psi'_j:Y_j\to GW_{q_j}(\Gamma)$ be a valid function, and
  let~$vw\in\outA[N]{v}$ with $L(T_y)\subseteq L(\Gamma_w)$.
  If $Y_j=Y$, then $(q_j,S_j,\replace{\psi_j}{xy}{vw},\outdeg[\Gamma]{q_j})$ is a signature.
  Also,
  if $Y_j\ne\emptyset$, then $(q_j,(S_j\setminus Z)\cup Y_j,\psi_j\cup\psi'_j,\outdeg[\Gamma]{q_j})$ is a signature.
\end{lemma}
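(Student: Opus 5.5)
To prove \cref{lem:sig}, I will check, for each of the three claimed tuples, the conditions that make $(v',S',\psi',j')$ a signature. These are: $S'$ is a top-arc set for $L(\Gamma_{v'})$, i.e.\ an anti-chain in $T$ with $L(T_{S'})\subseteq L(\Gamma_{v'})$, which is the reading I take for the signature definition; $\psi'\colon S'\to GW_{v'}(\Gamma)$ is injective and valid; and $0\le j'\le\outdeg[\Gamma]{v'}$. The last condition is trivial since $j'=\outdeg[\Gamma]{q_j}$. The proof reduces to three ingredients: heredity of validity, the description of $GW_{q_j}(\Gamma)$ relative to $GW_v(\Gamma)$, and \cref{lem:top-arc-subset}(d).

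First, the top-arc property. $S_j\subseteq S$ is an anti-chain by \cref{lem:top-arc-subset}(d). By the definition of $S_j$, each $wz\in S_j$ has $L(T_z)\subseteq L(\Gamma_{q_j})$, so $L(T_{S_j})\subseteq L(\Gamma_{q_j})$. The same holds for $Y_j$ by the definition of a $v$-split. It remains to show that $(S_j\setminus Z)\cup Y_j$ is an anti-chain. Arcs in $Y_j$ are child arcs of $xy$. An arc $wz\in S\setminus\{xy\}$ is incomparable to $xy$, because $S$ is an anti-chain. Hence $wz$ is neither an ancestor nor a descendant of any child arc of $xy$, since $T$ is a tree.

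Second, the image condition. Take $a\in S_j$ with $\psi(a)=uz'\in GW_v(\Gamma)$ and $uz'\notin\inA[N]{v}$ (for $a\ne xy$ this uses $a\notin Z$). Then $z'\belo{\Gamma}v$. Validity gives $L(T_z)\subseteq L(\Gamma_{z'})$, and $L(T_z)\subseteq L(\Gamma_{q_j})$ holds as well. The set $L(T_z)$ is nonempty and $\Gamma$ is a tree, so $z'$ lies in $\Gamma_{q_j}$, i.e.\ $z'\beloeq{\Gamma}q_j$. Together with $u\abov{\Gamma}v\abov{\Gamma}q_j$, this places $uz'$ in $GW_{q_j}(\Gamma)$.

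For the redirected arc $vw$, the same argument applies: $L(T_y)\subseteq L(\Gamma_w)$ and $L(T_y)=\bigcup_{Y}L(T_z)\subseteq L(\Gamma_{q_j})$ when $Y_j=Y$, so $w\beloeq{\Gamma}q_j\belo{\Gamma}v$. Hence $vw\in GW_{q_j}(\Gamma)$, and validity of the new assignment is exactly the hypothesis on $vw$. For $\psi'_j$, both the image in $GW_{q_j}(\Gamma)$ and validity are assumed. Validity of the other entries follows by heredity.

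\textbf{Main obstacle: injectivity.} In case~1, $\psi$ is injective, and $vw\notin\psi(S)$ because every arc of $\psi(S)$ has its tail strictly above $v$ in $\Gamma$, whereas $vw$ has tail $v$. In case~2, I need $\psi'_j(Y_j)\cap\psi_j(S_j\setminus Z)=\emptyset$, and the same tail argument gives this, since $\psi'_j$ maps into arcs out of $v$ in the algorithm. If the lemma's hypothesis on $\psi'_j$ does not say this, I will add it to the hypothesis, matching the restriction in \cref{alg:compute-chi}.

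If the signature definition also forbids two image arcs sharing a head, I will rule this out as follows. Heads of arcs in $\psi(S\setminus Z)$ are distinct by assumption. The head $w$ of a new arc $vw$ could collide with some $z'$, but I will exclude this only where it matters, via \cref{lem:PEuniquein}, or otherwise note that the definition only requires injectivity.
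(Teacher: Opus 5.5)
Your proposal has a genuine gap: you weakened the signature condition. In the paper, a top-arc set \emph{for} $L'$ is an anti-chain $S$ with $L(T_S)=L'$, with equality, not inclusion. So the lemma requires $L(T_{S_j})=L(\Gamma_{q_j})$ (when $Z=\emptyset$ or $Y_j=Y$) and $L(T_{(S_j\setminus Z)\cup Y_j})=L(\Gamma_{q_j})$ (when $Y_j\neq\emptyset$). You prove only ``$\subseteq$'', which is immediate from the definitions of $S_j$ and $Y_j$. The reverse inclusion is where the paper's proof does its real work, and it cannot be dropped. The bound of \cref{lem:subforestBound}, and hence \cref{lem:BSigBound}, counts forests with a \emph{prescribed} leaf set. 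Under your reading, any anti-chain of at most $k$ leaf arcs below $v$ would qualify, so the number of signatures would no longer be bounded in $k$. Likewise, the leaf case of \cref{lem:last-entry} uses $L(T_S)=L(\Gamma_v)=\{v\}$.

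The missing step uses the hypothesis $L(T_S)=L(\Gamma_v)$, which you never invoke. Fix $\ell\in L(\Gamma_{q_j})$ and an arc $ab\in S$ on the root-$\ell$ path of $T$.
If $ab\notin Z$, then $\psi(ab)=uz'$ with $z'\belo{\Gamma}v$, so $z'\beloeq{\Gamma}q_k$ for some child $q_k$ of $v$. Validity gives $\ell\in L(T_b)\subseteq L(\Gamma_{q_k})$. Disjointness of the subtrees of distinct children of $v$ forces $k=j$, hence $ab\in S_j\setminus Z$.
If $ab=xy\in Z$ and $y$ is not a leaf, the path continues through some $yz\in Y$. By $v$-splittability, $yz$ lies in some $Y_i$, and the same disjointness gives $i=j$, so $\ell\in L(T_{Y_j})$. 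If moreover $Y_j=Y$, then $L(T_y)\subseteq L(\Gamma_{q_j})$ and so $xy\in S_j$.
If $y$ is a leaf, then $\ell=y$ and $xy\in S_j$; the last case of the lemma cannot occur here since $Y_j=\emptyset$.

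Your handling of injectivity in the last case is better than the paper's. The paper's proof simply asserts that $\psi_j\cup\psi'_j$ is valid because both parts are. However, the stated hypothesis on $\psi'_j$ does not exclude a collision with $\psi_j(S_j\setminus Z)$, so adding the restriction from \cref{alg:compute-chi} is genuinely needed.

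On the other hand, your identity $L(T_y)=\bigcup_{yz\in Y}L(T_z)$ fails when $y$ is a leaf of $T$. Then $Y_j=Y=\emptyset$ for every $j$, yet $xy\in S_j$ and $w\beloeq{\Gamma}q_j$ hold only if $y\in L(\Gamma_{q_j})$. That condition must be assumed separately in this case; the paper's proof of the image condition makes the same slip.
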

\begin{proof}
  By \cref{def:witness}, for each of the three claims, we need to show that
  (a)~the second argument is a top-arc set of $L(\Gamma_{q_j})$ and
  (b)~the third argument is a valid function from the second argument to $GW_{q_j}(\Gamma)$.

  (a):
  Since $S_j\subseteq S$, \cref{lem:top-arc-subset} implies that~$S_j$ is a top-arc set.
  By construction, $T_{S\setminus Z}$ and $T_Y$ are arc-disjoint and $Y$~is a top-arc set,
  so $(S\setminus Z)\cup Y$ and its subset~$(S_j\setminus Z)\cup Y_j$ are top-arc sets.
  It remains to show that they are top-arc sets \textbf{for} $L(\Gamma_{q_j})$,
  that is,
  (1)~$L(T_{S_j})=L(\Gamma_{q_j})$ if $Z=\emptyset$ or $Y_i=Y$ and
  (2)~$L(T_{(S_j\setminus Z)\cup Y_j})=L(\Gamma_{q_j})$ if $Y_j\ne\emptyset$.
  
  “$\subseteq$”: For each~$\ell\in L(T_{S_j\cup Y_j})$ (where $Y_j=\emptyset$ is possible),
  there is some arc~$xy\in S_j\cup Y_j$ with $\ell\beloeq{T} y$ and,
  by definition of $S_j$ and $Y_j$, we have $\ell\in L(T_y)\subseteq L(\Gamma_{q_j})$.
  
  “$\supseteq$”:
  Consider some~$\ell\in L(\Gamma_{q_j})\subseteq L(\Gamma_v)$ and
  let~$p$ be some root-$\ell$-path in $N$ containing~$q_j$.
  Then, $p$ contains an arc~$xy\in S$, since $S$ is a top-arc set of $L(\Gamma_v)$.
  If~$xy\notin S_j\cup Z$, then $\ell\in L(\Gamma_{q_k})$ for some child~$q_k\ne q_j$ of $v$ in $\Gamma$,
  so $\ell\beloeq{\Gamma} q_k,q_j$, contradicting $\Gamma$ being a tree.
  If~$xy\in S_j$, then $\ell\in L(T_{S_j})$, proving~(1) and~(2).
  Thus, suppose~$xy\in Z$ in the following.
  First, suppose~$Y\ne\emptyset$.
  Then, $p$~contains some~$yz\in\outA[T]{y}=Y$ and $yz\in Y_j$ since $(Y_1,Y_2,\ldots)$ is a $v$-split of $Y$,
  implying~$\ell\in L(T_{Y_j})\subseteq L(T_{(S_j\setminus Z)\cup Y_j})$ and proving~(2) in this case.
  If, additionally, $Y_j=Y$, then $L(T_y)\subseteq L(\Gamma_{q_j})$, so $xy\in S_j$,
  implying~$\ell\in L(T_{S_j})$ and proving~(1) in this case.
  Second, suppose~$Y_j=Y=\emptyset$, that is, $y$ is a leaf of $T$.
  Then, $L(T_y) = \{y\}\subseteq L(\Gamma_{q_j})$, so $xy\in S_j$, implying $\ell\in L(T_{S_j})$ and proving~(1) and~(2) in this case.

  (b):
  First, note that $\psi_j$ is valid since validity is hereditary.
  Second, if $Y_j=Y=\outA[T]{y}$, then $\replace{\psi_j}{xy}{vw}$ is valid since $L(T_y)\subseteq L(\Gamma_w)$ in this case.
  Third, $\psi_j\cup\psi'_j$ is valid if both $\psi_j$ and $\psi'_j$ are valid.
  It remains to show that the third argument maps into~$GW_{q_j}(\Gamma)$.
  Towards a contradiction, assume that there is some arc~$a$ that gets mapped to some~$b\notin GW_{q_j}(\Gamma)$.
  Since $\psi$ is valid, either $\psi$~does not map~$a$ (implying~$a\notin S$) or $\psi(a)\ne b$.
  But then, $Z\ne\emptyset$ since, otherwise $b=\psi_j(a)=\psi(a)\ne b$, so $Z=\{xy\}$.
  But then,
  (1)~$b=\psi_j(a)=\psi(a)\ne b$ or
  (2)~$b=\psi'_j(a)$ and $a\in Y_j$, but $b=\psi'_j(a)\in\psi'_j(Y_j)\subseteq GW_{q_j}(\Gamma)$, contradicting $b\notin GW_{q_j}(\Gamma)$.
  It remains to cover the case that~$Z=\{xy\}$ and~$Y_j=Y$.
  If $a\ne xy$, then $b=\replace{\psi_j}{xy}{vw}(a)=\psi(a)\ne b$ and,
  if $a=xy$, then $b=\replace{\psi_j}{xy}{vw}(xy)=vw$.
  Since~$Y_j=Y$, we have~$L(T_y)\subseteq L(\Gamma_{q_j})$ and since $L(T_y)\subseteq L(\Gamma_w)$, we know that $w$ and $q_j$ are comparable in $\Gamma$.
  Further $w\belo{N} v$, implying $w\beloeq{\Gamma} q_j$, so $vw=b\in GW_{q_j}(\Gamma)$ contradicting $b\notin GW_{q_j}(\Gamma)$.
\end{proof}

\noindent
Note that correctness is clear if $j=0$, so we focus on the case that~$j>0$.
We split the correctness proof of \Cref{alg:compute-chi} into the cases
that $\psi$ maps some arc~$xy$ of $S$ into~$\inA[N]{v}$ (\Cref{fig:DP}(middle,right)) or not (\Cref{fig:DP}(left)).

\begin{lemma}\label{lem:DP_correct1}
  Let~$(v,S,\psi,j)$ be a signature with $j>0$ and
  let~$q_1,q_2,\dots$ be the children of $v$ in $\Gamma$.
  Let~$S$ be $v$-splittable into $(S_1,S_2,\ldots)$,
  let~$\psi_j$ be the restriction of $\psi$ to $S_j$, and
  let~$\psi(S)\cap\inA[N]{v}=\emptyset$.

  Then, $\chi[v,S,\psi,j]=\textsc{True}$
  if and only if
  $\chi[v,S,\psi,j-1]=\textsc{True}$ and
  $\chi[q_j,S_j,\psi_j,\outdeg[\Gamma]{q_j}]$.
\end{lemma}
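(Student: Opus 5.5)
Since $\psi(S)\cap\inA[N]{v}=\emptyset$, no $xy\in S$ has $\psi(xy)\in\inA[N]{v}$. Hence \cref{def:witness}\ref{it:S-xy v-split} cannot apply, and $\chi[v,S,\psi,j]$ is \textsc{True} exactly when case~\ref{it:v-split} holds. I would therefore prove both directions against case~\ref{it:v-split} for $j$ and $j-1$, and against \cref{lem:last-entry} for the child entry. By \cref{lem:sig} with $Z=\emptyset$, $(q_j,S_j,\psi_j,\outdeg[\Gamma]{q_j})$ is a signature, so \cref{lem:last-entry} says that $\chi[q_j,S_j,\psi_j,\outdeg[\Gamma]{q_j}]$ holds iff some pseudo-embedding of $T_{S_j}$ has each $\emb(a)$ beginning with $\psi(a)$. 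For $j-1=0$, case~\ref{it:j0} just says that $S$ is $v$-splittable, which is assumed.

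\textbf{“$\Rightarrow$”.} Take a witness $\emb$ for $(v,S,\psi,j)$, a pseudo-embedding of $T_{\bigcup_{i\le j}S_i}$. By \cref{obs:subforest-pseudoemb} it is also a pseudo-embedding of $T_{\bigcup_{i\le j-1}S_i}$ and of $T_{S_j}$, and the prefix conditions are inherited. This yields $\chi[v,S,\psi,j-1]$ (trivially when $j-1=0$) and, via \cref{lem:last-entry}, the child entry.

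\textbf{“$\Leftarrow$”.} Take a witness $\emb_1$ for $j-1$ on $T_{S_{<j}}$, where $S_{<j}:=\bigcup_{i<j}S_i$ (empty if $j=1$), and a pseudo-embedding $\emb_2$ of $T_{S_j}$ from \cref{lem:last-entry}. Since $S$ is an anti-chain, the forests $T_{S_{<j}}$ and $T_{S_j}$ are arc-disjoint, so $\emb:=\emb_1\cup\emb_2$ is well defined. Conditions~\ref{it:end is start} and~\ref{it:leaves} of \cref{def:pseudo-embedding} hold because each arc and its child arcs lie in the same part. The remaining work is arc-disjointness~\ref{it:disjoint paths} between an arc of $\emb_1$ and an arc of $\emb_2$.

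\textbf{Main obstacle: locality.} I would show that every arc $u'v'$ on a path $\emb_2(wz)$ satisfies $v'\beloeq{\Gamma}q_j$, and every arc on a path of $\emb_1$ has head below some $q_i$ with $i<j$. Paths of $\emb$ are directed paths in $N$, and $\Gamma$ extends $\belo{N}$. So after the first arc $\psi(wz)\in GW_{q_i}(\Gamma)$, whose head is $\beloeq{\Gamma}q_i$, all later heads are $N$-descendants, hence $\Gamma$-descendants, of that head. The first arc itself has head in $\Gamma_{q_i}$ because $\psi(wz)\in GW_v(\Gamma)$ does not enter $v$, and validity gives $L(T_z)\subseteq L(\Gamma_{\text{head}})$ together with $L(T_z)\subseteq L(\Gamma_{q_i})$. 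In a tree, two nodes whose subtrees share a leaf are comparable, and the head lies strictly below $v$, so the head lies in $\Gamma_{q_i}$. The subtrees $\Gamma_{q_i}$ and $\Gamma_{q_j}$ are disjoint for $i\neq j$, so the arc sets are disjoint. The delicate point is a shared first arc, where $\psi(a)=\psi(b)$ for $a\in S_{<j}$ and $b\in S_j$. This is excluded by injectivity of $\psi$, and in any case its head already lies in a single $\Gamma_{q_i}$. With disjointness established, $\emb$ witnesses case~\ref{it:v-split} for $j$.
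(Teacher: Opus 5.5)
Your proposal is correct and follows the paper's proof. For ``$\Rightarrow$'' you restrict the witness via \cref{obs:subforest-pseudoemb} and apply \cref{lem:last-entry}; for ``$\Leftarrow$'' you take the union of the $(j-1)$-witness with the child embedding and get arc-disjointness from the two images lying below distinct children $q_i$, $q_j$ of $v$ in $\Gamma$. Your locality argument (the head of $\psi(wz)$ lies in $\Gamma_{q_i}$ by validity and $\psi(wz)\notin\inA[N]{v}$, and all later arcs stay below it because $\Gamma$ extends $\belo{N}$) supplies the justification that the paper's one-line claim ``$t\beloeq{N}q_i,q_j$'' leaves implicit, and your explicit exclusion of case~(c) of \cref{def:witness} states what the paper assumes silently.
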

\begin{proof}
  “$\Rightarrow$”:
  By \cref{def:witness}\ref{it:v-split}, there is some pseudo-embedding $\emb$ of $T_{\bigcup_{i\leq j} S_i}$ into $N$ such that,
  for each $wz\in\bigcup_{i\leq j} S_i$, the path~$\emb(wz)$ begins with the arc~$\psi(wz)$.
  Then, by \cref{obs:subforest-pseudoemb}, $\emb$ is also a pseudo-embedding of $T_{\bigcup_{i\leq j-1} S_i}$,
  so $\emb$ satisfies \cref{def:witness}\ref{it:v-split} for $j-1$, implying $\chi[v,S,\psi,j-1]=\textsc{True}$.
  Furter, by \cref{lem:sig}, $(q_j,S_j,\psi_j,\outdeg[\Gamma]{q_j})$ is a signature, and
  by \cref{obs:subforest-pseudoemb}, $\emb$ is also a pseudo-embedding for $T_{S_j}$ and,
  for each $xy\in S_j$, the path~$\emb(xy)$ begins with $\psi(xy)=\psi_j(xy)$, so \cref{lem:last-entry} implies the claim.

  \medskip
  “$\Leftarrow$”:
  Let $\emb'$ be a witness for $(v,S,\psi,j-1)$, that is,
  $\emb'$ is a pseudo-embedding of $T_{\bigcup_{i\leq j-1} S_i}$ into $N$ such that,
  for each $xy\in\bigcup_{i\leq j-1} S_i$, the path~$\emb'(xy)$ begins with $\psi(xy)$.
  Further, by \cref{lem:last-entry}, there is a pseudo-embedding~$\emb_j$ for $T_{S_j}$ into $N$ such that,
  for all~$xy\in S_j$, the path~$\emb_j(xy)$ begins with $\psi_j(xy)$.
  We claim that $\emb:=\emb'\uplus \emb_j$ is a witness for $(v,S,\psi,j)$, that is,
  $\emb$ is a pseudo-embedding of $T_{\bigcup_{i\leq j} S_i}$ into $N$ such that,
  for each $i\leq j$ and each $a\in S_i$, the path~$\emb(a)$ begins with $\psi(a)$.
  First, \cref{def:pseudo-embedding}\ref{it:end is start} and \ref{it:leaves} follow from the fact that $\emb'$ and $\emb_j$ are pseudo-embeddings.
  Second, assume that \cref{def:pseudo-embedding}\ref{it:disjoint paths} is false, that is,
  there is some~$i\leq j-1$ and arcs~$xy$ and $wz$ in $T_{S_i}$ and $T_{S_j}$, respectively, such that $\emb(xy)$ and $\emb(wz)$ share an arc~$st$ in $N$.
  But then, $t \beloeq{N} q_i, q_j$, so $t\beloeq{\Gamma} q_i,q_j$, contradicting $q_i$ and $q_j$ being incomparable in the tree~$\Gamma$.
  Thus, $\emb$ is a pseudo-embedding of $T_{\bigcup_{i\leq j} S_i}$ into $N$.
  It remains to show that $\emb(xy)$ begins with $\psi(xy)$ for each $xy\in\bigcup_{i\leq j} S_i$.
  If $xy\in S_j$, then $\emb_j(xy)=\emb(xy)$ begins with $\psi_j(xy)=\psi(xy)$.
  If $xy\in S_i$ for some~$i\ne j$, then $\emb'(xy)=\emb(xy)$ begins with $\psi(xy)$.
\end{proof}

\begin{lemma}\label{lem:DP_correct2}
  Let~$(v,S,\psi,j)$ be a signature with $j>0$,
  let~$q_1,q_2,\dots$ be the children of $v$ in~$\Gamma$,
  let~$S_j:=\{wz\in S \mid L(T_z)\subseteq L(\Gamma_{q_j})$, and
  let~$\psi_j$ be the restriction of $\psi$ to $S_j$.
  Let~$xy\in S$ with $\psi(xy)\in\inA[N]{v}$.

  Then, $\chi[v,S,\psi,j]=\textsc{True}$
  if and only if
  $\chi[v,S,\psi,j-1]=\textsc{True}$ and
  $\outA[T]{y}$ is $v$-splittable into~$(Y_1,Y_2,\ldots)$ and
  \begin{enumerate}[(a)]
    \item $Y_j=\outA[T]{y}$ and there is some~$vw\in\outA[N]{v}$ with $L(T_y)\subseteq L(\Gamma_w)$ and
      $\chi[q_j, S_j, \replace{\psi_j}{xy}{vw}, \outdeg[\Gamma]{q_j}] = \textsc{True}$, or
    \item $\outA[T]{y}\ne\emptyset$ and there is a valid function~$\psi'_j:Y_j\to\outA[N]{v}\cap GW_{q_j}(\Gamma)\setminus\psi_j(S_j)$ such that
      $\chi[q_j,(S_j\setminus\{xy\})\cup Y_j,\psi_j\cup\psi'_j,\outdeg[\Gamma]{q_j}]=\textsc{True}$.
  \end{enumerate}
\end{lemma}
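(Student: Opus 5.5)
We follow the template of \cref{lem:DP_correct1} and prove both directions by manipulating witnessing pseudo-embeddings. The tools are \cref{lem:last-entry}, which turns table entries at $q_j$ with $j=\outdeg[\Gamma]{q_j}$ into pseudo-embeddings; \cref{lem:sig}, which guarantees that the recursive tuples are signatures; and the tree structure of $\Gamma$, which gives disjointness across different children. Throughout, $xy$ is the unique arc of $S$ with $\psi(xy)\in\inA[N]{v}$. Uniqueness holds because $\psi$ is injective and, by \cref{lem:PEuniquein} and the definition of an appropriate $\psi$, no two images share a head.

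\textbf{“$\Rightarrow$”.} Suppose $\chi[v,S,\psi,j]=\textsc{True}$ with witness $\emb$. Restricting $\emb$ via \cref{obs:subforest-pseudoemb} gives $\chi[v,S,\psi,j-1]$. Consider the path $\emb(xy)$, which starts with $\psi(xy)=uv$. There are two cases.

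\emph{$\emb(xy)$ continues past $v$.} Then $S$ is $v$-splittable and we are in \cref{def:witness}\ref{it:v-split}. Let $vw$ be the second arc of $\emb(xy)$. If $xy\in S_j$, then by \cref{obs:emb-leaves} we have $L(T_y)\subseteq L(\Gamma_w)$, and $Y_j=\outA[T]{y}$ since $L(T_y)\subseteq L(\Gamma_{q_j})$. Define $\emb_j$ on $T_{S_j}$ as $\emb$, except that $\emb_j(xy)$ is $\emb(xy)$ with its first arc removed. This is a pseudo-embedding of $T_{S_j}$ in which every path begins with $\replace{\psi_j}{xy}{vw}$, so \cref{lem:last-entry} yields~(a). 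If instead $xy\in S_i$ for some $i\neq j$, then the $j$-th step only concerns $S_j$, which does not contain $xy$. In that case the statement should be read with the conditions (a)/(b) applying when $xy$ lands in $q_j$. I expect to need to reconcile this with the algorithm's handling of $Y_j$; I will check how the algorithm treats this subcase and adjust the case distinction accordingly.

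\emph{$\emb(xy)$ ends at $v$.} Then we are in \cref{def:witness}\ref{it:S-xy v-split}. The arcs $\emb(yz)$ for $yz\in Y_j$ start at $v$, and their first arcs define $\psi'_j$. These first arcs lie in $GW_{q_j}(\Gamma)$, since each path descends into $\Gamma_{q_j}$ by \cref{obs:emb-leaves}. They avoid $\psi_j(S_j)$ by arc-disjointness, and $\psi'_j$ is valid. Restricting $\emb$ to $T_{(S_j\setminus\{xy\})\cup Y_j}$ and applying \cref{lem:last-entry} yields~(b).

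\textbf{“$\Leftarrow$”.} Take a witness $\emb'$ for $j-1$, and use \cref{lem:last-entry} to obtain a pseudo-embedding $\emb_j$ at $q_j$ from~(a) or~(b).

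In case~(a), set $\emb(xy):=uv\cdot\emb_j(xy)$. Then $\emb$ is $\emb'$ together with $\emb_j$ on the remaining arcs, plus this $\emb(xy)$, and it witnesses \cref{def:witness}\ref{it:v-split}.

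In case~(b), $\emb'\cup\emb_j$ witnesses \cref{def:witness}\ref{it:S-xy v-split}.

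The main obstacle is arc-disjointness between $\emb'$ and $\emb_j$. Arcs used by $\emb_j$ have heads in $\Gamma_{q_j}$, while arcs used by $\emb'$ have heads in $\Gamma_{q_i}$ for $i<j$, or are arcs $vw'$ leaving $v$. These sets are disjoint because $\Gamma$ is a tree, together with the exclusion $\setminus\psi_j(S_j)$ and the fact that distinct $Y_i$ are mapped into distinct subtrees. The prepended arc $uv$ is used by no other path, since its head $v$ lies above every $q_i$.
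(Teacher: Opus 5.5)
Your overall strategy matches the paper's: a case analysis on the witness, \cref{lem:last-entry} for the entries at $q_j$, and disjointness from disjoint subtrees of $\Gamma$. There is, however, a genuine gap. For a witness of type \cref{def:witness}\ref{it:v-split} you only treat $xy\in S_j$, and you explicitly defer the case $xy\in S_k$ with $k\neq j$. This is not a corner case: $k$ is unique, so this is the situation for every other value of $j$.

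No re-reading of the statement is needed. If $y\notin L(T)$, then $L(T_y)\subseteq L(\Gamma_{q_k})$ puts every child arc of $y$ into $Y_k$, so $Y_j=\emptyset$. The empty map is then a valid $\psi'_j$, and $(S_j\setminus\{xy\})\cup Y_j=S_j$. So condition~(b) collapses to $\chi[q_j,S_j,\psi_j,\outdeg[\Gamma]{q_j}]=\textsc{True}$. This follows by restricting the witness to $T_{S_j}$ (\cref{obs:subforest-pseudoemb}) and applying \cref{lem:last-entry}, which is exactly what the paper does. For $y\in L(T)$ the paper routes the case through~(a) instead, where $Y_j=\emptyset=\outA[T]{y}$ holds trivially.

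Relatedly, splitting on the behaviour of $\emb(xy)$ presupposes that $xy$ lies in the domain of the witness. That fails for witnesses of type \cref{def:witness}\ref{it:S-xy v-split}, and for those of type \cref{def:witness}\ref{it:v-split} with $k>j$. Split on the type of the witness and on $k$ instead. Within $xy\in S_j$, your sub-split is sound: "continues past $v$" gives~(a) and "ends at $v$" gives~(b). It even covers the possibility $\emb(xy)=(u,v)$ with $y\notin L(T)$, which the paper's Case~1 passes over.

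The same omission breaks your ``$\Leftarrow$'' for case~(b): $\emb'\cup\emb_j$ need not witness \cref{def:witness}\ref{it:S-xy v-split}. Suppose $\emb'$ is of type \cref{def:witness}\ref{it:v-split} and $xy\in S_k$ with $k<j$. Then $\emb'$ already embeds $xy$ along a path that may continue below $v$, so the out-arcs of $y$ need not start at $v$.

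Here you need the paper's case split. When $xy\notin S_j$, we have $Y_j=\emptyset$, so $\emb_j$ embeds $T_{S_j}$. The union is then a witness of type \cref{def:witness}\ref{it:v-split} whenever $\emb'$ is of that type. Only in the remaining cases ($xy\in S_j$, $j=1$, or $\emb'$ of type \cref{def:witness}\ref{it:S-xy v-split}) is the union of type \cref{def:witness}\ref{it:S-xy v-split}.

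In case~(a) you should likewise justify two things. First, $S$ is $v$-splittable, because $Y_j=\outA[T]{y}$ places $xy$ in $S_j$. Second, a $(j-1)$-witness of type \cref{def:witness}\ref{it:S-xy v-split} embeds nothing below $y$, since $Y_i=\emptyset$ for $i<j$; hence the prepended combination is a witness of type \cref{def:witness}\ref{it:v-split}. Your disjointness argument is otherwise fine.
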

\begin{proof}
  “$\Rightarrow$”:
  Note that all “recursive calls” in the claim correspond to signatures by \cref{lem:sig}, allowing us to apply \cref{lem:last-entry}.
  The witness for $(v,S,\psi,j)$ might correspond to \cref{def:witness}\ref{it:v-split} or \ref{it:S-xy v-split} and we consider the cases individually.

  \textbf{Case 1}:
  $S$ is $v$-splittable into $(S'_1,S'_2,\ldots)$ and
  there is a pseudo-embedding $\emb$ of $T_{\bigcup_{i\leq j} S'_i}$ into $N$ such that,
  for each $wz\in\bigcup_{i\leq j} S_i'$, the path~$\emb(wz)$ begins with the arc~$\psi(wz)$.
  Then, by \cref{obs:subforest-pseudoemb}, $\emb$ is also a pseudo-embedding of $T_{\bigcup_{i\leq j-1} S_i}$,
  so $\emb$ satisfies \cref{def:witness}\ref{it:v-split} for $j-1$, implying $\chi[v,S,\psi,j-1]=\textsc{True}$.
  Since~$S=\bigcup_i S'_i$, we have $xy\in S'_k$ for some~$k$ and, thus, $L(T_y)\subseteq L(\Gamma_{q_k})$.
  For each $yz\in\outA[T]{y}$, we have $L(T_z)\subseteq L(T_y)\subseteq L(\Gamma_{q_k})$,
  so $\outA[T]{y}$ is $v$-splittable into $(Y_1,Y_2,\ldots)$ where $Y_k=\outA[T]{y}$ and $Y_i=\emptyset$ for all $i\ne k$.
  
  If $\outA[T]{y}=\emptyset$, then $\emb(xy)$ cannot end in~$v$ lest it contradict \cref{def:pseudo-embedding}\ref{it:leaves}.
  If $j=k$, then $\emb$ is a pseudo-embedding of $T_{S_j}$.
  In both cases, $\emb(xy)$ begins with $(u, v, w, \ldots)$ for some parent~$u$ of $v$ and child $w$ of $v$ in $N$ with $L(T_y)\subseteq L(\Gamma_w)$.
  Let $\emb'$ result from $\emb$ by clipping~$uv$ off~$\emb(xy)$.
  Since no arc of $T_{S}$ can be mapped to a path with endpoint~$x$ by $\emb$, we know that~$\emb'$ is also a pseudo-embedding.
  Thus, $\chi[q_j,S_j,\replace{\psi_j}{xy}{vw},\outdeg[\Gamma]{q_j}]=\textsc{True}$ by \cref{lem:last-entry}.
  
  If $j\ne k$ and $\outA[T]{y}\ne\emptyset$.
  Then, $Y_j=\emptyset$ so the only valid function~$\psi'_j$ is the empty function.
  But then, $xy\notin S'_j=S_j$ and (b)~degenerates to $\chi[q_j,S_j,\psi_j,\outdeg[\Gamma]{q_j}]=\textsc{True}$
  which holds since, by \cref{obs:subforest-pseudoemb}, $\emb$ is a pseudo-embedding of $T_{S_j}$ into $N$ and
  $\emb(a)$ begins with $\psi_j(a)=\psi(a)$ for all $a\in S_j$.

  \textbf{Case 2}:
  $S\setminus\{xy\}$ is $v$-splittable into $(S'_1,S'_2,\ldots)$ and
  $\outA[T]{y}$ is non-empty and $v$-splittable into $(Y_1,Y_2,\ldots)$ and
  there is a pseudo-embedding~$\emb$ of $T_{\bigcup_{i\leq j} S'_i\cup Y_i}$ into $N$ such that,
  for each $wz\in\bigcup_{i\leq j} S'_i$, the path $\emb(wz)$ begins with $\psi(wz)$ and
  for each $yz\in\bigcup_{i\leq j} Y_i$, the startpoint of $\emb(wz)$ is $v$, implying $\emb(z)\belo{N} v$.
  Let $\psi'_j$ map each $yz\in Y_j$ to the first arc of $\emb(yz)$ and note that $\psi'_j(yz)\in\outA[N]{v}$ for all such $yz$.
  Then, $\psi'_j$ is injective by \cref{def:pseudo-embedding}\ref{it:disjoint paths}.
  Further, $\emb(z)\belo{\Gamma} v$ so $\emb(z)\beloeq{\Gamma} q_j$, implying $L(T_z)\subseteq L(\Gamma_{q_j})$ by \cref{obs:emb-leaves}.
  Thus, $\psi'_j$ is valid and
  $\emb$ is a pseudo-embedding of $T_{(S_j\setminus \{xy\})\cup Y_j}$ such that $\emb(yz)$ begins with $\psi'_j(yz)$ for each $yz\in Y_i$
  and $\emb(wz)$ begins with $\psi_j(wz)$ for each $wz\in S_j$.
  Thus, by \cref{lem:last-entry}, $\chi[q_j,(S_j\setminus\{xy\})\cup Y_j,\psi_j\cup\psi'_j,\outdeg[\Gamma]{q_j}]=\textsc{True}$.

  “$\Leftarrow$”:
  Let~$\chi[v,S,\psi,j-1]=\textsc{True}$ and
  let~$\outA[T]{y}$ be $v$-splittable into $(Y_1,Y_2,\ldots)$.
  We consider the lemma's cases~(a) and (b) individually.

  \textbf{Case~(a)}:
  $Y_j=\outA[T]{y}$ and there is an arc~$vw\in\outA[N]{v}$ such that $L(T_y)\subseteq L(\Gamma_w)$ and
  $\chi[q_j,S_j,\replace{\psi_j}{xy}{vw},\outdeg[\Gamma]{q_j}]=\textsc{True}$.
  By \cref{lem:last-entry},
  there is a pseudo-embedding~$\emb^*$ of $T_{S_j}$ into $N$ such that $\emb^*(st)$ begins with $\replace{\psi_j}{xy}{vw}(st)$ for each $st\in S_j$.
  We claim that the result~$\emb$ of prepending $\psi(xy)$ to $\emb^*(xy)$ is a pseudo-embedding of $T_{S_j}$ into $N$ and,
  for each $st\in S_j$, the path~$\emb(st)$ begins with $\psi(st)$.
  First, no path~$\emb^*(st)$ starts at or above $\psi(xy)$ in $N$, so $\emb$ satisfies \cref{def:pseudo-embedding}\ref{it:disjoint paths}.
  The conditions~\ref{it:end is start} and \ref{it:leaves} of \cref{def:pseudo-embedding} are not impacted by prepending $\psi(xy)$ to $\emb^*(xy)$.
  Further, each path~$\emb(st)$ starts at $\psi_j(st)=\psi(st)$ except $\emb(xy)$ which starts at $\psi(xy)$.
  In the following, we consider the three cases that allow for $\chi[v,S,\psi,j-1]=\textsc{True}$ by \cref{def:witness}.

  First, suppose \cref{def:witness}\ref{it:j0} applies, that is,
  $j=1$ and $S\setminus\{xy\}$ is $v$-splittable (we are using the weaker of the two conditions).
  However, since $Y_j=\outA[T]{y}$, we know that $S$ also admits a $v$-split $(S_1,S_2,\ldots)$
  (note that $S_j$ appears in this $v$-split and
  this $v$-split results from the $v$-split $(S'_1,S'_2,\ldots)$ of $S\setminus\{xy\}$ by adding $Y_j$ to $S'_j$)
  and $xy\in S_j$.
  Thus, $\emb$ is a witness for $(v,S,\psi,j)$ by \cref{def:witness}\ref{it:v-split}.

  Second, suppose \cref{def:witness}\ref{it:v-split} applies, that is,
  $j>1$ and $S$ is $v$-splittable into $(S_1,S_2,\ldots)$ (note that $S_j$ appears in this $v$-split) and
  there is a pseudo-embedding~$\emb'$ of $T_{\bigcup_{i\leq j-1} S_i}$ into $N$ such that, 
  for each $st\in \bigcup_{i\leq j-1} S_i$, the path~$\emb'(st)$ begins with $\psi(st)$.
  But then, $\emb\cup\emb'$ is a pseudo-embedding for $T_{\bigcup_{i\leq j} S_i}$ and,
  for each $st\in \bigcup_{i\leq j} S_i$, the path~$(\emb\cup\emb')(st)$ begins with $\psi(st)$.
  Thus, $\emb\cup\emb'$ is a witness for $(v,S,\psi,j)$ by \cref{def:witness}\ref{it:v-split}.

  Third, suppose \cref{def:witness}\ref{it:S-xy v-split} applies, that is,
  $j>1$ and $S\setminus\{xy\}$ is $v$-splittable into $(S'_1,S'_2,\ldots)$ and
  there is a pseudo-embedding~$\emb'$ of $T_{\bigcup_{i\leq j-1} S'_i\cup Y_i} = T_{\bigcup_{i\leq j-1} S'_i}$ into $N$ such that,
  for all~$st\in \bigcup_{i\leq j-1} S'_i\cup Y_i=\bigcup_{i\leq j-1} S'_i$, the path $\emb'(st)$ begins with $\psi(st)$ and,
  for all~$yz\in\outA[T]{y}$, the startpoint of $\emb'(yz)$ is $v$.
  Now, since $Y=\outA[T]{y}$, we have~$S'_i=S_i$ for all $i \leq j-1$ and $(S_1,S_2,\ldots)$ is a $v$-split of $S$.
  Again, $\emb\cup\emb'$ is a pseudo-embedding for $T_{\bigcup_{i\leq j} S_i}$ and,
  for each $st\in \bigcup_{i\leq j} S_i$, the path~$(\emb\cup\emb')(st)$ begins with $\psi(st)$.
  Thus, $\emb\cup\emb'$ is a witness for $(v,S,\psi,j)$ by \cref{def:witness}\ref{it:v-split}.

  \textsc{Case~(b)}:
  $\outA[T]{y}\ne\emptyset$ and there is a valid function~$\psi'_j:Y_j\to\outA[N]{v}\cap GW_{q_j}(\Gamma)\setminus\psi_j(S_j)$ such that
  $\chi[q_j,(S_j\setminus\{xy\})\cup Y_j,\psi_j\cup\psi'_j,\outdeg[\Gamma]{q_j}]=\textsc{True}$, that is, by \cref{lem:last-entry},
  there is a pseudo-embedding $\emb^*$ of $T_{(S_j\setminus\{xy\})\cup Y_j}$ into $N$ such that
  $\emb^*(st)$ begins with $(\psi_j\cup\psi'_j)(st)$ for each $st\in (S_j\setminus\{xy\})\cup Y_j$.
  Recall that $\outA[T]{y}$ is non-empty and $v$-splittable into~$(Y_1,Y_2,\ldots)$.
  Again, we consider the three cases that allow for $\chi[v,S,\psi,j-1]=\textsc{True}$ by \cref{def:witness}.

  First, suppose \cref{def:witness}\ref{it:j0} applies, that is,
  $j=1$ and $S\setminus\{xy\}$ is $v$-splittable into $(S'_1,S'_2,\ldots)$ (we are using the weaker of the two conditions),
  where $S'_j=S_j\setminus\{xy\}$.
  Then, $\emb^*$ is a pseudo-embedding of $T_{S'_j\cup Y_j}$ into $N$ such that,
  for all~$st\in S'_j$, the path $\emb^*(st)$ begins with $\psi_j(st)=\psi(st)$ and,
  for all~$yz\in Y_j$, the path $\emb^*(yz)$ begins with $\psi'_j(yz)\in \outA[N]{v}$, so its startpoint is~$v$.
  Thus, $\emb^*$ is a witness for $(v,S,\psi,j)$ by \cref{def:witness}\ref{it:S-xy v-split}.
  
  Second, suppose \cref{def:witness}\ref{it:v-split} applies, that is,
  $j>1$ and $S$ is $v$-splittable into $(S_1,S_2,\ldots)$ (note that $S_j$ appears in this $v$-split) and
  there is a pseudo-embedding~$\emb'$ of $T_{\bigcup_{i\leq j-1} S_i}$ into $N$ such that, 
  for each $st\in \bigcup_{i\leq j-1} S_i$, the path~$\emb'(st)$ begins with $\psi(st)$.
  Then, $\emb'$ maps into paths below $\bigcup_{i\leq  j-1} S_i$ in $\Gamma$ and $\emb^*$ maps into paths below $(S_j\setminus\{xy\})\cup Y_j$ in $\Gamma$,
  so no paths mapped to by $\emb'$ and $\emb^*$ can intersect, implying that $\emb'\cup\emb^*$ is a pseudo-embedding.
  Since $\bigcup_i S_i=S$, there is some $k$ with $xy\in S_k$,
  so $L(T_y)\subseteq L(\Gamma_{q_k})$ and $Y_k=\outA[T]{y}$ and $Y_i=\emptyset$ for all $i\ne k$.
  If $k\ne j$, then $\emb^*$ is a pseudo-embedding of $T_{S_j}$ into $N$ such that
  $\emb^*(st)$ begins with $\psi_j(st)=\psi(st)$ for each $st\in S_j$.
  Thus, $\emb:=\emb^*\cup\emb'$ satisfies the conditions of \cref{def:witness}\ref{it:v-split}, so it is a witness for $(v,S,\psi,j)$.
  If $k=j$, then $\emb:=\emb^*\cup\emb'$ is a pseudo-embedding for $T_{\bigcup_{i\leq j}S'_i\cup Y_i}$ into $N$, where $S'_i:=S_i\setminus\{xy\}$ and,
  for each~$st\in \bigcup_{i\leq j} S'_i$, the path~$\emb(st)$ begins with $\psi_j(st)=\psi(st)$ and,
  for each~$yz\in Y_j=\bigcup_{i\leq j} Y_i$, the path~$\emb(yz)$ begins with $\psi'_j(yz)\in\outA[N]{v}$, so its startpoint is~$v$.
  Thus, $\emb:=\emb^*\cup\emb'$ satisfies the conditions of \cref{def:witness}\ref{it:S-xy v-split}, so it is a witness for $(v,S,\psi,j)$.

  Third, suppose \cref{def:witness}\ref{it:S-xy v-split} applies, that is,
  $j>1$ and $S\setminus\{xy\}$ is $v$-splittable into $(S'_1,S'_2,\ldots)$ and
  there is a pseudo-embedding~$\emb'$ of $T_{\bigcup_{i\leq j-1} S'_i\cup Y_i} = T_{\bigcup_{i\leq j-1} S'_i}$ into $N$ such that,
  for all~$st\in \bigcup_{i\leq j-1} S'_i\cup Y_i=\bigcup_{i\leq j-1} S'_i$, the path $\emb'(st)$ begins with $\psi(st)$ and,
  for all~$yz\in\outA[T]{y}$, the startpoint of $\emb'(yz)$ is $v$.
  Again, $\emb'$ maps into paths below $\bigcup_{i\leq j-1} S'_i\cup Y_i$ in $\Gamma$ and $\emb^*$ maps into paths below $S'_j\cup Y_j$ in $\Gamma$,
  so no paths mapped to by $\emb'$ and $\emb^*$ can intersect, implying that $\emb'\cup\emb^*$ is a pseudo-embedding.
  Further, for each~$st\in\bigcup_{i\leq j} S'_i$, the path~$\emb(st)$ begins with $\psi(st)$ and,
  for each~$yz\in\bigcup_{i\leq j} Y_i$, the startpoint of $\emb(yz)$ is $v$ (if $yz\notin Y_j$) or $\emb(yz)$ begins with $\psi'_j(yz)$ (if $yz\in Y_j$),
  but since $\psi'_j(st)\in\outA[N]{v}$, the startpoint is $v$.
  Thus, $\emb$ is a witness for $(v,S,\psi,j)$.
\end{proof}

\paragraph*{Running Time.}
We first bound the number of entries in our lookup table.

\begin{lemma}\label{lem:BSigBound}
  There are $O(2^{b + b \log b}\cdot |A(N)|)$
  signatures~$(v,S,\psi,j)$,
  where $b:=\max_v|GW_v(\Gamma)|$.
\end{lemma}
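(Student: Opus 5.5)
The plan is to count the four coordinates of a signature~$(v,S,\psi,j)$ separately and then multiply. The factor $|A(N)|$ should come from the pairs~$(v,j)$, and the factor $2^{b+b\log b}$ from the pairs~$(S,\psi)$ for a fixed~$v$. I will aim to write $2^{b+b\log b}$ as $2^b\cdot b^b$: one factor~$2^b$ for choosing the image~$\psi(S)\subseteq GW_v(\Gamma)$, and one factor~$b^b$ for how $S$ is matched onto that image. I will deliberately avoid using \cref{lem:subforestBound} for~$S$, because its $4^b$ bound would lose a factor~$2^b$.

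\textbf{Step 1 (the pairs $(v,j)$).} By \cref{def:signature}, $0\le j\le\outdeg[\Gamma]{v}$. By the observation on canonical tree-extensions, $\outdeg[\Gamma]{v}\le\outdeg[N]{v}$. Hence
\[
  \sum_{v}\bigl(\outdeg[\Gamma]{v}+1\bigr)\;\le\;|A(N)|+|V(N)|\;=\;O(|A(N)|),
\]
where the last step uses that $N$ is connected (every node is reachable from the root), so $|V(N)|\le|A(N)|+1$.

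\textbf{Step 2 (the image of $\psi$).} Fix~$v$. Since $\psi:S\to GW_v(\Gamma)$ is injective, $|S|\le b$, and the image~$B:=\psi(S)$ is a subset of $GW_v(\Gamma)$. There are therefore at most $2^{|GW_v(\Gamma)|}\le 2^b$ choices for~$B$.

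\textbf{Step 3 (recovering $(S,\psi)$ from $B$; the main obstacle).} I want to show that, once $B$ is fixed, there are at most $b!\le b^b=2^{b\log b}$ pairs~$(S,\psi)$ with $\psi(S)=B$. For this it suffices to prove that $B$ determines $S$ as an unordered set; the bijection $S\to B$ then contributes at most $|B|!$ choices.

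To pin down~$S$, I would use three facts. First, $S$ is an anti-chain, so the clusters~$L(T_y)$ for $xy\in S$ are pairwise disjoint. Second, $S$ is a top-arc set for $L'=L(\Gamma_v)$, so these clusters partition~$L'$. Third, validity gives $L(T_y)\subseteq L(\Gamma_z)$ whenever $\psi(xy)$ has head~$z$, and the heads of the arcs in~$B$ are pairwise distinct.

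The argument would go bottom-up through the heads~$z$ of arcs in~$B$ with respect to~$\belo{\Gamma}$. For each such~$z$, take the leaves of $L(\Gamma_z)$ not already claimed by heads strictly below~$z$ in~$\Gamma$. The aim is to show that the corresponding cluster $L(T_y)$ must be exactly this set. Since clusters of~$T$ are laminar and the clusters must partition~$L'$, this would determine~$y$, and hence the arc~$xy$, uniquely.

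I expect this determination claim to be the hard step. If it fails in general, my fallback is to keep the $2^b$ factor for~$B$ and count the pairs $(S,\psi)$ with $\psi(S)=B$ more cleverly. Concretely, I would show that for each arc in~$B$ there are at most $b$ admissible candidates in~$T$, namely the maximal clusters inside $L(\Gamma_z)\cap L'$ that remain compatible with a partition of~$L'$. This would bound the number of assignments by $b^{|B|}\le b^b$ directly, giving the same total.

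\textbf{Step 4 (combine).} Multiplying the three counts gives
\[
  O\bigl(|A(N)|\bigr)\cdot 2^b\cdot b^b \;=\; O\bigl(2^{b+b\log b}\cdot|A(N)|\bigr)
\]
signatures, as claimed.
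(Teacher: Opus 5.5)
\textbf{There is a genuine gap in Step~3, and it cannot be repaired within your scheme.}

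The claim that $B=\psi(S)$ determines $S$ is false. A leaf of $L(\Gamma_z)$ can lie in the cluster assigned to a head strictly \emph{above} $z$ in $\Gamma$, and your bottom-up rule ignores this.

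Here is a concrete counterexample.
\begin{itemize}
\item Take tree nodes $p_1\to p_2\to\cdots\to p_s$ below the root and reticulations $z_1\to z_2\to\cdots\to z_s$. Let $z_1$ have parents $p_1,p_s$, let each $z_i$ with $i\ge 2$ have parents $z_{i-1},p_i$, and let $z_s$ lead into a binary tree holding all leaves.
\item Let $\Gamma$ be the path $\rho,p_1,\dots,p_s,z_1,\dots,z_s$ followed by that tree. This is a canonical tree extension with $b=s+1$.
\item For $v=z_1$ and $B=\{p_1z_1,\dots,p_sz_s\}$ (the heads are pairwise distinct), every head satisfies $L(\Gamma_{z_i})=L(\Gamma_v)$. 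So validity imposes no constraint at all.
\item Let $T$ be a complete binary tree of depth at least $s$. Then each of the $C_{s-1}$ (Catalan) top-arc sets of size $s$ combines with each of the $s!$ bijections onto $B$.
\end{itemize}
This single $B$ therefore carries $s!\,C_{s-1}$ pairs $(S,\psi)$. That is roughly $(4s/e)^s$ up to polynomial factors, which exceeds $b^b$ for large $s$.

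Your fallback fails as well. The arc $p_1z_1$ alone has about $2^s$ admissible preimages (every arc of $T$ of depth less than $s$), not at most $b$. So no uniform per-$B$ bound of order $b^b$ exists, and ``$2^b$ images times a per-image bound'' cannot prove the lemma.

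\textbf{The detour was also unnecessary, because \cref{lem:subforestBound} loses nothing.} Since $|S|\le b$, there are at most $4^b$ choices for $S$. Given $S$, there are at most $b!/(b-|S|)!\le b!$ injections $\psi$. By Stirling, $b!\le e\sqrt{b}\,(b/e)^b$, so
\[
  4^b\cdot b!\;\le\; e\sqrt{b}\,(2/e)^b\cdot 2^b b^b \;=\; O\bigl(2^{b+b\log b}\bigr).
\]
The factor $e^{-b}$ that separates $b!$ from $b^b$ absorbs the extra $2^b$. This is exactly the paper's argument (it writes $b!\in O(2^{b\log b-b})$), combined with a count of $O(|A(N)|)$ pairs $(v,j)$. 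Your Step~1 handles that count correctly.
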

\begin{proof}
  Since $\psi$ is an injective function, we have $|S| \leq |GW_v(\Gamma)|$ for any signature $(v,S,\psi,j)$,
  \cref{lem:subforestBound} implies that the number of possible values for $S$ is at most $2^{2b}$.
  Given $v$ and $S$, the number of possible injective functions $\psi:S\to GW_v(\Gamma)$ is $b! / ((b-|S|)!) \leq b! \in O(2^{b\log b - b})$.
  Thus, in total, for each pair~$(v,j)$ there are $O(2^{b + b \log b})$ signatures.
  Finally, each such pair corresponds one-to-one to an arc in $\Gamma$ of which there are $|A(N)|$.
\end{proof}

\begin{theorem}\label{thm:TC_FPT}
  Given a phylogenetic network~$N$ and a tree~$T$ on the same leaf-set~$L$,
  as well as a tree extension~$\Gamma$ for~$N$ of width~$k$,
  \Cref{alg:compute-chi} decides whether $N$ displays $T$
  in $O(4^{k + k\log k} n + nm^2)$~time,
  where $n$ and $m$ are the number of nodes and arcs of $N$.
\end{theorem}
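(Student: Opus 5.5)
The proof splits into correctness and running time, and for correctness we work at the root of~$\Gamma$.

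\textbf{Correctness.} Let~$\rho$ be the root of~$N$. Since $\belo{N}\subseteq\belo{\Gamma}$, $\rho$ is also the root of~$\Gamma$. Because $\rho$ has out-degree one, we can add an artificial in-arc~$r'\rho$ on top of~$N$; this raises the width of~$\Gamma$ by at most one. Symmetrically, we treat the unique top arc~$rt$ of~$T$ as the single element of a top-arc set~$S=\{rt\}$ for $L(\Gamma_\rho)=L$. Let $\psi$ map $rt$ to the first arc of a path starting at the root. Validity holds trivially, since $L(T_t)\subseteq L=L(\Gamma_\cdot)$.

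Concretely, the plan is to call $\Contained(\rho',\{rt\},\psi,\outdeg[\Gamma]{\rho'})$ on the extended network. Equivalently, we can call it at~$\rho$, with $\psi$ ranging over $GW$-arcs entering~$\rho$'s child. By \cref{lem:last-entry}, this entry is \textsc{True} iff there is a pseudo-embedding of $T_S=T$ whose top path begins with the chosen arc. A pseudo-embedding of~$T$ must start its root arc at the root of~$N$ (arguing via the degree-one roots), so by \cref{prop:TC via pseudo-embeddings} this is equivalent to $N$ displaying~$T$.

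It then remains to show that \Cref{alg:compute-chi} computes~$\chi$ correctly. We proceed by induction on the pair (position of~$v$ in~$\Gamma$ bottom-up, $j$):
\begin{itemize}
\item The case $j=0$ is immediate from \cref{def:witness}\ref{it:j0}.
\item For $j>0$ with no arc of~$S$ mapped into~$\inA[N]{v}$, \cref{lem:DP_correct1} gives exactly the recursion in the final else-branch. If $S$ is not $v$-splittable, both sides are false.
\item If some $xy$ has $\psi(xy)\in\inA[N]{v}$, it is unique by injectivity and \cref{lem:PEuniquein}. Then \cref{lem:DP_correct2} matches the two loops of the algorithm.
\end{itemize}
\Cref{lem:sig} guarantees that every recursive call is on a signature, so the induction hypothesis applies.

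\textbf{Running time.} With memoization, each signature is evaluated once. By \cref{lem:BSigBound}, with $b=k$ (or $k+1$ after adding the root arc, which only affects constants), there are $O(2^{k+k\log k}m)$ signatures. Since $\Gamma$ has $n-1$ arcs, it is cleaner to count pairs $(v,j)$ as arcs of~$\Gamma$, giving $O(2^{k+k\log k}n)$.

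The cost per call splits into two parts. The recursive loops enumerate $vw\in\outA[N]{v}$ (at most $b$ relevant choices, since $vw\in GW_{q_j}$) or valid injections~$\psi'_j$ into a set of size $\le b$. Multiplied over the table, the cost of these enumerations should stay within $4^{k+k\log k}$. The point is that the number of pairs (signature, choice of $\psi'_j$) is bounded by the number of target signatures times~$2^{O(k)}$, so the factor $2^{k+k\log k}$ at most squares in the $k\log k$ part, giving the stated $4^{\cdot}$.

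The remaining checks are leaf-set inclusions $L(T_z)\subseteq L(\Gamma_{q})$ and $v$-splittability. We answer these in $O(1)$ per query after precomputation. For each node~$z$ of~$T$, compute the lowest node of~$\Gamma$ whose leaf set contains $L(T_z)$, via LCA computations in~$\Gamma$. Then $L(T_z)\subseteq L(\Gamma_q)$ iff $q$ is an ancestor of that node in~$\Gamma$. We expect this preprocessing, together with computing the bags~$GW_v(\Gamma)$ and the in/out-arc membership tests, to account for the additive $O(nm^2)$ term; a naive bag computation costs $O(nm)$ per arc-pair scan.

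\textbf{Main obstacle.} The hard step is the precise accounting of the enumeration of~$\psi'_j$ inside each call. We must show that the total work over all calls is $O(4^{k+k\log k}n)$ rather than an extra factor of~$n$ or~$m$. For this, we will charge each enumerated pair to the child signature it produces at~$q_j$, and use that the out-arcs of~$v$ available lie in a bag of size~$\le k$.
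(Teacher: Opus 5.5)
Your overall structure matches the paper's: correctness comes from \cref{lem:last-entry}, \cref{lem:sig} and the two recursion lemmas, and the running time is the number of signatures times the work per signature. The gaps are in the running-time part, which is what the paper's proof actually consists of, and in the top-level call.

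\textbf{Running time.} Your planned charging argument fails. You charge each pair (signature, $\psi'_j$) to the child signature it produces at $q_j$, but the child $(q_j,(S_j\setminus\{xy\})\cup Y_j,\psi_j\cup\psi'_j,\cdot)$ forgets $S\setminus S_j$ and $\psi$ on it. So the number of parents per child is in general not $2^{O(k)}$, and your ``target signatures times $2^{O(k)}$'' claim is unsupported.

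No charging is needed. In each call, $\psi'_j$ is an injection from $Y_j$ into $\outA[N]{v}\cap GW_{q_j}(\Gamma)$, a set of size at most $k$. Hence there are at most $k!\in O(2^{k\log k-k})$ choices. Multiplying by the $O(2^{k+k\log k}n)$ signatures of \cref{lem:BSigBound} gives $O(2^{2k\log k}n)\subseteq O(4^{k+k\log k}n)$, which is exactly the paper's count.

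You also misplace the $nm^2$ term. In the paper it does not come from preprocessing. It comes from the $O(m)$ per-signature cost of scanning $\outA[N]{v}$ in the loop over $vw$, absorbed via $2^{k+k\log k}\cdot nm\le n(4^{k+k\log k}+m^2)$. Your remark that at most $k$ arcs $vw$ qualify bounds the number of recursive calls, not the cost of the scan. If you want $O(k)$ work per call, you must iterate over $\outA[N]{v}\cap GW_{q_j}(\Gamma)$ using precomputed bags. These have total size at most $nk$ and can be built in $O(nm)$ by walking up $\Gamma$ from each arc's head. That is legitimate, and even gives $O(4^{k+k\log k}n+nm)$, but you have to say so.

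\textbf{Top-level call.} As written, your initial call is not a signature. $GW$ of the root of $\Gamma$ is empty, so neither $(\rho',\{rt\},\psi,\cdot)$ nor a call at $\rho$ in the original network admits a nonempty $\psi$. Instead, call at $\rho$ in the extended network with $\psi(rt)=\rho'\rho$ (the paper's $\rho'_N\rho_N$). Alternatively, call at the unique child $w$ of $\rho$ in $N$ with $\psi(rt)=\rho w$, using $GW_w(\Gamma)=\{\rho w\}$ and $L(\Gamma_w)=L$.

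It is also false that every pseudo-embedding of $T$ starts $\emb(rt)$ at $\rho$. What you need is that one can be modified to do so, by prepending a path from $\rho$ to the origin of $\emb(rt)$. This path is arc-disjoint from all other image paths: their arcs have tails $\beloeq{N}\emb(t)$, while the prepended arcs have heads strictly above $\emb(t)$. The artificial arc does not raise the width at all, since it only enters $GW_\rho$. A width of $k+1$ would in fact cost a factor $\Theta(k)$, not a constant.

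\textbf{Uniqueness of $xy$.} Uniqueness of $xy$ with $\psi(xy)\in\inA[N]{v}$ does not follow from injectivity of $\psi$: two arcs with distinct tails may enter a reticulation $v$. \cref{lem:PEuniquein} is about pseudo-embeddings, not about $\psi$. You must either build a distinct-heads condition into signatures and filter the $\psi'_j$ so that recursive calls preserve it, or show separately that such entries are, and are computed as, \textsc{False}.
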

\begin{proof}
  Let us abbreviate~$n:=|V(N)|$ and $m:=|A(N)|$.
  By \cref{lem:BSigBound}, the number of signatures~$(v,S,\psi,j)$ is in $O(2^{k+k\log k}\cdot n)$,
  since there are $O(|E(\Gamma)|)=O(n)$ valid pairs~$(v,j)$.
  In order to answer a call with $j=0$, we may need to check if $S$ or $S\setminus \{xy\}$ is $v$-splittable (where $\psi(xy)\in\inA[N]{v}$).
  However, by the nature of the tree extension $\Gamma$, 
  the recursion only produces queries with sets~$S$ that are either $v$-splittable or contain some $xy$ with $\psi(xy)\in\inA[N]{v}$
  such that $S\setminus\{xy\}$ is $v$-splittable.\footnote{Indeed, line~3 in the algorithm can be replaced by “\textbf{return} true;”.}\todo{MW: needs proof?}
  Further, in line~9, we may need to check for some pair $v\in V(N)$ and $y\in V(T)$ whether $\outA[T]{y}$ is $v$-splittable.
  To this end, we precompute a table~$\lambda$ mapping each~$y\in V(T)$ to the lowest node~$v$ in $\Gamma$ with $L(T_y)\subseteq L(\Gamma_v)$,
  that is, $\lambda(y) := LCA_{\Gamma}(L(T_y))$.
  This can be done in $O(n)$ time using range-minimum-query-based LCA preprocessing (see \cite{conf-BF00})
  by computing the LCA in $\Gamma$ of the set of $\lambda(z_i)$ of all children~$z_i$ of $y$ in $T$.
  Indeed, $\outA[T]{y}$ is $v$-splittable if and only if $\lambda(z_i)\belo{\Gamma} \lambda(y) \beloeq{\Gamma} v$ for all $i$.
  In order to compute that actual $v$-split~$(Y_1,Y_2,\ldots)$, we can modify the LCA-routine to also return the first node
  on the path in $\Gamma$ from $\lambda(y)$ to $\lambda(z_i)$ for each child~$z_i$ of $y$ in $T$.
  The grouping of $\outA[T]{y}$ by this first node is the $v$-split we are looking for.
  
  For each signature, we make $\outdeg[N]{v}$ recursive calls if $Y_j=\outA[T]{y}$ and,
  for each~$v$ there is at most one~$j$ with $Y_j=\outA[T]{y}$.
  As $\outdeg[\Gamma]{v}\leq\outdeg[N]{v}$, the total time spent preparing these recursive calls is in $O(m)$.

  Finally, if $\outA[T]{y}\ne\emptyset$,
  then we make another recursive call for each valid $\psi'_j:Y_j \to \outA[N]{v}\cap GW_{q_j}(\Gamma)\setminus\psi_j(S_j)$,
  prepended by a set-union in the second and third argument which can be amortized onto the construction of~$\psi'_j$.\todo{MW: needs proof?}
  However, since $|\outA[N]{v}\cap GW_{q_j}(\Gamma)\setminus\psi_j(S_j)|\leq |GW_{q_j}(\Gamma)|\leq k$,
  there are at most $O(2^{k\log k - k})$ such functions and they can be enumerated in $O(2^{k\log k - k})$~time.\todo{MW: needs proof?}
  
  Using memoization to avoid repeated computations for the same signature, this memoization-augmented version
  answers the query
  \Contained{$\rho_N$,$\{\rho'_N\rho_N\}$,$\{\rho'_T\rho_T\to\rho'_N\rho_N\}$,$\outdeg[N]{\rho_N}$} in time\todo{MW: explain this $\rho'_N$ business}
  \begin{align*}
    O(2^{k + k\log k}\cdot n \cdot (m + 2^{k\log k - k}))
    + O(n)
    \subseteq 
    O(4^{k + k\log k}\cdot n + nm^2) \def\qed{}\qedhere\tag*{\qed}
  \end{align*}
\end{proof}

\section{Hardness Assuming ETH}\label{sec:ETHhardness}

\paragraph{Lower Bounds for Disjoint Paths.}
\citet{LMS18} showed that, unless ETH fails, the following problem cannot be solved in $2^{o(k\log k)}\cdot n^{O(1)}$~time.

\probdef%
{a set $C$ of size-$k$ subsets of $[k]\times[k]$}
{is there a permutation $\pi$ of $[k]$ such that $\{(i,\pi(i))\mid i\in[k]\}$ is a hitting set of $C$?}
{$k\times k$ Hitting-Set}
{$kk$HS}{def:kkHS}

\noindent
We reduce \textsc{$k\times k$ Hitting Set} to \textsc{Tree Containment} on networks with scanwidth in $O(k)$, thus showing that
algorithms for \textsc{Tree Containment} that run in $2^{o(sw\log sw)}n^{O(1)}$ are unlikely to exist.
As an intermediate result, we adapt the reduction given by \citet{LMS18} of \textsc{$k\times k$ Hitting Set} to \textsc{Disjoint Paths}
on DAGs of pathwidth~$O(k)$ such that the DAGs constructed in the reduction have scanwidth~$O(k)$ instead.

\tikzstyle{iface0}=[smallvertex0, fill=gray]
\tikzstyle{iface1}=[smallvertex1, fill=gray]
\tikzstyle{iface2}=[smallvertex2, fill=gray]
\tikzstyle{iface3}=[smallvertex3, fill=gray]

\newcommand{\drawgrid}[2]{
  \foreach \i in {0,1,2,3} \node[smallvertex0] (x\i) at ($(#1)+(\i*#2/4.6+#2/6,0)$) {};
  \foreach \i in {0,1,2,3} {
    \node[smallvertex3] (y\i) at ($(#1)+(#2, \i*#2/5+#2/5)$) {};
    \foreach \j in {0,1,2,3} \node[iface1] (w\i\j) at ($(x\j)+(0,\i*#2/5+#2/5)$) {};
  }
  \begin{pgfonlayer}{background}
    \draw[rounded corners=5] (#1) rectangle ($(#1)+(#2,#2)$);
  \end{pgfonlayer}
}
\newcommand{\drawgridpaths}[2]{
  \foreach \i in {0,1,2,3} \node[smallvertex0] (x\i) at ($(#1)+(\i*#2/4.2+#2/5,0)$) {};
  \foreach \i in {0,1,2,3} {
    \node[smallvertex3] (y\i) at ($(#1)+(#2, \i*#2/5+#2/5)$) {};
    \foreach \j in {0,1,2,3} {
    \node[iface2, tiny] (f\i\j) at ($(x\i) + (-#2/8,\j*#2/5 + #2/4)$) {};
    }
  }
  \begin{pgfonlayer}{background}
    \draw[rounded corners=5] (#1) rectangle ($(#1)+(#2,#2)$);
    \foreach \i in {0,1,2,3} {
      \draw[rounded corners=#2, fill=llightgray] ($(f\i0)+(-#2/16,-#2/12)$) rectangle ($(f\i3)+(#2/16,#2/12)$);
      \coordinate (st) at ($(f\i0)+(#2/8,-#2/32)$);
      \draw[rounded corners=#2, fill=llightgray] ($(st)+(-#2/16,-#2/6)$) rectangle ($(f\i3)+(#2/8,#2/12)$);
    }
  \end{pgfonlayer}
}
\begin{figure}[t]
  \begin{minipage}[t]{.49\textwidth}\centering
    \begin{tikzpicture}[yscale=-.3, xscale=.7, baseline={(0,2)}, label distance=-3pt]
      \draw[rounded corners=5] (0,-1) rectangle (5,14);

      \foreach [count=\y from 0] \ang in {0,0,0,0}{
      \node[smallvertex] (u0\y) at (.5, 3.5*\y+2.5) {};
        \foreach [count=\x from 1] \st in {,,,3,,,,} {
          \pgfmathtruncatemacro{\prevx}{\x - 1}
          \node[smallvertex\st] (u\x\y) at ($(u\prevx\y)+(1.125,0)$) {} edge[revarc] (u\prevx\y);
        }
        \foreach \x in {0,1,2,3} \node[iface1] (v\x\y) at ($(u\x\y)+(-90+\ang:1.7)$) {} edge[arc] (u\x\y);
        \pgfmathtruncatemacro{\nexty}{\y + 1}
        \node at ($(u4\y)+(-70:1.3)$) {$b_{0,\nexty}$};
      }
      \foreach \x in {0,1,2,3}{
        \pgfmathtruncatemacro{\nextx}{\x + 1}
        \node[smallvertex0, label=above:$a_{\nextx, 0}$] (a\x) at (\x*1.1+.5, -1) {};
        \foreach [count=\y from 0] \b in {0,10,10,10} \draw[arc] (a\x) to[bend right=\b] (v\x\y);
      }
    \end{tikzpicture}\\
    \begin{tikzpicture}[yscale=-.7, xscale=.56,baseline={(0,3)}]
      \useasboundingbox (-6.5,-3) rectangle (5,6.5);
      \drawgrid{0,0}{5}
      \node[smallvertex, label=left:$c_{1,0}$] (c0) at (-1,-2) {};
      \foreach \j in {1,...,4} {
        \pgfmathtruncatemacro{\prevj}{\j-1}
        \node[smallvertex, label=left:$d_{1,\j}$] (d\j) at ($(c\prevj)+(0,1)$) {} edge[revarc] (c\prevj) edge[arc, bend left=5] (w\prevj0);
        \node[smallvertex, label=left:$c_{1,\j}$] (c\j) at ($(d\j)+(0,1)$) {} edge[revarc, bend right=5] (w\prevj0);
        
        \foreach [count=\i from 0] \a/\st in {1/,/iface2,2/} \node[smallvertex, \st, label=90:$f^{\a}_{1,\j}$] (f\i\j) at ($(c\j)+(-6+1.5*\i,0)$) {};
        \draw[arc] (f0\j) -- (f1\j);
        \draw[arc] (f1\j) -- (f2\j);
        \draw[arc] (c\prevj) -- (f2\j);
        \draw[arc] (f1\j) to[bend left=20] (c\j);
        \draw[arc] (f0\j) .. controls ($(-8,-2)+(-135:\j)$) and ($(c0)+(-160:3)$) .. (c0);
      }
      \begin{pgfonlayer}{background}
        \draw[rounded corners=5, fill=llightgray] ($(c0)+(-2,-.5)$) rectangle ($(c4)+(.5,.5)$);
        \draw[rounded corners=5, fill=llightgray] ($(f01)+(-.5,-1.5)$) rectangle ($(f24)+(.5,.5)$);
      \end{pgfonlayer}
    \end{tikzpicture}
  \end{minipage}
  \hspace{0mm}
  \begin{minipage}[t]{.49\textwidth}\centering
    \begin{tikzpicture}[yscale=-1, xscale=.8, baseline={(0,2)}]
      \drawgrid{0,0}{3}
      \foreach \i in {0,1,2,3} {
        \node[smallvertex] (z\i0) at ($(y\i)+(1,0)$) {} edge[revarc] (y\i);
        \foreach \j in {1,2,3} {
          \pgfmathtruncatemacro{\prevj}{\j-1}
          \node[smallvertex] (z\i\j) at ($(z\i\prevj)+(.9,0)$) {} edge[revarc] (z\i\prevj);
        }
      }
      \drawgridpaths{2,4}{3}
      \foreach \i in {0,1,2,3} {
        \foreach \j in {0,1,2,3} {
          \draw[arc] (z\i\j) to (f\j\i);
        }
      }
      \foreach \i in {0,1,2,3} {
        \node[smallvertex] (z\i0) at ($(y\i)+(1,0)$) {} edge[revarc] (y\i);
        \foreach \j in {1,2,3} {
          \pgfmathtruncatemacro{\prevj}{\j-1}
          \node[smallvertex] (z\i\j) at ($(z\i\prevj)+(.9,0)$) {} edge[revarc] (z\i\prevj);
        }
      }
      
      \drawgridpaths{4,8}{3}
      \foreach \i in {0,1,2,3} {
        \foreach \j in {0,1,2,3} {
          \draw[arc] (z\i\j) to (f\j\i);
        }
      }
      \foreach \i in {0,1,2,3} \draw[dashed, arc] (y\i) --++(1,0);

    \end{tikzpicture}
  \end{minipage}

  \caption[Example of \cref{cons:kkHS to DP}]{Example of \cref{cons:kkHS to DP}. Top left: gadget~$Q_\ell$. Bottom left: gadget~$R_\ell$. Right: the~$Q_i$ and the~$R_i$ are strung together such as to form a chain whose cutwidth is bounded in $k$.}
  \label{fig:kkHS to TC}
\end{figure}

The following construction is a slight modification of the construction of \citet{LMS18} showing that,
unless ETH fails, \textsc{Disjoint Paths} cannot be solved in $O^*(2^{o(\omega \log \omega)})$~time,
where $\omega$ is the pathwidth of the undirected graph underlying the input DAG.
We try to keep our notation as close to the original reduction as possible.

\begin{construction}[see \cref{fig:kkHS to TC}]\label{cons:kkHS to DP}
  Let $\mathcal{C}=\{S_1,S_2,\ldots\}$ be an instance of \textsc{$k\times k$ Hitting Set}.
  For each set $S_\ell$, we create two gadgets~$Q_\ell$ and $R_\ell$.
  Then, we string these gadgets together by adding arcs from $Q_\ell$ to $R_\ell$ and identifying some vertices in $R_\ell$ with vertices in $Q_{\ell+1}$.

  \medskip\noindent
  First, we create a gadget $Q_\ell$ as follows (see \cref{fig:kkHS to TC} (top left)).
  \begin{enumerate}[(Q1)]
    \item for each $1\leq i\leq k$, create a vertex $a_{i,0}$
      \todo{MW: remove the 0 index for better readability?}
    \item for each $1\leq j\leq k$, create a path $(a_{1,j},a_{2,j},\ldots,a_{k,j}, b_{0,j},b_{1,j},b_{2,j},\ldots,b_{k,j})$
    \item for each $1\leq i,j\leq k$, create a vertex $v_{i,j}$ with arcs $a_{i,0}v_{i,j}$ and $v_{i,j}a_{i,j}$
  \end{enumerate}
  Note that we consistently represent
  vertices $a_{i,0}$  $b_{1,j}$, and $v_{i,j}$ by
  \tikzvert{smallvertex0}, \tikzvert{smallvertex3}, and \tikzvert{iface1}, respectively.

  \medskip\noindent
  Next, we construct a second gadget $R_\ell$ as follows (see \cref{fig:kkHS to TC} (bottom left)).
  \begin{enumerate}[(R1)]
    \item for each $1\leq i\leq k$, create a path $(c_{i,0},d_{i,1},w_{i,1},c_{i,1},d_{i,2},w_{i,2},\ldots,c_{i,k})$.
      Each vertex $w_{i,j}$ will later be identified with $v_{i,j}$ of the gadget $Q_{\ell+1}$ (unless $\ell=|\mathcal{C}|$).
    \item for each $1\leq i,j\leq k$, create the path $(f^1_{i,j}, f_{i,j}, f^2_{i,j})$ and arcs
      $f^1_{i,j}c_{i,0}$, $c_{i,j-1}f^2_{i,j}$ and $f_{i,j}c_{i,j}$.
    \item add vertices $s$, $t$ and, for each $(i,j)\in S_\ell$, add the arcs $sd_{i,j}$ and $d_{i,j}t$ (not drawn in \cref{fig:kkHS to TC}).
  \end{enumerate}
  Note that we consistently represent vertices $f_{i,j}$ by \tikzvert{iface2}.

  \medskip\noindent
  To connect $Q_\ell$ to $R_\ell$ add the arc $b_{i,j}f_{i,j}$ for all $1\leq i,j\leq k$ (see \cref{fig:kkHS to TC} (right)).

  \medskip\noindent
  As the demands for disjoint paths, we add:
  \begin{enumerate}[(D1)]
    \item from $a_{i,0}$ to $c_{i,k}$ for each $1\leq i\leq k$,
    \item from $f^1_{i,j}$ to $f^2_{i,j}$ for each $1\leq i,j\leq k$,
    \item from $s$ to $t$.
  \end{enumerate}
  Note that all demands are from vertices of indegree zero.

  \medskip\noindent
  Finally, for each $1\leq \ell < |\mathcal{C}|$ and
  for all $1\leq i,j\leq k$, identify the vertex $w_{i,j}$ in gadget $R_\ell$ with the vertex $v_{i,j}$ in $Q_{\ell+1}$
  (see \cref{fig:kkHS to TC} (right)).
\end{construction}

\noindent
Let $\mathcal{H}$ be an instance of $k\times k$-\textsc{Hitting Set} and
let $(G,D)$ be the result of applying \cref{cons:kkHS to DP} to $\mathcal{H}$,
where $G$ is a DAG and $D$ is a set of demands for disjoint paths.
Next, consider the result~$G^*$ of contracting, in each~$Q_\ell$, each path~$(a_{1,j},a_{2,j},\ldots,b_{k,j})$ to a single vertex~$b_j$
for each $j$ and observe, that $(G^*,D)$ is exactly the result of applying the reduction of \citet{LMS18} to $\mathcal{H}$.
Therefore, to show correctness of our modified reduction, the following lemma suffices.

\begin{lemma}\label{lem:disjoint_paths_contract}
  Let $v$ be any vertex in $G$ with in- or out-degree one that does not occur in $D$ and
  let $G'$ be the result of contracting $v$ onto its unique parent or child, respectively.
  Then, $D$ can be satisfied in $G$ if and only if $D$ can be satisfied in $G'$.
\end{lemma}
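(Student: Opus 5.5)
We prove the equivalence via an explicit correspondence between families of paths in $G$ and in $G'$. By symmetry (reversing all arcs swaps in- and out-degree, and exchanges origins and destinations of demands), it suffices to treat the case where $v$ has in-degree one, with unique parent $u$. We assume the disjoint paths are required to be arc-disjoint, as in the paper's use of pseudo-embeddings; if vertex-disjointness is meant, the same argument goes through with "arc" replaced by "vertex" where appropriate. Contracting $v$ onto $u$ deletes the arc $uv$ and replaces each out-arc $vw$ by $uw$. The map $\kappa$ sends $vw\mapsto uw$ and is the identity on all other arcs. It is a bijection $A(G)\setminus\{uv\}\to A(G')$ provided no parallel arcs arise. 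If some $uw$ already exists, we keep parallel arcs as a multigraph, or observe that in Construction~\ref{cons:kkHS to DP} this never happens for the contractions actually used. We would state this caveat explicitly.

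For ``$\Rightarrow$'', take a solution $\mathcal{P}$ in $G$. Since $v$ does not occur in $D$, no path of $\mathcal{P}$ begins or ends at $v$. Hence any path through $v$ enters via its unique in-arc $uv$ and leaves via some $vw$. We replace the subpath $u,v,w$ by $u,w$ and leave all other paths as they are. The images are directed paths in $G'$ with the same endpoints, since $u$ cannot repeat on them by acyclicity. Arc-disjointness is preserved because $\kappa$ is injective and $uv$ disappears.

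For ``$\Leftarrow$'', take a solution $\mathcal{P}'$ in $G'$. We pull each path back along $\kappa^{-1}$, inserting $v$ between $u$ and $w$ whenever the path uses an arc $uw=\kappa(vw)$ that comes from $v$. The resulting walks have the same endpoints, since neither endpoint is $v$.

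The main obstacle is disjointness in the pullback. Two paths may use distinct arcs $uw_1$ and $uw_2$, both originating at $v$; both lifts would then use $uv$. Several paths may also pass through $u$ in $G'$. We handle this by a counting argument. Since $\kappa$ is a bijection, at most one path of $\mathcal{P}'$ can use each $uw$. If two paths need $uv$, we reroute one of them. A cleaner route is to restrict to the situation in the construction, where $v$ has in-degree one and out-degree one: the contracted path vertices $a_{i,j}$, $b_{i,j}$ are internal path vertices. Then only the single arc $uw$ comes from $v$, the lift is unique, and disjointness is immediate. I would therefore first check whether the lemma is only applied to vertices with in- and out-degree one, and if so prove it under that strengthened hypothesis, noting that it suffices for identifying $(G^*,D)$ with the Lokshtanov et al.\ instance. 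Otherwise, for vertex-disjoint paths, note that at most one path of $\mathcal{P}'$ visits $u$, so at most one lift uses $uv$, and disjointness follows directly.
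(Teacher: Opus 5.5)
\textbf{Verdict: there is a genuine gap.} Your last sentence contains the right argument, but the rest of the proposal defaults to a reading under which the key step cannot be carried out.

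\textbf{The arc-disjoint reading fails.} Under arc-disjointness the ``$\Leftarrow$'' direction is not just missing a rerouting or counting argument: the statement is false as soon as $v$ has out-degree at least two. Take arcs $s_1u$, $s_2u$, $uv$, $vw_1$, $vw_2$ with demands $(s_1,w_1)$ and $(s_2,w_2)$. In $G$ both paths are forced through $uv$. After contracting $v$ onto $u$, the paths $(s_1,u,w_1)$ and $(s_2,u,w_2)$ are arc-disjoint. So no local rerouting argument of the kind you sketch can exist.

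\textbf{The degree-one fallback does not match how the lemma is used.} On the path $(a_{1,j},\ldots,a_{k,j},b_{0,j},\ldots,b_{k,j})$ of \cref{cons:kkHS to DP}, each $a_{i,j}$ with $i\geq 2$ has in-degree two (from $a_{i-1,j}$ and $v_{i,j}$). Each $b_{i,j}$ with $1\leq i<k$ has out-degree two (to $b_{i+1,j}$ and $f_{i,j}$). The caterpillar and reticulation-chain nodes contracted in \cref{lem:D_in_N_and_G} likewise have degree two on one side.

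\textbf{Vertex-disjointness is the intended notion.} In the contracted graph $G^*$, the vertex $b_j$ receives arcs from all $v_{i,j}$ and sends arcs to all $f_{i,j}$. Only vertex-disjointness stops several row paths from using column~$j$. So the problem whose lower bound the lemma must transfer is vertex-disjoint paths.

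\textbf{Under that reading, your argument is the paper's proof.}
\begin{itemize}
\item In ``$\Rightarrow$'', at most one path contains $v$, and it must enter through $uv$. Contracting $uv$ on it merely deletes $v$ from its vertex set, so vertex-disjointness is preserved.
\item In ``$\Leftarrow$'', at most one path visits the merged vertex. It can only enter through an in-arc of $u$ in $G$, since the sole in-arc of $v$ has disappeared. If it leaves through an arc inherited from some $vw$, you reinsert $v$, which lies on no other path.
\end{itemize}
The bijection $\kappa$ and the parallel-arc caveat then play no role. The paper's proof relies on the same uniqueness without saying so, when it speaks of \emph{the} path through the contracted vertex.

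Discard the arc-disjoint framing and the degree-one restriction, and make this the whole proof. Your reduction of the out-degree-one case to the in-degree-one case by reversing all arcs is fine.
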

\begin{proof}
  Suppose that $v$ has in-degree one as the proof is analogous for the case that $v$ has out-degree one.
  Let $u$ be the unique parent of $v$.

  ``$\Rightarrow$'':
  Let $P$ be a set of paths in $G$ satisfying $D$.
  Suppose there is a path $p\in P$ containing $v$ since, otherwise, this direction clearly holds.
  Since $v$ does not occur in $D$, we know that $p$ contains the unique incoming arc $uv$ of $v$.
  But then, replacing $p$ by the result $p'$ of contracting $uv$ in $p$ results in a path-set $P'$ satisfying $D$ in $G'$.

  ``$\Leftarrow$'':
  Similarly to ``$\Rightarrow$'', we modify the path $p'$ containing the vertex~$w$ in $G'$ that corresponds to the result of the contraction of $v$
  with the subpath $(u,v)$.
\end{proof}

\noindent\looseness=-1
Now, the whole reason why we modified the construction of \citet{LMS18} is to lower the scanwidth of the construction to $O(k)$.
To prove this bound, we define \emph{blocks} of vertices that we will string up to form a (reverse) topological order with the desired scanwidth.
We give topological orders $\sigma_\ell$ for parts of $G$ such that $(\sigma_0,\sigma_1,\sigma_2,\ldots,\sigma_{|\mathcal{C}|})$ is a topological order and
$G$ has cutwidth (and, thus, scanwidth) in $O(k)$.
The high-level idea is to cut each gadget in vertical slices, taking care that all vertices of a slice only have out-neighbors in the same or the next slice.

First, $\sigma_0$ is the concatenation of the orders $(a_{i,0},v_{i,1}, v_{i,2},\ldots,v_{i,k},a_{i,1},a_{i,2},\ldots,a_{i,k})$
for $1\leq i \leq k$, thus covering the vertices of $Q_1$.
Second, $\sigma_\ell$ with $\ell\geq 1$ starts with the $s$-vertex of $R_\ell$ and ends with the $t$-vertex of $R_\ell$ and
is otherwise made up of blocks $(B_0,B_1,B_2,\ldots,B_k)$ containing
vertices of $Q_\ell$ (namely the $b_{i,j}$), vertices of $R_\ell$, and vertices of $Q_{\ell+1}$ (if $\ell<|\mathcal{C}|$).
First, $B_0$ contains all $b_{0,j}$ of $Q_\ell$ for $1\leq j\leq k$.
Then, for each $1\leq i\leq k$, the block $B_i$ is
$(X_i^1, \ldots, X_i^k, c_{i,0}, a_{i,0}, Y_i^1,\ldots, Y_i^k)$,
where $a_{i,0}$ is a node from $Q_{\ell+1}$ occuring only if $\ell<|\mathcal{C}|$ and,
for each $1\leq j\leq k$, the sub-orders $X^j_i$ and $Y^j_i$ are defined as
$X_i^j := (b_{i,j}, f^1_{i,j}, f_{i,j})$ where $b_{i,j}$ is from $Q_\ell$, and
$Y_i^j := (f^2_{i,j}, d_{i,j}, w_{i,j}, a_{i,j}, c_{i,j})$, where the $a_{i,j}$ are from $Q_{\ell+1}$
and the $w_{i,j}$ are the nodes $v_{i,j}$ from $Q_{\ell+1}$, both of which occur in $B_i$ only if $\ell<|\mathcal{C}|$.

\newcommand{\srev}{\ensuremath{\sigma^\text{r}}}
\begin{lemma}\label{lem:kkHS sw}
  The cutwidth of $\sigma$ (and, thus, of $G$) is in $O(k)$.
\end{lemma}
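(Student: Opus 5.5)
Having $\sigma=(\sigma_0,\sigma_1,\ldots,\sigma_{|\mathcal{C}|})$, I will first check that $\sigma$ is a topological ordering of $G$. Then I will bound $cw_r(\sigma)$ for every cut position $r$. Topologicality is a routine check of the arcs created in (Q1)--(Q3), (R1)--(R3), the connecting arcs $b_{i,j}f_{i,j}$, and the identifications $w_{i,j}=v_{i,j}$. For example, $a_{i,0}$ precedes each $v_{i,j}=w_{i,j}$ because it sits before $Y_i^1$ in $B_i$. The $a$-paths of $Q_{\ell+1}$ run $a_{1,j},\ldots,a_{k,j}$ across blocks $B_1,\ldots,B_k$ and then continue to $b_{0,j}$ in block $B_0$ of $\sigma_{\ell+1}$. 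Likewise, $b_{i-1,j}\in B_{i-1}$ precedes $b_{i,j}\in B_i$. In $R_\ell$, the predecessor $f^1_{i,j}$ of $c_{i,0}$ lies in $X_i^j$, which comes before $c_{i,0}$. Also $c_{i,j-1}$ precedes $f^2_{i,j}$, and $f_{i,j}$ (in $X_i^j$) precedes $c_{i,j}$. Finally, $s$ and $t$ bracket $\sigma_\ell$.

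For the width bound, I will fix a cut position $r$ and classify every arc crossing it. The arcs incident to $s$ or $t$ of the current $R_\ell$ number $2|S_\ell|=2k$, so they contribute $O(k)$. Only the current $\ell$ matters, because the $s,t$ arcs of other $R_{\ell'}$ lie entirely on one side. For the remaining arcs, the key claim is that every arc either lies inside a single block $B_i$ (or inside $\sigma_0$'s per-$i$ segment), or goes from $B_{i}$ to $B_{i+1}$, or from the last block of $\sigma_\ell$ into $B_0$ of $\sigma_{\ell+1}$. Consequently, a cut inside $B_i$ is crossed by at most three families:
\begin{itemize}
\item arcs entering $B_i$ from $B_{i-1}$: the $k$ arcs $b_{i-1,j}b_{i,j}$ and the $k$ arcs $a_{i-1,j}a_{i,j}$ of the next gadget;
\item arcs leaving $B_i$ for $B_{i+1}$: the analogous $2k$ arcs;
\item arcs internal to $B_i$.
\end{itemize}
The first two families contribute $O(k)$ by direct counting.

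The main obstacle is the third family: arcs internal to $B_i$ that cross a cut inside $B_i$. Here I would argue locally using the $X/Y$ structure.
\begin{itemize}
\item The arcs $f^1_{i,j}c_{i,0}$ go from $X_i^j$ to $c_{i,0}$, giving at most $k$ arcs.
\item The arcs $f_{i,j}c_{i,j}$ and $f^1_{i,j}f_{i,j}$ go from $X_i^j$ into $Y_i^j$, giving at most $2k$ arcs.
\item The arcs $f_{i,j}f^2_{i,j}$ add at most $k$ more.
\item The arcs $a_{i,0}v_{i,j}$ go from $a_{i,0}$ into the $Y_i^j$, giving at most $k$ arcs.
\item The path arcs $c_{i,j-1}\to f^2_{i,j}$, $c_{i,j-1}\to d_{i,j}\to w_{i,j}\to c_{i,j}$, and $w_{i,j}\to a_{i,j}$ are local to consecutive $Y$-segments, so each cut meets $O(1)$ of them.
\end{itemize}
In $\sigma_0$, the segment for each $i$ has at most $k$ arcs $a_{i,0}v_{i,j}$, at most $k$ arcs $v_{i,j}a_{i,j}$, and the $k$ outgoing $a$-path arcs.

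Summing all contributions gives $cw_r(\sigma)\le ck$ for an absolute constant $c$, roughly $8k+O(1)$. Since $\sigma$ is topological, the directed cutwidth of $G$ is $O(k)$. The lemma relating directed cutwidth to scanwidth then yields scanwidth $O(k)$.
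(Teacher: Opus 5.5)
Your proposal is correct and follows essentially the same route as the paper: both bound every cut by the at most $2k$ arcs at $s,t$, the $O(k)$ arcs internal to a single block $B_i$, the $2k$ arcs between consecutive blocks, and the $k$ arcs $a_{k,j}b_{0,j}$ between consecutive $\sigma_\ell$. Your explicit check that $\sigma$ is topological goes beyond the paper, which only asserts it. The small bookkeeping slips in your enumeration do not affect the $O(k)$ bound: $f^1_{i,j}f_{i,j}$ lies inside $X_i^j$, $b_{i,j}f_{i,j}$ is not listed, and the incoming $a$-path arcs into each segment of $\sigma_0$ are not mentioned.
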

\begin{proof}
  Let us count the number of arcs that can possibly be cut by any cut inside a fixed $\sigma_\ell$.
  
  \begin{itemize}
    \item Arcs incoming to $\sigma_\ell$ can only come from $\sigma_{\ell-1}$ and
  arcs outgoing of $\sigma_\ell$ can only go to $\sigma_{\ell+1}$ and there are exactly~$k$ of each of them,
  namely the arcs $a_{k,j}b_{0,j}$.
    \item There are at most $2k$ arcs incident with $s$ or $t$ since each $S_\ell$ has size at most $k$.
    \item Each $B_i$ contains $O(k)$ vertices of constant degree and one vertex (namely $c_{i,0}$) of in-degree~$k$ (and out-degree one),
      so $O(k)$ arcs run within each block~$B_i$.
    \item Arcs with their tail in some $B_i$ with $1\leq i< k$ whose head is not in $B_i$ have their head in $B_{i+1}$.
      They run from $b_{i,j}$ to $b_{i+1,j}$ in $Q_\ell$ and from $a_{i,j}$ to $a_{i+1,j}$ in $Q_{\ell+1}$ and
      there are exactly~$2k$ of them for each $1\leq i<k$
  \end{itemize}
  In total, all cuts in $\sigma$ cut at most $O(k)$ arcs.
\end{proof}

\begin{theorem}\label{thm:sw disjoint paths}
  Unless ETH fails,
  \textsc{Disjoint Paths} in DAGs cannot be solved in $2^{o(k\log k)}n^{O(1)}$~time,
  where $k$ is the cutwidth of the input DAG.
\end{theorem}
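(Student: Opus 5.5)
The plan is a standard ETH-transfer argument via \cref{cons:kkHS to DP}. Suppose some algorithm solves \textsc{Disjoint Paths} in DAGs in $2^{o(c\log c)}n^{O(1)}$~time, where $c$ is the cutwidth of the input DAG. Given an instance $\mathcal{H}$ of \textsc{$k\times k$ Hitting Set} with $|\mathcal{C}|$ sets, we apply \cref{cons:kkHS to DP} to obtain $(G,D)$ and then run the hypothetical algorithm on it. This yields a $2^{o(k\log k)}\cdot|\mathcal{H}|^{O(1)}$-time algorithm for \textsc{$k\times k$ Hitting Set}, contradicting the result of \citet{LMS18} under ETH. Three ingredients are needed.

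\textbf{Size and time.} Each gadget $Q_\ell$ and $R_\ell$ has $O(k^2)$ vertices and arcs, so $G$ has $O(k^2|\mathcal{C}|)$ vertices and is constructed in polynomial time. Since $k\le|\mathcal{H}|$, the factor $n^{O(1)}$ is polynomial in the input size.

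\textbf{Correctness.} We repeatedly apply \cref{lem:disjoint_paths_contract} to contract each path $(a_{1,j},\dots,a_{k,j},b_{0,j},\dots,b_{k,j})$ in each $Q_\ell$ into a single vertex $b_j$. For this we must check that every intermediate vertex on these paths has in- or out-degree one at the moment of contraction and does not occur in $D$. The demands only use $a_{i,0}$, $c_{i,k}$, $f^1_{i,j}$, $f^2_{i,j}$, $s$ and $t$, none of which lie on these paths. The path vertices $a_{i,j}$ (with $j\ge1$) and $b_{i,j}$ each have in-degree one or out-degree one along the path, possibly after earlier contractions have merged neighbours. Contracting them one at a time, say always onto the parent along the path, preserves the degree condition for the remaining vertices. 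The result is exactly the instance $(G^*,D)$ of \citet{LMS18}, which is equivalent to $\mathcal{H}$ by their correctness proof. Hence $(G,D)$ is a yes-instance if and only if $\mathcal{H}$ is.

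\textbf{Parameter bound.} By \cref{lem:kkHS sw}, the cutwidth of $G$ is $c\in O(k)$. The step needing the most care is that $\sigma$ really is a topological ordering. Otherwise we would only bound the undirected cutwidth, and the later transfer to scanwidth needs the directed cutwidth. I would verify this block by block: within each $B_i$, every arc goes forward (for example $b_{i,j}\to f_{i,j}$, $f^1_{i,j}\to c_{i,0}$, $c_{i,j-1}\to f^2_{i,j}$, $a_{i,0}\to w_{i,j}\to a_{i,j}$). Arcs between blocks go only from $B_i$ to $B_{i+1}$, from $s$ forward, or into $t$.

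With $c\in O(k)$ established, $2^{o(c\log c)}=2^{o(k\log k)}$ because $c\log c$ is monotone in $c$. This completes the contradiction with ETH.
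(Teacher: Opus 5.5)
Your proposal is correct and follows the paper's argument: contract the column paths of each $Q_\ell$ via \cref{lem:disjoint_paths_contract} to recover the \citet{LMS18} instance, and use that $\sigma$ is a topological order of cutwidth $O(k)$. One small slip is that the vertices $a_{i,j}$ with $i\ge 2$ have in-degree two, so these contractions must be justified by the out-degree-one case of the lemma (contracting $a_{i-1,j}$ onto its child, which yields the same graph), not by contracting $a_{i,j}$ onto its parent.
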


\paragraph{From Disjoint Paths to Tree Containment.}
\cref{cons:kkHS to DP} can be modified to show that \textsc{Tree Containment} cannot be solved in $2^{o(k\log k)}n^{O(1)}$~time,
where $k$ is the scanwidth of the input network.
To this end, we turn the input DAG into a (binary) phylogenetic network~$N$ and
we turn the demand-set for disjoint paths into a (binary) phylogenetic tree~$T$.
This reduction borrows ideas of the original NP-hardness reduction for \textsc{Tree Containment} by \citet{KNL+08}.

\begin{figure}[t]
  \centering
  \begin{tikzpicture}
    \node at (-.5,0) {$\sigma:$};
    \foreach \i in {1,...,30} \node[smallvertex] (v\i) at (.4*\i,0) {};

    \node[smallvertex] (r1) at (3,3) {} edge[arc] (v1);
    \nextnode[leaf, label=below:$\ell_1$]{l1}{v1}{-120:.7}{revarc}
    \nextnode[leaf, label=below:$\ell'_1$]{l1}{v1}{-60:.7}{revarc}
    \node[smallvertex, fill=gray] at (v1) {};

    \foreach[count=\i from 2] \tar in {5,9,13,18,22} {
      \pgfmathtruncatemacro{\prev}{\i-1}
      \nextnode{r\i}{r\prev}{-15:1.3}{revarc}
      \draw[arc] (r\i) -- (v\tar);
      \node[smallvertex, fill=gray] at (v\tar) {};
      \nextnode[leaf, label=below:$\ell_{\i}$]{l\i}{v\tar}{-120:.7}{revarc}
      \nextnode[leaf, label=below:$\ell'_{\i}$]{l\i}{v\tar}{-60:.7}{revarc}
    }

    \nextnode[leaf, label=below:$\ell_7$]{l7}{v27}{-120:.7}{revarc}
    \nextnode[leaf, label=below:$\ell'_7$]{l7}{v27}{-60:.7}{revarc}
    \node[smallvertex, fill=gray] at (v27) {};
    \draw[arc] (r6) -- (v27);

  \end{tikzpicture}
  \caption{Illustration of the tree $T$ constructed in \cref{cons:DP to TC}.
  Source-nodes in $G$ are indicated in gray.}
  \label{fig:DP2TC}
\end{figure}

\begin{construction}\label{cons:DP to TC}
  Let $(G,D)$ be an instance of \textsc{Disjoint Paths} created by \cref{cons:kkHS to DP}

  \begin{enumerate}
    \item 
      For each demand~$(x_i,y_i)\in D$ with~$1\leq i\leq |D|$,
      modify~$G$ by adding
      a new leaf labelled~$\ell_i$ to~$x_i$ and 
      a new leaf labelled~$\ell'_i$ to~$y_i$.
      Let~$\sigma'$ result from~$\sigma$ by inserting the new leaves directly behind their respective parent.

    \item
      Add a new node~$\rho$ and add an arc from $\rho$ to every source-node of~$G$.

    \item\label{it:step_contract}
      For every node~$q$ with \emph{out-degree} at least three,
      replace~$q$ with a binary caterpillar whose leaves are the children of~$q$ in order of appearance in~$\sigma'$
      (see \Cref{fig:DP2TC}).
      Likewise, replace any node~$q$ with high \emph{in-degree}~$d$ by a chain of $d-1$~binary reticulations
      whose parents are the parents of~$q$ in order of appearance in~$\sigma$.
      Let~$N$ be the resulting DAG and note that~$N$ is rooted.

    \item
      Construct a tree~$T$ by deleting from~$N$ all nodes that are strict descendants of source-nodes in~$G$ and,
      for every demand~$(x_i,y_i)\in D$, adding leaves labelled~$\ell_i$ and~$\ell'_i$ as children to~$x_i$.
  \end{enumerate}
\end{construction}

\noindent
Note that the cutwidth of $N$ is not much larger than the cutwidth of $G$.
First, observe that adding~$\rho$ and immediately turning it into a caterpillar increases the cutwidth of $\sigma$
by one, since the caterpillar “follows” the linear layout of $\sigma$ (see \Cref{fig:DP2TC}).
Second, adding additional leaves increases the cutwidth by at most one.
Finally, replacing high-degree nodes with caterpillars or reticulation stacks does not increase the cutwidth.

\begin{observation}\label{obs:N_cutwidth}
  Let~$(N,T)$ result from applying \Cref{cons:DP to TC} to $(G,D)$ and
  let~$G$ have cutwidth~$k$.
  Then, $N$ has cutwidth at most~$k+2$.
\end{observation}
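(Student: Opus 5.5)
The plan is to exhibit an explicit topological ordering $\sigma_N$ of $N$ and bound each cut $cw_r(\sigma_N)$ by the corresponding cut of $\sigma$ plus two. I will obtain $\sigma_N$ from $\sigma'$ in three steps. First, put $\rho$ at the front. Second, keep each new leaf directly behind its parent, as in \cref{cons:DP to TC}. Third, replace each high-degree node $q$ by its caterpillar or reticulation chain. A caterpillar replacing an out-degree-$d$ node is laid out so that its $i$-th internal node sits immediately before its $i$-th child in $\sigma'$. This ordering is topological, because the children were taken in order of appearance in $\sigma'$. A reticulation chain replacing an in-degree-$d$ node is laid out so that its $i$-th reticulation sits immediately after the $(i+1)$-st parent in $\sigma$; the last reticulation takes the place of $q$. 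For the caterpillar at $\rho$, its internal nodes are interleaved directly before the sources of $G$.

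Next I will bound the arcs crossing a fixed cut position $r$ of $\sigma_N$ in four classes.

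\textbf{(i) Arcs of $G$.} These still cross at most $cw(\sigma)$ times, at the corresponding position of $\sigma$. The reason is that each replaced node $q$ contributes at most the same number of arcs across any cut as $q$ did. A caterpillar spine carries exactly one arc across any cut inside it. This replaces the arcs from $q$ to children lying right of the cut, and at least one such arc existed whenever the cut lies strictly inside the caterpillar's span. Symmetrically for reticulation chains: one spine arc replaces the arcs from parents left of the cut into $q$.

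\textbf{(ii) The caterpillar at $\rho$.} It contributes one spine arc across each cut, plus possibly the arc to the next source. The latter is absorbed when the internal node is placed immediately before its source, so the extra cost is at most one.

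\textbf{(iii) Leaf arcs $x_i\ell_i$ and $y_i\ell'_i$.} Each leaf is placed immediately after its parent, or after the last caterpillar node of that parent if the parent has become high-degree. So at most one such arc spans any cut, giving an extra cost of at most one.

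\textbf{(iv) Replaced degrees after adding leaves.} Adding a leaf may raise a node's out-degree to three, and such a node is then also replaced by a caterpillar. This is already covered by the argument in (i).

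Summing the classes, every cut crosses at most $k+2$ arcs, and the directed cutwidth of $N$ is at most $k+2$.

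The main obstacle is the careful interleaving: I must verify that spine arcs never coexist with the original arcs they replace across the same cut. For the caterpillar this is immediate, since the spine arc $c_ic_{i+1}$ crosses exactly the cuts between the $i$-th and $(i+1)$-st child. Those cuts were crossed by the arcs from $q$ to all later children, and there is at least one of them. For the reticulation chain the argument is the mirror image. One point needs care: the chain must lie entirely after all parents of $q$, so that the ordering remains topological. I will handle this by placing each chain node right after the parent it is waiting on.
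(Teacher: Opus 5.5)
Your proposal is correct and takes the same route as the paper's sketch: the caterpillar at $\rho$ costs at most one, pendant leaves placed directly behind their parents cost at most one, and caterpillars and reticulation chains laid out along $\sigma'$ cost nothing; your explicit interleaving just fills in the details the paper omits. One slip to fix in (iii): a leaf attached to a source $x_i$ sits directly behind $x_i$ in $\sigma'$, so it is the \emph{first} caterpillar child and hangs off the topmost caterpillar node, and it must be placed right after that node (as your own layout rule in the first paragraph already prescribes), not after the last caterpillar node, since that would stretch its arc across the whole caterpillar and break your one-leaf-arc-per-cut count.
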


The argument that~$N$ displays~$T$ if and only if the demands~$D$ can be satisfied in~$N$ (not in~$G$) follows
by the same arguments as \citet{KNL+08} gave for the original NP-hardness proof.
It remains to show that~$D$ can be satisfied in~$N$ if and only if it can be satisfied in~$G$.

\begin{lemma}\label{lem:D_in_N_and_G}
  Let~$(N,T)$ result from applying \Cref{cons:DP to TC} to $(G,D)$.
  Then,
  $(N,T)$ is a yes-instance of \textsc{Disjoint Paths} if and only if $(G,D)$ is.
\end{lemma}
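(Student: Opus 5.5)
Since $N$ arises from $G$ by a sequence of local modifications, the plan is to go through the steps of \cref{cons:DP to TC} one at a time. For each step we show that satisfiability of the demands is preserved. The demands in $N$ are read with the same endpoints $x_i,y_i$: the leaves $\ell_i,\ell'_i$ and the root $\rho$ add only pendant structure and play no role for disjoint paths between the original terminals. So it suffices to compare $G$ with the DAG $G'$ that results from the third step, in which high-outdegree nodes are replaced by caterpillars and high-indegree nodes by reticulation chains.

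\textbf{Adding leaves and $\rho$.} These steps add only new vertices and arcs. For ``$\Rightarrow$'', a solution in $G$ remains a solution. For ``$\Leftarrow$'', any $x_i$-$y_i$ path in the enlarged graph uses none of the new vertices: a new leaf has out-degree zero and is not an endpoint of the path, and $\rho$ has in-degree zero and is not an origin. Hence every path already lies in $G$.

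\textbf{Caterpillars and reticulation chains.} For an out-degree replacement, a vertex $q$ becomes a path $q=q_1,q_2,\dots$ of new binary vertices, each hanging one former child of $q$. Every $q$-$c$ arc of a path in $G$ maps to the subpath $q_1\cdots q_r c$. In the forward direction we must check arc-disjointness. Two solution paths cannot both pass through $q$, because the demands in \cref{cons:kkHS to DP} use vertex-disjoint paths: each non-terminal has in-degree or out-degree one, and terminals are sources or sinks. Hence each caterpillar is used by at most one path, and no conflict arises. In the backward direction, we contract the caterpillar back to $q$; a path may traverse a caterpillar only by entering at $q_1$, because the internal $q_r$ have in-degree one from $q_{r-1}$. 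We then obtain a walk in $G$, and this walk is a path because $G$ is acyclic. The reticulation chains are handled symmetrically. Alternatively, both directions follow from repeated application of \cref{lem:disjoint_paths_contract}: every new caterpillar vertex has in-degree one and every new chain vertex has out-degree one, and none of them occurs in $D$. Contracting them one by one recovers $G$ with the leaves and $\rho$ attached. I would prefer this route, because it reuses the lemma directly.

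\textbf{Expected obstacle.} The delicate point is to ensure that the caterpillar vertex $q_1$, which keeps the identity of $q$, is never itself an internal contracted vertex that lies in $D$. It is also necessary that source terminals $x_i$ which were split into caterpillars still admit the contraction order of \cref{lem:disjoint_paths_contract}. When $x_i$ is split, the new internal vertices have in-degree one and are not terminals, so the lemma applies. I would verify these degree conditions explicitly for sources, and in particular for the vertices $a_{i,0}$, $f^1_{i,j}$ and $s$, which receive $\rho$ as an extra parent.
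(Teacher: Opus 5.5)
Your proposal is correct and, in its preferred form via \cref{lem:disjoint_paths_contract}, is essentially the paper's proof: the caterpillars and reticulation chains of Step~3 are contracted back to single nodes by repeated use of that lemma, and $\rho$ and the pendant leaves are discarded because no solution can use them. One caution about the direct argument you set aside: the claim that each non-terminal has in-degree or out-degree one is false in $G$ (e.g.\ the identified vertices $w_{i,j}=v_{i,j}$, and the vertices $c_{i,j}$ with $1\leq j<k$, have in- and out-degree two), so vertex-disjointness has to come from the problem definition rather than from degree counting, which is exactly what the route through \cref{lem:disjoint_paths_contract} relies on.
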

\begin{proof}
  By \Cref{lem:disjoint_paths_contract},
  we can contract caterpillars and reticulation-chains created in Step~\ref{it:step_contract} back to single nodes
  without changing the satisfiability of~$D$.
  Furthermore, neither~$\rho$ nor any of the newly added leaves can be used by any solution satisfying~$D$ in $N$,
  so they can be removed without changing the satisfiability of~$D$.
  Since applying these operations to $N$ yields~$G$, the claim follows.
\end{proof}

Finally, the main theorem of this section follows from the previous lemmas.

\begin{theorem}\label{thm:sw TC}
  Unless ETH fails,
  \textsc{Tree Containment} cannot be solved in $2^{o(k\log k)}n^{O(1)}$~time,
  where $k$ is the cutwidth of the input network, even if both the network and the tree are binary.
\end{theorem}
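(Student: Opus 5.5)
The plan is to chain the reductions already established. Start from an instance $\mathcal{H}$ of \textsc{$k\times k$ Hitting Set}, which by \citet{LMS18} admits no $2^{o(k\log k)}n^{O(1)}$-time algorithm unless ETH fails. Apply \cref{cons:kkHS to DP} to obtain $(G,D)$, then \cref{cons:DP to TC} to obtain $(N,T)$. Both constructions run in polynomial time, and the output size is polynomial in $|\mathcal{H}|$.

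\textbf{Parameter.} By \cref{lem:kkHS sw}, $G$ has directed cutwidth $c_G\in O(k)$ via the topological order $\sigma$. By \cref{obs:N_cutwidth}, $N$ has cutwidth at most $c_G+2\in O(k)$. A hypothetical $2^{o(c\log c)}|N|^{O(1)}$ algorithm for \TC therefore runs in $2^{o(k\log k)}|\mathcal{H}|^{O(1)}$ time on these instances. If $c$ happens to be much smaller than $k$ this is only better, since $c\log c$ is monotone and $c\le Ck$ for a constant $C$.

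\textbf{Binarity.} Step~\ref{it:step_contract} replaces every node of out-degree $\ge3$ by a binary caterpillar and every node of in-degree $\ge3$ by a chain of binary reticulations. We still need to check the following.
\begin{itemize}
\item No vertex of $G$ has both in- and out-degree $\ge2$. Otherwise we first subdivide such a vertex into an in-part and an out-part joined by a single arc. This preserves the cutwidth bound up to $+1$, and it preserves the satisfiability of $D$ by \cref{lem:disjoint_paths_contract}.
\item Degree-$2$ pass-through vertices with in- and out-degree one do no harm. If the definition of a phylogenetic network forbids them, they can be suppressed using \cref{lem:disjoint_paths_contract}.
\item $T$ is binary, because each source $x_i$ receives exactly the two leaves $\ell_i,\ell'_i$ and the caterpillar above the sources is binary.
\end{itemize}

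\textbf{Correctness.} This is the main obstacle: we must show that $N$ displays $T$ if and only if $D$ is satisfiable in $N$. By \cref{lem:D_in_N_and_G}, the latter is equivalent to $D$ being satisfiable in $G$, hence to $\mathcal{H}$ being a yes-instance.

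For ``$\Leftarrow$'', take the disjoint $x_i$--$y_i$ paths in $N$ and build a pseudo-embedding in the sense of \cref{prop:TC via pseudo-embeddings}. Map the caterpillar part of $T$ identically onto the copy in $N$ rooted at $\rho$. Map the arc $x_i\ell_i$ to the pendant arc at $x_i$. Map the arc $x_i\ell'_i$ to the demand path followed by the pendant arc $y_i\ell'_i$. Arc-disjointness follows from the paths being disjoint and avoiding the caterpillar and the pendant arcs.

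For ``$\Rightarrow$'', take a pseudo-embedding $\emb$ of $T$. The argument has three steps.
\begin{enumerate}
\item Since leaf $\ell_i$ has a unique parent $x_i$ in $N$, and $x_i$ is reachable from $\rho$ only through the caterpillar (sources have in-degree one in $N$), the caterpillar nodes of $T$ are forced to embed onto those of $N$.
\item Consequently, the image of the parent $x_i$ of $\ell_i$ and $\ell'_i$ in $T$ is exactly $x_i$.
\item The path $\emb(x_i\ell'_i)$ then runs from $x_i$ to $\ell'_i$, so its prefix ending at $y_i$ is an $x_i$--$y_i$ path in $N$. Arc-disjointness of these prefixes comes from condition~\ref{it:disjoint paths}.
\end{enumerate}

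Arc-disjointness is enough, because in the binary network $N$ every internal node has total degree $3$, so arc-disjoint paths through a node cannot share it except at endpoints. By the argument of \citet{KNL+08}, this matches the notion of disjointness required by the demands. If that notion is vertex-disjointness, this degree argument is exactly what I expect to need care, and I would verify it directly against the in- and out-degree-one structure produced in Step~\ref{it:step_contract}.

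Finally, since scanwidth is at most directed cutwidth, the bound also rules out $2^{o(sw\log sw)}n^{O(1)}$ algorithms.
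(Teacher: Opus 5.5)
Your proposal is correct and follows the paper's route: the paper's proof simply chains the $k\times k$ \textsc{Hitting Set} lower bound with \cref{cons:kkHS to DP} and \cref{cons:DP to TC}, the cutwidth bounds of \cref{lem:kkHS sw} and \cref{obs:N_cutwidth}, \cref{lem:D_in_N_and_G}, and the display-versus-disjoint-paths equivalence it defers to the NP-hardness argument of \citet{KNL+08}, which you sketch yourself. Your added step of splitting every vertex with in- and out-degree at least two is genuinely needed, since Step~\ref{it:step_contract} as written leaves, e.g., $f_{i,j}$, $c_{i,j}$ and $d_{i,j}$ with in- and out-degree two; the split is also what makes the arc-disjoint paths of a pseudo-embedding equivalent to the vertex-disjoint paths the reduction relies on, because without it the $s$--$t$ path and a row path could both pass through $d_{i,j}$.
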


\section{Discussion}

In this work, we considered the \textsc{Tree Containment} problem,
a fundamental computational problem arising in the study of evolutionary networks.
While the problem is NP-hard, we presented an algorithm solving it in $O(4^{k + k\log k}n + nm^2)$~time,
where $n$ and $m$ denote the numbers of vertices and arcs in the input network and $k$ is its \emph{scanwidth}.
We further showed that this running time is asymptotically optimal under the Exponential Time Hypothesis.
Our result generalizes known tractability results for bounded-level networks~\cite{vaniersel2010locating}.
%
%
In contrast to a previous treewidth-based algorithm~\cite{vIJW23}, the dynamic programming algorithm presented here exploits the sequential structure induced by scanwidth 
and is comparatively simple and implementation-friendly.

A crucial feature of our algorithm is that it relies on being given a suitable ``tree extension'' of the input network,
which raises several 
algorithmic questions.
While highly optimized practical algorithms and heuristics exist for computing tree decompositions of small width,
the analogous problem for tree extensions has so far only been studied in~\cite{holtgrefe2026exact}, which presented several parameterized exact and heuristic algorithms.
%
%
While it is conjectured that, like treewidth, scanwidth cannot be constant-factor approximated in polynomial time,
it would be interesting to obtain such approximations in single-exponential time, analogous to recent advances for treewidth~\cite{KL23}. In addition, an FPT algorithm 
for finding an optimal tree-extension, parameterized by scanwidth, could be combined with our algorithm to give an FPT algorithm for \textsc{Tree Containment}, parameterized by scanwidth.

Another promising direction is to incorporate notions of invisibility into scanwidth,
following similar results for a “invisible” analog of the level used by \citet{Wel17}.
Such a refinement could potentially capture structural properties of phylogenetic networks
that remain hidden from purely local width measures and might lead to improved algorithms for containment-type problems.

It is natural to investigate whether our techniques extend to \emph{node scanwidth},
the node-based variant of scanwidth in which only the tails of active arcs contribute to the width.
Since node scanwidth can be strictly smaller than scanwidth in networks with high out-degree,
such a generalization could yield improved parameter dependencies for a broader range of networks,
in particular for the version allowing soft polytomies.

Beyond \textsc{Tree Containment}, it would be interesting to determine
which other computational problems on phylogenetic networks admit efficient scanwidth-based algorithms.
Promising candidates include
counting variants of containment problems,
agreement and compatibility problems between networks and trees, and
optimization problems arising in likelihood- or parsimony-based inference.
More generally, the growing evidence that scanwidth captures algorithmically relevant structure that escapes classical undirected width measures
suggests that it may serve as the foundation for a broader theory of directed width measures in phylogenetics.


\printbibliography[notcategory=ignore]

\iffinal\else
\appendix
\section*{Scrapheap}

\todo[inline]{MJ: The following lemma was previously in the preliminaries section (end of sec 2.2), but it seems like it belongs here.\\
MW: erm, not sure if I should have used that somehow... As of now, I don't think it's used, it doesn't even have a label...}
\begin{lemma}
  Let~$\emb$ be a pseudo-embedding of~$T$ into $N$ and
  let~$\Gamma$ be a tree extension of~$N$.
  Let~$\rho_T$ and $\rho_N$ be the respective roots of~$T$ and~$N$, and
  let~$a$ be the unique arc outgoing from~$\rho_N$ in~$N$.
  Assume w.l.o.g.\ that~$\rho_T$ is the origin of~$\emb(a)$.
  For any~$q \in V(N)\setminus\{\rho_N\}$,
  let~$T_{\emb, \Gamma, q}$ denote the subgraph of~$T$ induced by all arcs~$xy$ such that~$\emb(y) \beloeq{\Gamma} q$.
  Then,
  \begin{enumerate}[(a)]
    \item\label{it:downward} $T_{\emb, \Gamma, q}$ is a downward-closed subforest of $T$,
    \item\label{it:Gamma-leaves} $L(T_{\emb, \Gamma, q}) = L(\Gamma_q)$, and
    \item\label{it:top arcs in GW} the top arcs of $T_{\emb, \Gamma, q}$ are exactly the arcs~$xy$ for which $\emb(xy)$ contains an arc in $GW_q(\Gamma)$.
  \end{enumerate}
\end{lemma}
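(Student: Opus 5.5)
The plan is to prove the three items in order, relying throughout on one basic fact. Since $\Gamma$ is a tree extension, every arc $uv$ of $N$ satisfies $v\belo{N}u$ and therefore $v\belo{\Gamma}u$. Consequently, along any directed path of $N$ the vertices form a strictly descending chain in $\Gamma$, and any two of them are comparable in $\Gamma$. I will also use the convention stated in the lemma, namely that the origin of the pseudo-embedding of the root arc of $T$ is $\rho_N$. I read this as saying that $\rho_T$ is identified with $\rho_N$. Since $q\ne\rho_N$, the root $\rho_T$ is therefore never below $q$ in $\Gamma$.

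\textbf{(a)} Let $xy\in A(T_{\emb,\Gamma,q})$ and let $yz\in A(T)$. By \cref{def:pseudo-embedding}\ref{it:end is start}, $\emb(yz)$ is a path from $\emb(y)$ to $\emb(z)$ of length at least one. Hence $\emb(z)\belo{N}\emb(y)$, so $\emb(z)\belo{\Gamma}\emb(y)\beloeq{\Gamma}q$, and thus $yz$ also belongs to the subgraph. Induction then shows that the arc set is closed under taking descendant arcs in $T$. Writing $S$ for this arc set, we get $T_{\emb,\Gamma,q}=T_S$, and \cref{lem:top-arc-subset}(c) gives downward-closedness.

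\textbf{(b)} For ``$\subseteq$'', take a leaf $\ell$ of $T_{\emb,\Gamma,q}$ with in-arc $x\ell$. By \cref{def:pseudo-embedding}\ref{it:leaves}, $\emb(\ell)=\ell$, and therefore $\ell\beloeq{\Gamma}q$. For ``$\supseteq$'', take $\ell\in L(\Gamma_q)$. The leaves of $\Gamma$ are the leaves of $N$, which in turn equal $L(T)$. So $T$ has an arc $x\ell$ with $\emb(\ell)=\ell\beloeq{\Gamma}q$, and this arc lies in $T_{\emb,\Gamma,q}$.

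\textbf{(c)} An arc $xy$ of $T_{\emb,\Gamma,q}$ is a top arc exactly when $x=\rho_T$ or the parent arc of $x$ is not in the subgraph. In both cases this amounts to $\emb(x)\not\beloeq{\Gamma}q$; in the root case this holds by the convention above.

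For the forward direction, suppose $\emb(x)\not\beloeq{\Gamma}q$ and $\emb(y)\beloeq{\Gamma}q$. Walk along $\emb(xy)$ and let $uv$ be the first arc with $v\beloeq{\Gamma}q$. By choice of $uv$ we have $u\not\beloeq{\Gamma}q$. Both $u$ and $q$ are ancestors of $v$ in the tree $\Gamma$, so they are comparable, and hence $u\abov{\Gamma}q$. Therefore $uv\in GW_q(\Gamma)$.

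For the converse, suppose $\emb(xy)$ contains some $uv\in GW_q(\Gamma)$. Then $\emb(y)\beloeq{N}v$, so $\emb(y)\beloeq{\Gamma}v\beloeq{\Gamma}q$, which puts $xy$ in the subgraph. Also $\emb(x)\aboveq{N}u$, so $\emb(x)\aboveq{\Gamma}u\abov{\Gamma}q$, and antisymmetry gives $\emb(x)\not\beloeq{\Gamma}q$. Hence $xy$ is a top arc.

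I expect the main obstacle to be the bookkeeping in (c). It needs the comparability argument to locate the crossing arc, and it needs the root case handled carefully: $\emb(\rho_T)$ is only defined through the stated convention, and it is exactly here that the hypothesis $q\neq\rho_N$ is used.
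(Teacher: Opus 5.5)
Your proposal is correct and follows essentially the same route as the paper's proof. That means closure under child arcs for (a), $\emb(\ell)=\ell$ together with the equality of leaf sets of $\Gamma$ and $N$ for (b), and for (c) the characterization of top arcs via the parent arc of $x$ (or the root convention $\emb(\rho_T)=\rho_N$ with $q\neq\rho_N$) combined with the fact that two $\Gamma$-ancestors of a common vertex are comparable. You are in fact somewhat more explicit than the paper: you locate the crossing arc of $GW_q(\Gamma)$ as the first arc of $\emb(xy)$ whose head is below $q$, and in the converse you check that $xy$ actually lies in $T_{\emb,\Gamma,q}$; the paper leaves both of these steps implicit.
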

\begin{proof}
  \ref{it:downward}:
  Let $(x,y,z)$ be a path in $T$ with $xy$ in $T_{\emb, \Gamma, q}$.
  Since $xy$ is in $T_{\emb,\Gamma,q}$, we have $\emb(y) \beloeq{\Gamma} q$ and,
  by \cref{def:pseudo-embedding}\ref{it:end is start}, we know that $\emb(y)$ is the origin of $\emb(yz)$.
  Since $\emb(yz)$ is a directed path in $N$, we have $\emb(z) \belo{N} \emb(y)$,
  which implies $\emb(z) \belo{\Gamma} \emb(y) \belo{\Gamma} q$ since $\Gamma$ is a tree extension of $N$.
  Thus, $yz$ is an arc of $T_{\emb, \Gamma, q}$.
  This is enough to show that $T_{\emb, \Gamma, q}$ induces a downward-closed  subforest.

  \ref{it:Gamma-leaves}:
  To see that $L(T_{\emb, \Gamma, q}) = L(\Gamma_q)$,
  consider any $\ell \in L(T)$ and observe that $\ell \in L(T_{\emb, \Gamma, q})$ if and only if
  $x\ell$ is an arc in $T_{\emb, \Gamma, q}$, for $x$ the parent of $\ell$ in $T$.
  But  $x\ell$ is an arc in $T_{\emb, \Gamma, q}$ if and only if $\emb(\ell) \beloeq{\Gamma} q$. As $\emb(\ell) = \ell$, this holds
  if and only if $\ell \in L(\Gamma_q)$.

  \ref{it:top arcs in GW}:
  Let $xy$ be an arc in $T$ and
  let $u$ and $v$ be the respective origin and destination of $\emb(xy)$.\todo{MW: replace by $\emb(x)$ and $\emb(y)$?}
  
  (``$\Rightarrow$'') Suppose that~$xy$ is a top arc in $T_{\emb, \Gamma, q}$, 
  that is, $v \beloeq{\Gamma} q$.
  If $x = \rho_T$, then $u = \rho_N$, implying $u \abov{N} q$.
  Otherwise, $T$ contains an arc $px$ and $\emb(px)$ has origin~$u$,
  but $px$ is not in $T_{\emb, \Gamma, q}$ and, thus, $u \not\beloeq{\Gamma} q$, implying $u \abov{\Gamma} q$.
  %
  In both cases, $u \abov{\Gamma} q \aboveq{\Gamma} v$, and
  as $\emb(xy)$ is a directed $u$-$v$-path in $N$, it contains arc in $GW_q(\Gamma)$.

  (``$\Leftarrow$'') Suppose $\emb(xy)$ contains an arc $wz\in GW_q(\Gamma)$, implying $w\abov{\Gamma} q\aboveq{\Gamma} z$.
  If $x = \rho_T$, then $xy$ is necessarily a top arc of $T_{\emb, \Gamma, q}$.
  Otherwise, $N$ contains an arc $px$ and $\emb(px)$ has a origin~$u$.
  Since $\emb(xy)$ is a directed path starting in $u$ and containing $w$, we know that $u \aboveq{\Gamma} w \abov{\Gamma} q$.
  Thus, $u\not\beloeq{\Gamma} q$ and $px$ is not in $T_{\emb, \Gamma, q}$,
  so $xy$ is a top arc of $T_{\emb, \Gamma, q}$.
\end{proof}

First, we show that pseudo-embeddings preserve paths, that is, paths in $T$ map to paths in $N$.

\begin{lemma}\label{obs:emb belo}
  Let $F$ be a downward-closed subforest of $T$,
  let $\emb$ be a pseudo-embedding of $F$ into $N$,
  let $x,y\in V(T)$ such that $y\belo{T} x$ and $\emb(x)$ is defined.
  Then, $\emb(y)\belo{N}\emb(x)$.
\end{lemma}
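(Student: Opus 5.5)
Since $y\belo{T}x$ and $T$ is a tree, there is a unique directed path $x=x_0,x_1,\dots,x_s=y$ in $T$ with $s\geq 1$. The plan is to show that every arc $x_{i-1}x_i$ of this path lies in $F$, and then to chain the embedded paths $\emb(x_{i-1}x_i)$ together.

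First, the arcs lie in $F$. Because $\emb(x)$ is defined, $x$ is the head of some arc $px\in A(F)$. Each arc on the path lies below $px$ with respect to $\beloeq{T}$, extended to arcs as in the preliminaries. Since $F$ is downward-closed, we have $F=T_S$ for some arc-set $S$, so $px\beloeq{T}\alpha$ for some $\alpha\in S$. By transitivity, every arc $x_{i-1}x_i$ also lies below $\alpha$, and hence belongs to $A(F)$.

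Second, the embedded paths compose. By \cref{def:pseudo-embedding}\ref{it:end is start} applied to $px$ and $xx_1$, the path $\emb(xx_1)$ has origin $\emb(x)$. Applying the same condition to consecutive arcs $x_{i-1}x_i$ and $x_ix_{i+1}$, the path $\emb(x_ix_{i+1})$ has origin $\emb(x_i)$. Every $\emb(x_{i-1}x_i)$ is a directed path of length at least $1$. Concatenating them gives a directed walk in $N$ from $\emb(x)$ to $\emb(y)$ of length at least $s\geq1$. Since $N$ is acyclic, this walk is a path, and its endpoints are distinct. Therefore $\emb(y)\belo{N}\emb(x)$; formally this is a short induction on $s$.

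The only delicate step is the first: establishing that the intermediate arcs lie in $F$, so that condition~\ref{it:end is start} may be applied to them. This needs the definition of $T_S$ in terms of the arc order, together with the observation that $\emb(x)$ being defined forces the in-arc of $x$ to be in $F$. The rest is routine.
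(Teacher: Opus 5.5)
Your proof is correct and takes the same approach as the paper. You use downward closure, together with the fact that the in-arc of $x$ lies in $F$, to place the whole $x$--$y$ path of $T$ inside $F$, and then you chain the embedded paths via \cref{def:pseudo-embedding}\ref{it:end is start}. You also spell out the transitivity of the extended arc order and the acyclicity argument for strictness, both of which the paper leaves implicit.
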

\begin{proof}
  As $\emb(x)$ is defined, $x$ has a parent $x_0$ in $T$ with $x_0x\in A(F)$.
  Let $p:=(x = x_1,\dots, x_t = y)$ be the $x$-$y$-path in $T$.
  Since $F$ is downward-closed and $\emb(x)$ is defined, $p$ exists in $F$,
  so $\emb(x_i)$ is defined for each $i\in\{1,\dots, t\}$.
  Further, 
  $\emb(x_i)$ is the origin of $\emb(x_ix_{i+1})$ as well as the destination of $\emb(x_{i-1}x_i)$.
  Thus, 
  $\emb(y)=\emb(x_t) \belo{N} \ldots \belo{N} \emb(x_1)=\emb(x)$.
\end{proof}

\begin{proof}[Old proof of \cref{lem:subforestBound}]
  \todo[inline]{MW: Sebastian remarked here, that the heads of the top-arcs of a downwards-closed forest might actually contain leaves, in which case $S$ would not be disjoint from $L'$, but that's needed for $S$ to be a separator. A fix is to just add a new leaf below each leaf, but the argument changes a little bit. MJ: How about this?}

Let the graph $T'$ be derived from $T$ by subdividing each arc ending a leaf. Observe now that the number of downward-closed subforests of $T$ with at most $k$ top arcs and leaf set $L'$ is equal to the  number of downward-closed subforests of $T'$ with at most $k$ top arcs and leaf set $L'$ and with no top-arc ending in a leaf. We now bound the number of such forests.
In light of~\cref{thm:importantSeparators}, to prove the claim it is sufficient to show that for any downward-closed subforest $F$ of $T'$ with leaf set $L'$ and no top-arc ending in a leaf, the heads of the top arcs of $F$ form an important $\{\rho_{T'}\} -L'$ separator, where $\rho_{T'}$ denotes the root of $T'$.

So let $S$ denote the set of heads of the top arcs of $F$, i.e. the set of all $y \in V(T')$ for which $xy$ is a top arc of $F$. As the set of leaves descended from $S$ is exactly $L'$, it follows that the path from $\rho_{T'}$ to $l$ in $T'$ for any $l \in L'$ must pass through some $y \in S$. Thus, $S$ is a $\{\rho_{T'}\} -  \{L'\}$ separator. 
To see that $S$ is minimal, observe that no vertex in $S$ has a directed path to any other vertex in $S$ (since top arcs of $F$ are maximal with respect to $\belo{T'}$). 
Thus for any $y \in S$, and for any leaf $l \in L(T')$ below $y$, the path from $\rho_{T'}$ to $l$ does not pass through any vertex in $S\setminus \{y\}$. Thus $S \setminus \{y\}$ is not a $\{\rho_{T'}\} - L'$ separator for any $y \in S$.

Finally, suppose that $S'$ is a $\{\rho_t\} - L'$ separator such that $R^+_{T'\setminus S}(\{\rho_{T'}\}) \subset R^+_{T'\setminus S'}(\{\rho_{T'}\})$,
and consider any vertex $y \in S\setminus S'$.
Then as the parent $x$ of $y$ is in $R^+_{T'\setminus S}(\{\rho_{T'}\}) \subset R^+_{T'\setminus S'}(\{\rho_{T'}\})$, $y$ is also in $R^+_{T'\setminus S'}(\{\rho_{T'}\})$.
Recall that all leaf descendants of $y$ are in $L'$ (as $y$ is part of the downward-closed forest $F$), but $y$ itself is not a leaf (since $S$ is disjoint from $L'$).
Thus there are two arc-disjoint paths in $T'$ from $y$ to a leaf in $L'$, each of which must pass through an element of $S'\setminus S$. As this holds for any $y \in S\setminus S'$, and the corresponding paths are all pairwise arc-disjoint, it follows that $|S' \setminus S| \geq 2|S\setminus S'|$ and hence $|S'| \geq |S'\cap S| + 2|S \setminus S'| > |S|$ (using the fact that $S\setminus S' \neq \emptyset$).
Thus there is no $\{\rho_{T'}\} - L'$ separator $S'$ with $R^+_{T'\setminus S}(\{\rho_{T'}\}) \subset R^+_{T'\setminus S'}(\{\rho_{T'}\})$ that also satisfies $|S'| \leq |S|$.
This completes the proof that $S$ is an important  $\{\rho_{T'}\} - L'$ separator.
\end{proof}

\fi

\end{document}